\def\fortech{1}
\newsavebox\mybox
\declaretheorem{theorem}
\newtheorem{example}{Example}[section]
\newtheorem{lemma}{Lemma}
\newtheorem{definition}{Definition}
\DeclareMathOperator{\diag}{diag}
\DeclareMathOperator{\randalg}{\mathcal{A}}
\DeclareMathOperator{\range}{range}
\DeclareMathOperator{\sensitivity}{\mathcal{S}}
\ifundef{\fortech}{
  \setbool{techreportbool}{false}
}{
  \setbool{techreportbool}{true}
}
\newcommand{\conferenceversion}[1]{\notbool{techreportbool}{#1}{}}
\newcommand{\techreport}[1]{\ifbool{techreportbool}{#1}{}}
\newcommand{\ConfOrTech}[2]{\notbool{techreportbool}{#1}{#2}}
\newcommand{\set}[1]{{\ensuremath{\left\{#1\right\}}}}
\newcommand{\eat}[1]{}
\newcommand{\neweat}[1]{}
\newcommand{\varchi}[1]{absolute difference {#1}} 
\newcommand{\astrut}{\rule[-.3\baselineskip]{0pt}{1.1em}}
\newcommand{\probvec}[1]{$\theta_0${#1}}
\newcommand{\thetaj}{\text{\probvec{}}[j]}
\newcommand{\myn}{(n_1+n_2)}
\newcommand{\whichn}[1]{\sqrt{\dfrac{n_{#1}}{\myn}}}
\newcommand{\zdj}{(\widetilde{T}[j]+\widetilde{S}[j])}
\newcommand{\whichzj}[2]{\left(\dfrac{{#2}[j]-n_{#1}\thetaj}{\sqrt{n_{#1}}}\right)}
\title{Revisiting Differentially Private Hypothesis Tests for Categorical Data}
\author[Y.\ Wang]{Yue~Wang}
\author[J.\ Lee]{Jaewoo~Lee}
\author[Y.\ Wang, J.\ Lee and D.\ Kifer]{Daniel~Kifer}
\address{Department of Computer Science and Engineering,
Penn State University,
USA}
\begin{document}

\maketitle
\begin{abstract}
In this paper, we consider methods for performing hypothesis tests on data protected by a statistical disclosure control technology known as differential privacy.
Previous approaches to differentially private hypothesis testing either perturbed the test statistic with random noise having large variance (and resulted in a significant loss of power) or added smaller amounts of noise directly to the data but failed to adjust the test in response to the added noise (resulting in biased, unreliable $p$-values). In this paper, we develop a variety of practical hypothesis tests that address these problems. Using a different asymptotic regime that is more suited to hypothesis testing with privacy, we show a modified equivalence between chi-squared tests and likelihood ratio tests. We then develop differentially private likelihood ratio and chi-squared tests for a variety of applications on tabular data (i.e., independence, sample proportions, and goodness-of-fit tests). \neweat{An open problem is whether new test statistics specialized to differential privacy could lead to further improvements. To aid in this search, we further propose a permutation-based testbed that can allow experimenters to empirically estimate the behavior of new test statistics for private hypothesis testing before fully working out their mathematical details (such as approximate null distributions).} Experimental evaluations on small and large datasets using a wide variety of privacy settings demonstrate the practicality and reliability of our methods. 

\end{abstract}
\keywords{differential privacy; hypothesis testing}

\section{Introduction}\label{sec:intro}
Hypothesis testing is an important aspect of statistical analysis and data mining. 
Because of the possibility that the results of an analysis could leak private information \citep{homer:genome}, various research communities such as statistics, official statistics, and computer science have studied how to incorporate statistical disclosure control to prevent such leakage.

A relatively recent development is a computer science privacy definition known as $\epsilon$-differential privacy \citep{dwork2006calibrating}. Statistical disclosure control (SDC) techniques that satisfy differential privacy possess a variety of appealing mathematical guarantees on the privacy of individuals in the data. 
For instance, the output produced by the SDC techniques is randomized and its probability distribution is barely affected by the inclusion of any individual's record in the data \citep{dwork2006calibrating}. Furthermore, if data records are independent, it limits any Bayesian inference about an individual: the ratio of posterior odds to prior odds is guaranteed to be bounded by $e^\epsilon$  \citep{pufferfish} where $\epsilon$ is a privacy parameter. This guarantee even holds if an attacker has access to all but one of the records in the data.

Recent years have seen a rapid development of differentially private SDC techniques for fitting various models to data. 
However, hypothesis testing is largely unexplored to the extent that existing methods can only be used reliably in rare circumstances. Our paper addresses these limitations, but first let us examine what the difficulties are.

Now let us consider problems raised by earlier applications of differential
privacy to hypothesis testing
\citep{johnson2013privacy,uhler2013privacy,yu2014scalable}. 




Suppose we collected the voter data in Fig.~\ref{tab:election2}. One may be
interested in determining if it provides statistical evidence that voting
behavior and gender are not independent.

In the classical (non-private) setting, this question is typically answered by
performing chi-squared ($\chi^2$) or likelihood ratio
tests~\citep{largesamplebook} of independence. This would involve computing the
$\chi^2$ or likelihood ratio test statistics
and taking advantage of theoretical results stating that for such $2\times 2$
tables, under the null hypothesis of independence, these test statistics are
asymptotically distributed as chi-squared random variables with 1 degree of
freedom ~\citep{largesamplebook}. The \emph{$p$-value} is simply the probability
that such a chi-squared random variable would be larger than the actual test
statistic. For example,  the likelihood ratio statistic for Fig.~\ref{tab:election2}a is $2.918$ with a corresponding $p$-value of $0.0876$, which is generally not considered strong enough to rule out independence. \textcolor{black}{Note that exact tests and permutation tests are other alternatives, but our approach to hypothesis testing with differential privacy extends the asymptotic approaches.}

\begin{example}[Input perturbation]\label{ex:inputpert}
  To achieve $\epsilon$-differential privacy,  \cite{johnson2013privacy} propose to add independent Laplace$(b)$ noise with density $f(x;b)=\frac{1}{2b} e^{-|x|/b}$ and $b=2/\epsilon$ to each cell of the table. As an example, when we added this noise to  Fig.~\ref{tab:election2}a, we obtained Fig.~\ref{tab:election2}b. The next step~\citep{johnson2013privacy} is to simply run the noisy table through off-the-shelf statistical software (which is unaware of this added noise). Intuitively, this seems a bit dangerous as the point of hypothesis testing is to determine how an analysis is affected by noise in the observed data. On the other hand, theoretical arguments \citep{johnson2013privacy} showed that test statistics computed from the noisy tables (in place of the original tables) still asymptotically have a chi-squared distribution with 1 degree of freedom. What happens in practice? The $p$-values produced by this method are extremely biased and will often lead to false conclusions. For example,  the likelihood ratio statistic computed from the noisy table in Fig.~\ref{tab:election2}b is equal to  $6.939$ and off-the-shelf software would return an estimated $p$-value of $0.0084$, which is often considered statistically significant and clearly contradicts the likelihood ratio test on the original data. \textcolor{black}{While this is just one example from one table, our experiments in Section \ref{subsec:exprely} empirically confirm this trend}. This mismatch between theoretical arguments and empirical results also points out the need for a more reliable privacy-preserving statistical theory. 
\end{example}



\begin{figure}
 \centering
 \subfloat[original table]{
  \centering
  \begin{tabular}{|l|cc|}
   \hline
   & vote & not vote\\
   \hline
   male & 238\astrut & 262 \\
   female & 265 & 235 \\
   \hline
  \end{tabular}
  \label{tab:election2a}
 }
 \hfil
 \subfloat[$0.2$-differential privacy]{
  \centering
  \begin{tabular}{|l|cc|}
   \hline
   & vote & not vote\\
   \hline
   male & 227.85\astrut & 279.24 \\
   female & 253.11 & 221.42 \\
   \hline
  \end{tabular}
  \label{tab:election2b}
 }
 \vskip -13pt
 \caption{\label{tab:election2}Tabulated Election Data} 
\end{figure}

\begin{example}[Output perturbation]\label{ex:outputpert}
The unreliability of the algorithm proposed by ~\cite{johnson2013privacy} was noticed by \cite{uhler2013privacy}, they proposed an output perturbation method \citep{uhler2013privacy,yu2014scalable}: compute the chi-squared statistic on the original data (here it is $2.916$), determine a quantity called the \emph{sensitivity} $S$ (the worst-case change in chi-squared values due to the alteration of one individual's data),  add Laplace$(b)$ noise with $b=S/\epsilon$, and then use a different asymptotic distribution for computing the $p$-value. For $2\times 2$ tables with $n=1,000$ (as in our example), the sensitivity $S$ is at least $500$ (achieved by the worst-case tables $\left(\begin{smallmatrix}1 & 0 \\0 & 999\end{smallmatrix}\right)$ and $\left(\begin{smallmatrix}1 & 1 \\0 & 998\end{smallmatrix}\right)$ with $\chi^2$ values $1000$ and $499.5$, respectively). This noise has standard deviation at least $500\sqrt{2}/\epsilon$. When such noise is added to the chi-squared statistic of the original table $(i.e., 2.916)$, it completely overwhelms the original value. As a result,  \cite{uhler2013privacy} and \cite{yu2014scalable} identified special cases where the amount of noise they need to add for privacy can be substantially reduced.
%
\end{example}

In this paper, similar to \cite{johnson2013privacy}, we first add noise to the input data before computing the test statistics. However, to get practical tests that work well on small and large datasets, we need to modify the $p$-value computation. First, we show how to appropriately adjust private statistical theory so that asymptotic results become a good approximation of what happens in practice. Then we derive the asymptotic distributions under this modified methodology for likelihood ratio and chi-squared tests for goodness-of-fit, sample proportions, and independence. We use these asymptotic distributions to produce $p$-values. Although more computationally expensive, we provide an extensive experimental evaluation on real datasets that shows our approach provides much more reliable $p$-values. We note that independent work by \cite{gaboardi2016diffchi} considers chi-squared tests under differential privacy. We elaborate on the differences in Section \ref{sec:relatedwork} (related work).



\neweat{
Our contributions are as follows:
\begin{list}{$\bullet$}{
\setlength{\itemsep}{0pt}
\setlength{\topsep}{3pt}
\setlength{\parsep}{3pt}
\setlength{\partopsep}{0pt}
\setlength{\leftmargin}{1em}}
\item We formalize an asymptotic regime for hypothesis tests in which statistical disclosure control methods perturb the input data. Asymptotic behavior under this regime better matches the finite sample behavior of the test statistics 
  (i.e. it fixes problems observed in Example \ref{ex:inputpert}).
\item Using this asymptotic regime, we introduce likelihood ratio and $\chi^2$ tests for independence, sample
  proportions, and goodness of fit that properly account for the privacy noise
  added to the input table. These tests produce reliable results for much
  smaller data sizes/privacy parameters than prior work and are compatible with
  a variety of added noise distributions. 
\neweat{\item It is likely that new test statistics could lead to even better differentially private hypothesis tests. However, working out the mathematical details, such as approximate null distributions and numerical computation routines is a difficult task. Therefore, we propose a permutation-based testbed that would help researchers focus their efforts and identify promising test statistics for private hypothesis testing. This testbed would allow them to estimate the quality of the test statistic on real data before they work out all of the mathematical details. The testbed itself is differentially private in restricted scenarios, but we explain how results in the testbed would be expected to carry over to actual applications. By exploring this testbed, we find that test statistics that do not perform well in the classical case become much more appealing in the context of privacy.}
\item Extensive empirical evaluation on small and large data sets with a variety of privacy parameter settings validates our proposed approaches to private hypothesis testing.
\end{list}
}

We introduce notations and terminologies in Section \ref{sec:preliminaries},
discuss related work in Section \ref{sec:relatedwork}, and present our various
statistical tests on privacy-enhanced tables in Section \ref{sec:asymptotics}.
Experiments appear in Section \ref{sec:experiments} and conclusions in Section \ref{sec:conc}. For completeness, proofs of our results appear in the online appendices.

\section{Preliminaries and Notations}\label{sec:preliminaries}
In this section, we introduce notations and review the necessary prerequisites.
\begin{list}{$\bullet$}{
\setlength{\itemsep}{0pt}
\setlength{\topsep}{3pt}
\setlength{\parsep}{3pt}
\setlength{\partopsep}{0pt}
\setlength{\leftmargin}{1em}}
\item Notations such as $T[\cdot]$ and $S[\cdot]$ indicate one-dimensional tables of counts. The $i^\text{th}$ entry of $T[\cdot]$ is denoted by $T[i]$.
\item Similarly, $T[\cdot,\cdot]$ is a two-dimensional table where $T[i,j]$ is the $(i,j)^\text{th}$ entry. We will use the standard shorthand $T[\bullet,j]=\sum_i T[i,j]$ and  $T[i,\bullet]=\sum_j T[i,j]$, as well as $T[\bullet,\bullet]=\sum_i\sum_j T[i,j]$.
\end{list}
The \textbf{size} of a table ($T[\cdot]$ or $T[\cdot,\cdot]$) is the sum of its counts.

\subsection{Review of Hypothesis Testing} \label{subsec:hypotest}
We consider the following types of statistical hypothesis tests:
\begin{list}{$\bullet$}{
\setlength{\itemsep}{0pt}
\setlength{\topsep}{3pt}
\setlength{\parsep}{3pt}
\setlength{\partopsep}{0pt}
\setlength{\leftmargin}{1em}}
\item \textbf{Goodness of fit}: given a table $T[\cdot]$ of size $n$ and a probability vector $\theta$, the null hypothesis is that $T[\cdot]$  was sampled from a Multinomial$(n,\theta)$ distribution.
\item \textbf{Sample proportions}: given tables $T[\cdot]$ of size $n_1$ and $S[\cdot]$ of size $n_2$, the null hypothesis is that they are samples from the same distribution. That is, $T[\cdot]\sim$Multinomial$(n_1,\theta)$ and $S[\cdot]\sim$Multinomial$(n_2,\theta)$, for some unknown $\theta$.
\item \textbf{Independence}: given a table $T[\cdot,\cdot]$ of size $n$, the null hypothesis is that the rows and columns are independent.
\end{list}
These hypotheses are commonly tested in the following ways:
\begin{definition}[Goodness of fit test]\label{def:goodness}
Compute either the \emph{likelihood ratio} statistic LR or chi-squared statistic $\chi^2$ as follows:

{\small
  \noindent\begin{minipage}{0.5\linewidth}
    \begin{equation}
      \label{eqn:lrone}
      LR = 2\sum_{i=1}^r T[i]\log\left(\frac{T[i]}{E[i]}\right)
    \end{equation}
  \end{minipage}%
  \begin{minipage}{0.5\linewidth}
    \begin{equation}
      \label{eqn:chione}
      \chi^2 = \sum_{i=1}^r \frac{(T[i]-E[i])^2}{E[i]},
    \end{equation}
  \end{minipage}\par\vspace{\belowdisplayskip}
}
where $E[i]=n\theta[i]$ are estimated expected null hypothesis cell counts \textcolor{black}{and
$r$ is the number of cells in $T$.} Under the null hypothesis, the asymptotic distribution of both $LR$ and $\chi^2$ is a chi-squared random variable with $r-1$ degrees of freedom. Thus the $p$-value can be approximated as the probability that the chi-squared random variable exceeds the chosen test statistic computed from $T$. Alternatively, we can sample many tables from  the Multinomial$(n,\theta)$ distribution and compute the test statistic of each one. The $p$-value could then be approximated as the fraction of sampled tables whose test statistic is greater than or equal to
 the test statistic of the actual table.
\end{definition}

\begin{definition}[Test of Sample Proportions]\label{def:homogeneity}
Given $T[\cdot]$ of size $n_1$ and $S[\cdot]$ of size $n_2$, compute  $LR$ or  $\chi^2$ as follows:
{\small
\begin{align}
LR &= 2\sum_{i=1}^r T[i]\log\left(\frac{T[i]}{E_1[i]}\right) + 2\sum_{i=1}^r S[i]\log\left(\frac{S[i]}{E_2[i]}\right)\label{eqn:lrtwo}\\
\chi^2 &= \sum_{i=1}^r \frac{(T[i]-E_1[i])^2}{E_1[i]} + \sum_{i=1}^r \frac{(S[i]-E_2[i])^2}{E_2[i]},\label{eqn:chitwo}
\end{align}}
where $E_1[i]=n_1(S[i]+T[i])/(n_1+n_2)$ and $E_2[i]=n_2(S[i]+T[i])/(n_1+n_2)$ are estimated expected null hypothesis cell counts, \textcolor{black}{and $r$ is the number of cells in $T$ and $S$}. Under the null hypothesis, the asymptotic distribution of both $LR$ and $\chi^2$ is a chi-squared random variable (r.v.) with $r-1$ degrees of freedom. We approximate the $p$-value as the probability that the chi-squared r.v. exceeds the chosen test statistic computed from $T$ and $S$.
\end{definition}

\begin{definition}[Test of Independence]\label{def:independence}
Given table $T[\cdot,\cdot]$ with $r$ rows and $c$ columns, compute LR or $\chi^2$ as:

{\small
  \noindent\begin{minipage}{0.5\linewidth}
    \begin{equation}
      LR = 2\sum_{i=1}^r\sum_{j=1}^c T[i,j]\log\left(\frac{T[i,j]}{E[i,j]}\right)\label{eqn:lrthree}
    \end{equation}
  \end{minipage}%
  \begin{minipage}{0.5\linewidth}
    \begin{equation}
      \chi^2 = \sum_{i=1}^r\sum_{j=1}^c \frac{(T[i,j]-E[i,j])^2}{E[i,j]},\label{eqn:chithree}
    \end{equation}
  \end{minipage}\par\vspace{\belowdisplayskip}
} 
where $E[i,j]=T[i,\bullet]T[\bullet,j]/T[\bullet,\bullet]$ are estimated expected null hypothesis cell counts. Under the null hypothesis, the asymptotic distribution of both $LR$ and $\chi^2$ is a chi-squared random variable with $(r-1)(c-1)$ degrees of freedom. The $p$-value can be approximated as the probability that the chi-squared random variable exceeds the chosen test statistic computed from $T$.
\end{definition}

A low $p$-value (say, 0.01, depending on the application) indicates strong evidence against the null hypothesis while a larger $p$-value indicates absence of evidence. As can be observed from Definitions \ref{def:goodness}, \ref{def:homogeneity}, \ref{def:independence}, the likelihood ratio and chi-squared tests are asymptotically equivalent \citep{largesamplebook}. However, as we explain in Section \ref{sec:asymptotics}, differences will emerge due to privacy constraints.

\subsection{Review of Differential Privacy}\label{subsec:dp}
Differential privacy \citep{dwork2006calibrating} is a set of restrictions on statistical disclosure control algorithms and guarantees that any data associated with an individual will have little impact on the result of the computation. 

\begin{definition}[Differential Privacy \citep{dwork2006calibrating}] \label{def:dp}
A randomized algorithm $\randalg$ satisfies $\epsilon$-differential privacy if for all contingency tables $T$ and $T^\prime$ that are derived from datasets that differ on the value of one record, and for  all $V \subseteq \range(\randalg)$,
$$P(\randalg(T) \in V) \leq e^\epsilon  P(\randalg(T^\prime) \in V)$$
\end{definition}

This definition means that if we modify any arbitrary record before tabulating our data into a table, the probability of generating any output is changed by a factor of at most $e^\epsilon$. When $\epsilon$ is small (i.e. close to $0$), the probabilities are barely affected. This severely limits the ability of an attacker to make inferences about any record in the data. For additional interpretations of the privacy guarantees of differential privacy and suggestions for setting the privacy parameter $\epsilon$, see \cite{dworkdiffp2006}, \cite{pufferfish} and \cite{designprivacy}.

There are many ways of constructing SDC algorithms that satisfy differential privacy. One of the simplest methods is called the \emph{Laplace Mechanism} \citep{dwork2006calibrating}. It relies on a concept called \emph{sensitivity} and adds noise scaled by the sensitivity value. 

\begin{definition}[Sensitivity \citep{dwork2006calibrating}] \label{def:sensitivity}
 Let $h$ be a function over contingency tables (the output of $h$ can be either a scalar or a vector). The \emph{sensitivity} of  $h$, denoted by $\sensitivity(h)$, is defined as  
 $\sensitivity(h)=\max_{T,T'} \|h(T)-h(T')\|_1$,
 where the maximum is over all pairs of tables that are derived from datasets differing on the value of one record.
\end{definition}

Intuitively, the sensitivity of a function $h$ measures the largest possible change that $h$ can experience as a result of modifying one record in some underlying dataset.

\begin{definition}[Laplace Mechanism \citep{dwork2006calibrating}]
Given a (vector or scalar valued) function $h$, privacy parameter $\epsilon$, contingency table $T$, and upper bound  $\mathcal{S}$ on the sensitivity $\sensitivity(h)$, the \emph{Laplace mechanism} adds independent Laplace($\mathcal{S}/\epsilon$) random variables (with density $f(x)=\frac{\epsilon}{2\mathcal{S}}e^{-\epsilon|x|/\mathcal{S}}$) to each component of $h(T)$. 
\end{definition}

\eat{
\techreport{ 
For example, Table~\ref{tab:unrestrictedneighbor} shows an example of two $n$-neighbors while Table~\ref{tab:neighbor} shows an example of two marginal-neighbors.

\begin{table}
\centering
\parbox{0.45\columnwidth}{
 \caption{Two contingency tables that are marginal-neighbors} 
 \label{tab:neighbor}
 \subfloat{
 \centering
  \begin{tabular}{|c|c|}
  \hline
  a & b \\
  \hline
  c & d \\
  \hline
  \end{tabular}
 }
 \hfil
 \subfloat{
 \centering
  \begin{tabular}{|c|c|}
  \hline
  a-1 & b+1 \\
  \hline
  c+1 & d-1 \\
  \hline
  \end{tabular}
 }
}
\hfil
\parbox{0.45\columnwidth}{
 \caption{Two contingency tables that are $n$-neighbors} 
 \label{tab:unrestrictedneighbor}
 \subfloat{
 \centering
  \begin{tabular}{|c|c|}
  \hline
  a & b \\
  \hline
  c & d \\
  \hline
  e & f \\
  \hline
  \end{tabular}
 }
 \hfil
 \subfloat{
 \centering
  \begin{tabular}{|c|c|}
  \hline
  a & b \\
  \hline
  c & d-1 \\
  \hline
  e+1 & f \\
  \hline
  \end{tabular}
 }
}
\end{table}
}
}

\section{Related Work}\label{sec:relatedwork}
Genome-wide association studies (GWAS) use statistical tests for finding
 associations between diseases and single-nucleotide polymorphisms (SNPs). The
 need for privacy became evident after \cite{homer:genome} raised the
 possibility of identifying individual participants in GWAS based on published
 SNP data. With GWAS in mind, \cite{johnson2013privacy} proposed differentially
 private algorithms for independence testing (e.g., $\chi^2$-tests) using input
 perturbation as discussed in Example \ref{ex:inputpert}. This method often
 produces unreliable conclusions except for extreme data sizes, as noted by
 ~\cite{uhler2013privacy} and our own experiments. Similar types of negative
 results produced by running off-the-shelf statistical analyses after input
 perturbation were reported by \cite{vu2009differential} and
 \cite{fienberg2010differential}. Thus, \cite{uhler2013privacy} instead proposed
 computing the true $\chi^2$ statistic, adding noise to it, and then adjusting
 the asymptotic distribution used to compute $p$-values.
Their method was limited to $3\times 2$ contingency tables where each of the 2
columns added up to $n/2$. \cite{yu2014scalable} later removed these
restrictions but still required the column-sums to be released exactly (i.e., in
a non-private way). In both cases, under the null hypothesis of independence,
collecting more data will not result in convergence to the non-private analysis
over the original data.
%
Independent work by \cite{gaboardi2016diffchi} follow earlier drafts of our work \citep{wang2015differentially}.
They consider chi-squared tests for goodness-of-fit and independence under the differential privacy model and also a weaker model known as approximate differential privacy  \citep{smoothsens,designprivacy}. There are several key distinctions between our work. First, we evaluate our methods on a variety of real datasets (they only consider synthetic data). Second, our formalization of a more accurate asymptotic regime provides asymptotic guarantees on the level of our tests. Without such a formalization, \cite{gaboardi2016diffchi} need synthetic data for empirical validation of Type 1 error (we provide such an empirical evaluation for our tests as well). We additionally consider the test of sample proportions and show a modified equivalence between chi-squared tests and likelihood ratio tests under the new asymptotic regime. Finally, our methods work for any $0$-mean finite variance noise distribution that is added to data (we specialize the discussion to Laplace noise, but any such distribution can be used in its place).

\neweat{
  (arXiv,
11/11/2015), which follow our work \citep{wang2015differentially}, consider only private chi-squared goodness-of-fit and independence
tests. Their 
results focus on the special case of Gaussian noise added to
data because of the special interactions
between this type of noise and the Central Limit Theorem (adding Gaussian noise can only satisfy approximate versions of differential privacy with weaker guarantees \citep{smoothsens,designprivacy}). They provide heuristics for Laplace noise but do not prove that their p-values can achieve a desired level of Type 1 error (even asymptotically). Our modified
asymptotics work for more general noise distributions (we only require $0$ mean and finite
variance) and so can handle differential privacy.
Furthermore, our modified privacy-aware equivalence between likelihood ratio and
chi-squared tests allows all of their results to be translated to likelihood ratio
tests as well. We also propose a permutation-based experimental testbed that
allows us to evaluate various test statistics in the context of privacy before computing their asymptotic
distributions.
}

In a  general setting, \cite{smith2011privacy} studied statistical
estimators that are known to have asymptotically normal distributions and
provided differentially-private versions of those estimators that are also
asymptotically normal. As with the tests proposed by \cite{johnson2013privacy}
(discussed in Example \ref{ex:inputpert}), very large data sizes are needed to
observe approximate normality.
%
Differential privacy has also been applied to other statistical tasks such as
computing commonly used robust statistical estimators
\citep{dwork2009differential} and computing private M-estimators
\citep{lei2011differentially}. \cite{chaudhuri2012convergence} established a
formal connection between differential privacy and robust statistics by deriving
convergence rates in terms of a concept called gross error sensitivity.
\cite{dwork2015private} presented a framework for controlling the false
discovery rate of a large sequence of hypothesis tests. The method relies on
injecting noise directly into $p$-values, which removes their guarantee that
they must be (approximately) uniformly distributed under the null hypothesis.
Specific instantiations of the framework for various statistical tests are not
given and its empirical performance is unknown. \cite{wassermanmini} studied
rates of convergence between true distributions and differentially private
estimates of distributions. They found that the provable convergence rates under
differential privacy were often slower than in the non-private case.

\section{Private Hypothesis Testing}\label{sec:asymptotics}
We consider the following setting: a data owner has a table of counts (e.g.,
$T[\cdot]$ or $T[\cdot,\cdot]$) and obtains noisy tables (e.g.,
$\widetilde{T}[\cdot]$, $\widetilde{T}[\cdot,\cdot]$) by adding independent
$0$-mean noise with finite variance to each table cell. The table size $n_0$, the
density function of the noise, and the noisy tables themselves are publicly
released. If the noise follows a Laplace$(2/\epsilon)$ distribution, then this
output satisfies $\epsilon$-differential privacy. Our goal is to conduct
goodness-of-fit, sample proportions, and independence tests using such noisy
data. We feel this is a natural setting as such releases of noisy counts do not
force the end-users into any particular analysis task.
We first justify our chosen asymptotic regime (Section \ref{subsec:asymptoticregime}), present the modified equivalence between chi-squared and likelihood ratio tests when computed over noisy data (Section \ref{subsec:chilr}) then derive asymptotic distributions for our tests (Section \ref{subsec:deriveasymp}). We use these distributions for $p$-value computation in Section \ref{subsec:pvaluealg} and then evaluate empirical performance in Section \ref{sec:experiments}.

\subsection{The Asymptotic Regime}\label{subsec:asymptoticregime}
\textcolor{black}{When discussing asymptotics, it is important to distinguish between the actual table that was collected, let us call it $T_0$, with sample size $n_0$ and the hypothetical data $T$ with sample size $n$ that goes to infinity.}
The key to asymptotically approximating the null distribution of a test statistic in the classical (non-private) case is the Central Limit Theorem (CLT), which, for example, states that as $n\rightarrow\infty$,  $\frac{T[1]-np_{1}}{\sqrt{n}}\rightarrow N(0,p_{1}(1-p_{1}))$ in distribution (where $np_{1}$ is the expected value of $T[1]$ and $N(0,\sigma^2)$ is the zero-mean Gaussian with variance $\sigma^2$). In practice, this Gaussian approximation works well even if $n$ is not large.
\textcolor{black}{In the private setting, the data collector is planning to release $T_0 + V_\epsilon$ where, for example, $V_\epsilon$ is a table of independent Laplace$(2/\epsilon)$ random variables.
One way to analyze it asymptotically is to replace $T_0$ with $T$ (and then let $n\rightarrow\infty$):}
{\small
\begin{align}
\hspace{-0.2cm}\frac{\widetilde{T}[1]-np_{1}}{\sqrt{n}} &= \frac{T[1]-np_{1}}{\sqrt{n}} + \frac{V_\epsilon[1]}{\sqrt{n}}\label{eqn:cltintuition}
\end{align}
}
The first term on the right hand side is often well-approximated by the  Gaussian distribution. As $n\rightarrow \infty$, the second term converges to $0$ in probability. However, we should not use $0$ as a finite-sample approximation to this term -- it is only accurate when $n$ is very large (in particular, $\sqrt{n}$ must be very large compared to the standard deviation of $V_\epsilon[1]$). 
 As discussed in Section \ref{sec:intro}, for many data sets of interest, this is not the case and so the extra noise due to SDC cannot be ignored.

Our proposed solution is based on the following idea. We view the first term in Equation \ref{eqn:cltintuition} as the signal and the second term as the noise. As $n\rightarrow\infty$, we want to maintain the \textcolor{black}{ratio of variance in the first term vs. variance in the second term} as in the actual data \textcolor{black}{$T_0$}. Thus we tie the standard deviation of the added noise to $\sqrt{n}$ as follows:
{\small
\begin{align}
\widetilde{T}[1]=T[1] + V_\epsilon[1]\kappa\sqrt{n}\label{eqn:newregime}
\end{align}
}
where $\kappa$ is a constant equal to $1/\sqrt{\textcolor{black}{n_0}}$. In this case, $\lim\limits_{n\rightarrow\infty} (\widetilde{T}[1]-np_1)/\sqrt{n} = N(0, p_1(1-p_1)) + V_\epsilon[1]\kappa$ so that asymptotics is only used to smooth out irregularities in the data and not to wipe out the noise terms. 
\textcolor{black}{
We justify this asymptotic regime with the following result, which states that if the data are well-approximated by a Gaussian, then the noisy data are just as well approximated by the limit of our proposed asymptotic regime. Here we measure the quality of the approximation by the largest difference in cumulative distribution functions ( the same measure used to quantify convergence to the Gaussian in the Central Limit Theorem). The proof is in online Appendix \ref{app:asympguarantee}.
\begin{restatable}{theorem}{thmasympguarantee}\label{thm:asympguarantee}
Let $X_{n_0}, X_{n_0+1}, X_{n_0+2}\dots$ be a sequence of (vector-valued) random variables such that $X_n/\sqrt{n}\rightarrow N(0,\sigma^2)$ in distribution. Let $Y$ be an independent random variable and let $Z_{n} = X_{n} + \frac{\sqrt{n}}{\sqrt{n_0}}Y$ for $n=n_0,n_{0}+1,n_0+2,\dots$. Then:
\begin{enumerate}
\item as $n\rightarrow\infty$, $Z_n/\sqrt{n}$ converges in distribution to a random variable, whose cumulative distribution function we will refer to as $G_Z$.
\item Letting $\Phi$ and $\phi$ represent the CDF and density of $N(0,\sigma^2)$ and letting $F_0$, $G_0$ represent the cumulative distribution functions of $X_{n_0}/\sqrt{n_0}$, $Z_{n_0}/\sqrt{n_0}$, respectively, then $\sup_{\vec{x}}|G_0(\vec{x}) - G_Z(\vec{x})| \leq \sup_{\vec{x}}|F_0(\vec{x}) - \Phi(\vec{x})|$.
\end{enumerate}
\end{restatable}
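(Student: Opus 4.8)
The plan is to treat the two assertions separately, using characteristic functions together with L\'evy's continuity theorem for the convergence statement, and a short ``convolution is a contraction for the Kolmogorov distance'' argument for the bound on CDFs. Throughout, ``independent'' is used in the sense that $Y$ is independent of the whole sequence $X_{n_0},X_{n_0+1},\dots$, and we fix a random variable $W\sim N(0,\sigma^2)$ realized independently of $Y$.

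For part 1, write $Z_n/\sqrt{n} = X_n/\sqrt{n} + \frac{1}{\sqrt{n_0}}Y$. Since $Y\perp X_n$, the characteristic function factors as $\varphi_{Z_n/\sqrt{n}}(\vec t) = \varphi_{X_n/\sqrt{n}}(\vec t)\,\varphi_{Y/\sqrt{n_0}}(\vec t)$. The hypothesis $X_n/\sqrt{n}\to N(0,\sigma^2)$ in distribution gives $\varphi_{X_n/\sqrt{n}}(\vec t)\to \varphi_{N(0,\sigma^2)}(\vec t)$ pointwise, while the second factor does not depend on $n$; hence $\varphi_{Z_n/\sqrt{n}}(\vec t)\to \varphi_{N(0,\sigma^2)}(\vec t)\,\varphi_{Y/\sqrt{n_0}}(\vec t)$ for every $\vec t$. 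This limit is the characteristic function of $W + \frac{1}{\sqrt{n_0}}Y$; being a product of characteristic functions it is continuous at $\vec t=\vec 0$, so L\'evy's continuity theorem yields $Z_n/\sqrt{n}\Rightarrow W+\frac{1}{\sqrt{n_0}}Y$. We then \emph{define} $G_Z$ to be the CDF of this limit, which in particular equals the convolution of $\Phi$ with the law of $\frac{1}{\sqrt{n_0}}Y$.

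For part 2, note that at $n=n_0$ one has $\frac{\sqrt{n_0}}{\sqrt{n_0}}=1$, so $Z_{n_0}=X_{n_0}+Y$ and $Z_{n_0}/\sqrt{n_0} = X_{n_0}/\sqrt{n_0} + \frac{1}{\sqrt{n_0}}Y$. Thus $G_0$ and $G_Z$ are obtained by adding the \emph{same} independent quantity $V:=\frac{1}{\sqrt{n_0}}Y$ (with law $\mu$, say) to base variables whose CDFs are $F_0$ and $\Phi$, respectively. By independence and Fubini, for every $\vec x$,
\[
G_0(\vec x) = \int F_0(\vec x-\vec v)\,d\mu(\vec v),\qquad G_Z(\vec x) = \int \Phi(\vec x-\vec v)\,d\mu(\vec v).
\]
Subtracting and bounding inside the integral,
\[
|G_0(\vec x)-G_Z(\vec x)| \;\le\; \int \bigl|F_0(\vec x-\vec v)-\Phi(\vec x-\vec v)\bigr|\,d\mu(\vec v) \;\le\; \sup_{\vec y}\,|F_0(\vec y)-\Phi(\vec y)|,
\]
since $\mu$ is a probability measure; taking the supremum over $\vec x$ gives the claim.

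I do not expect a genuine obstacle here; the only points needing care are (i) using the independence hypothesis so that the Gaussian limit $W$ in part 1 can be taken independent of $Y$, which is what makes $G_Z$ coincide with the convolution of $\Phi$ with $\mu$ that appears in part 2, and (ii) justifying the mixture representations $G_0(\vec x)=\int F_0(\vec x-\vec v)\,d\mu(\vec v)$ in the vector-valued setting, which is just Fubini applied to $P(U\le \vec x-V)$ by conditioning on $V$. Everything else is routine.
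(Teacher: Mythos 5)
Your proposal is correct and follows essentially the same route as the paper: both identify $G_Z$ as the convolution of $\Phi$ with the law of $Y/\sqrt{n_0}$ and then bound the Kolmogorov distance by pushing the supremum inside that convolution integral. The only (cosmetic) differences are that you establish part 1 via characteristic functions and L\'evy's continuity theorem where the paper invokes a Slutsky-type argument, and you express the convolution via conditioning/Fubini rather than via explicit Radon--Nikodym densities; if anything, your part 1 is the more carefully justified of the two.
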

}  


\subsection{Relations between Likelihood Ratio and Chi-Squared Statistics}\label{subsec:chilr}
In classical statistics, likelihood ratio tests and chi-squared tests are asymptotically equivalent \citep{largesamplebook}. In the privacy-preserving case, this equivalence is modified.
\begin{restatable}{theorem}{thmchilr}\label{thm:chilr}
Suppose the probabilities under the true null hypothesis are nonzero.
Consider the noisy table $\widetilde{T}=T + V\kappa\sqrt{n}$ where $V$ is a $0$-mean random variable with fixed variance. Let $\widetilde{\chi^2}$ denote the chi-squared statistic obtained by replacing $T$ with the noisy $\widetilde{T}$ (and the expected counts $E$ computed from $\widetilde{T}$ instead of $T$). Let $\widetilde{LR}$ denote the likelihood ratio statistic with the same substitutions and from each term $i$ subtract $2(\widetilde{T}[i]-E[i])$. Then $\widetilde{\chi^2}$ and $\widetilde{LR}$ have the same asymptotic distribution as $n\rightarrow\infty$.
\end{restatable}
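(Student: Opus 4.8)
The plan is to expand both statistics in a Taylor series around the point where the noisy counts equal their (noisy) expected counts, and show that after the prescribed correction the two expansions agree to the order that matters asymptotically. Write $\widetilde{T}[i] = E[i] + D[i]$ where $D[i] = \widetilde{T}[i] - E[i]$ is the residual; note that $\sum_i D[i] = 0$ because both $\sum_i \widetilde{T}[i]$ and $\sum_i E[i]$ equal the size of the noisy table. For the likelihood ratio term, use $2x\log(x/e) = 2(e+d)\log(1 + d/e)$ with $x = \widetilde{T}[i]$, $e = E[i]$, $d = D[i]$, and expand the logarithm: $2(e+d)\log(1+d/e) = 2d + d^2/e - d^3/(3e^2) + \cdots$. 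The chi-squared term is exactly $d^2/e$. So term by term, $\widetilde{LR}$ minus the correction $2\sum_i D[i]$ differs from $\widetilde{\chi^2}$ by $-\sum_i D[i]^3/(3E[i]^2) + \cdots$, a sum of cubic-and-higher remainder terms. (The linear terms $2\sum_i D[i]$ cancel anyway since they sum to zero, but subtracting them termwise is what the theorem statement does and it makes the matching cleanly termwise.)

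Next I would pin down the asymptotic order of $D[i]$ and $E[i]$ under the regime $\widetilde{T} = T + V\kappa\sqrt{n}$ with $\kappa = 1/\sqrt{n_0}$. Here $T[i]$ is $\Theta(n)$ with fluctuations of order $\sqrt{n}$, and $V\kappa\sqrt{n}$ is also of order $\sqrt{n}$; the estimated expected counts $E[i]$ (computed from the noisy table, e.g. $E[i] = n\widetilde\theta[i]$ with $\widetilde\theta$ the noisy MLE) are $\Theta(n)$ and consistent for the true $n\theta[i] > 0$ by the nonzero-probability assumption. Hence $D[i] = \widetilde{T}[i] - E[i] = O_p(\sqrt{n})$, while $E[i]^2 = \Theta(n^2)$, so each cubic remainder term $D[i]^3/E[i]^2 = O_p(n^{3/2}/n^2) = O_p(n^{-1/2}) \to 0$ in probability, and likewise all higher-order terms vanish. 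Meanwhile both $\widetilde{\chi^2}$ and the corrected $\widetilde{LR}$ have leading part $\sum_i D[i]^2/E[i]$, which is $O_p(1)$ and converges in distribution (this is exactly the kind of quadratic form whose limit one reads off from Theorem \ref{thm:asympguarantee} applied to the vector of rescaled residuals). By Slutsky's theorem, adding an $o_p(1)$ term does not change the limiting distribution, so $\widetilde{\chi^2}$ and $\widetilde{LR}$ share the same asymptotic law.

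A couple of technical points need care. First, the Taylor remainder must be controlled uniformly: since $D[i]/E[i] = O_p(n^{-1/2}) \to 0$, the argument $1 + D[i]/E[i]$ of the logarithm is bounded away from $0$ with probability tending to one, so the series converges and the remainder after the quadratic term is genuinely $O_p(D[i]^3/E[i]^2)$ with no hidden blow-up; one should state this as "on an event of probability $\to 1$" and absorb the complement into the $o_p$. Second, one must confirm that the $E[i]$'s appearing in $\widetilde{LR}$ and in $\widetilde{\chi^2}$ are the \emph{same} quantities (both recomputed from $\widetilde{T}$) so that the termwise cancellation of the linear and quadratic pieces is exact rather than merely asymptotic — this is guaranteed by the way the theorem is phrased. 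The main obstacle is the uniform remainder control together with verifying that the estimated expected counts $E[i]$ stay $\Theta(n)$ and consistent under the noisy regime (so that no $E[i]$ drifts toward $0$); once that is in hand, everything reduces to the Taylor expansion plus Slutsky, and the precise identification of the common limit follows from Theorem \ref{thm:asympguarantee} exactly as in the non-private case.
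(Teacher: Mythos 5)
Your proposal is correct and follows essentially the same route as the paper's proof: a Taylor expansion of the logarithm about $\widetilde{T}[i]/E[i]=1$, combined with the consistency $\widetilde{T}[i]/E[i]\to 1$ (guaranteed by the nonzero null probabilities) and Slutsky's theorem to dispose of the remainder. The only cosmetic difference is that the paper stops the expansion at first order and keeps an exact integral remainder, showing that remainder converges to the quadratic piece $\sum_i(\widetilde{T}[i]-E[i])^2/(2E[i])$, whereas you expand through second order and show the cubic remainder is $o_p(1)$ — which requires the extra (correctly flagged) care about working on an event of probability tending to one.
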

For proof, see \ConfOrTech{the full version of our paper
  \cite{wang2015differentially}}{the online Appendix \ref{app:chilr}}.
This theorem extends in the obvious way (and essentially the same proof) for the test of sample proportions for tables $S[\cdot]$ and $T[\cdot]$. \textcolor{black}{In the classical chi-squared and likelihood ratio tests, the tests can be inaccurate when some counts are very small (e.g., 0). Since we must use noisy counts instead of raw counts, we can extend the classical rules of thumb to the following: the tests can be used if all noisy cell counts are larger than 5 + several standard deviations (of the noise). More refined rules of thumb are an open problem.}
%
 The usefulness of this theorem is that sometimes it is easier to work with the likelihood ratio statistic and sometimes it is easier to work with $\chi^2$ when deriving asymptotic distributions.

\subsection{The Asymptotic Distributions}\label{subsec:deriveasymp}
In this section we derive asymptotic distributions for our various tests so that we can evaluate them in Section \ref{sec:experiments}. We phrase the results in terms of added Laplace noise for differential privacy, but the results hold when the noise $V$ follows any $0$-mean distribution with finite variance. We use these distributions for $p$-value computation in Section \ref{subsec:pvaluealg}. The proof of the following results are in \ConfOrTech{the full version of our paper
  \cite{wang2015differentially}}{the online Appendices \ref{app:indep} and \ref{app:homo}}.
\begin{restatable}{theorem}{thmindep}\textbf{(Independence testing).}
\label{thm:indep}
Let $T[\cdot,\cdot]$ be a contingency table sampled from a
Multinomial$(n_0,\theta_0)$ distribution. Consider the noisy table
$\widetilde{T}=T + V_\epsilon\kappa\sqrt{n_0}$ where  $V_\epsilon$ is a table 
of independent Laplace$(2/\epsilon)$ random variables. If the
rows and columns under $\theta_0$ are independent and if no cells have probability $0$,
then as $n_0\rightarrow\infty$, the
chi-squared statistic and the likelihood ratio statistic (Definition~\ref{def:independence}) computed from
$\widetilde{T}$ (instead of $T$) asymptotically have the distribution of the random variable:
{\small
\begin{equation*}
 \sum\limits_{ij}\frac{\left(A[i,j]+\kappa V^*[i,j]\right)^2}{\theta_0[i,j]}
  -\sum\limits_{i}\frac{\left(A[i,\bullet] + \kappa V^*[i,\bullet]\right)^2}{\theta_0[i,\bullet]}
  -\sum\limits_{j}\frac{\left(A[\bullet,j]+\kappa V^*[\bullet,j]\right)^2}{\theta_0[\bullet,j]} 
  + \frac{\left(A[\bullet,\bullet] + \kappa V^*[\bullet,\bullet]\right)^2}{1}
\end{equation*}
}
where $V^*$ has the same distribution as $V_\epsilon$ and the vectorized version $vec(A) \sim N(\bm{0}, \diag(vec(\theta_0)) - vec(\theta_0)vec(\theta_0)^t)$. It is asymptotically equivalent to the quantity we get
by replacing $\theta_0[i,j]$ with $\frac{\widetilde{T}[i,\bullet]\widetilde{T}[\bullet,j]}{\widetilde{T}[\bullet,\bullet]^2}$.
\end{restatable}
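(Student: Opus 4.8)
The plan is to rewrite $\widetilde{\chi^2}$ as a fixed continuous function (up to an $n_0$-dependent correction that vanishes uniformly on compacts) of a single standardized array that converges to Gaussian-plus-noise, and then to finish with a purely algebraic identity. Write $p_{ij}=\theta_0[i,j]$ and, using the null hypothesis of independence, $p_{ij}=r_i c_j$ with $r_i=\theta_0[i,\bullet]>0$ and $c_j=\theta_0[\bullet,j]>0$. Define the standardized fluctuations
\[ B_{n_0}[i,j] \;=\; \frac{\widetilde{T}[i,j]-n_0 p_{ij}}{\sqrt{n_0}} \;=\; \frac{T[i,j]-n_0 p_{ij}}{\sqrt{n_0}} + \kappa\,V_\epsilon[i,j]. \]
By the multivariate CLT for the multinomial (the same tool invoked in Section~\ref{subsec:asymptoticregime}) together with the independence of $T$ and $V_\epsilon$, the whole array $B_{n_0}$ converges in distribution to $B[i,j]:=A[i,j]+\kappa V^*[i,j]$, where $vec(A)\sim N(\bm 0,\diag(vec(\theta_0))-vec(\theta_0)vec(\theta_0)^t)$ and $V^*\overset{d}{=}V_\epsilon$ is independent of $A$. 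Since $\sum_{ij}p_{ij}=1$, the margins satisfy $\widetilde{T}[i,\bullet]=n_0 r_i+\sqrt{n_0}\,B_{n_0}[i,\bullet]$, $\widetilde{T}[\bullet,j]=n_0 c_j+\sqrt{n_0}\,B_{n_0}[\bullet,j]$, and $\widetilde{T}[\bullet,\bullet]=n_0+\sqrt{n_0}\,B_{n_0}[\bullet,\bullet]$, so the linear functionals $B_{n_0}[i,\bullet]$, $B_{n_0}[\bullet,j]$, $B_{n_0}[\bullet,\bullet]$ converge jointly with $B_{n_0}$ to their analogues $B[i,\bullet]$, $B[\bullet,j]$, $B[\bullet,\bullet]$.

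Next I would expand the estimated expected counts $E[i,j]=\widetilde{T}[i,\bullet]\widetilde{T}[\bullet,j]/\widetilde{T}[\bullet,\bullet]$ to first order:
\[ E[i,j] \;=\; n_0 r_i c_j + \sqrt{n_0}\,\bigl(r_i B_{n_0}[\bullet,j] + c_j B_{n_0}[i,\bullet] - r_i c_j B_{n_0}[\bullet,\bullet]\bigr) + O_P(1), \]
which gives $\widetilde{T}[i,j]-E[i,j]=\sqrt{n_0}\,R_{n_0}[i,j]+O_P(1)$ with the residual $R_{n_0}[i,j]=B_{n_0}[i,j]-c_j B_{n_0}[i,\bullet]-r_i B_{n_0}[\bullet,j]+r_i c_j B_{n_0}[\bullet,\bullet]$, and $E[i,j]=n_0 p_{ij}\bigl(1+o_P(1)\bigr)$. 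Hence each of the finitely many summands $(\widetilde{T}[i,j]-E[i,j])^2/E[i,j]$ converges jointly in distribution to $R[i,j]^2/p_{ij}$, where $R[i,j]=B[i,j]-c_j B[i,\bullet]-r_i B[\bullet,j]+r_i c_j B[\bullet,\bullet]$; by the continuous mapping theorem (the limiting map is continuous because every $p_{ij}>0$), $\widetilde{\chi^2}\rightarrow\sum_{ij}R[i,j]^2/p_{ij}$ in distribution. It then remains to check the algebraic identity
\[ \sum_{ij}\frac{R[i,j]^2}{r_i c_j} \;=\; \sum_{ij}\frac{B[i,j]^2}{r_i c_j} - \sum_i\frac{B[i,\bullet]^2}{r_i} - \sum_j\frac{B[\bullet,j]^2}{c_j} + B[\bullet,\bullet]^2, \]
obtained by expanding the square and repeatedly using $\sum_i r_i=\sum_j c_j=1$ and $\sum_j B[i,j]=B[i,\bullet]$, etc.; substituting $B[i,j]=A[i,j]+\kappa V^*[i,j]$ and $r_i c_j=\theta_0[i,j]$ turns the right-hand side into exactly the displayed random variable. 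For the likelihood-ratio statistic I would appeal to Theorem~\ref{thm:chilr}: the per-term correction $2(\widetilde{T}[i,j]-E[i,j])$ sums over all cells to $2\bigl(\widetilde{T}[\bullet,\bullet]-\sum_{ij}E[i,j]\bigr)=0$ for the independence test, since $\sum_{ij}E[i,j]=\widetilde{T}[\bullet,\bullet]$, so the corrected and uncorrected $\widetilde{LR}$ coincide and inherit the $\widetilde{\chi^2}$ limit. Finally the ``asymptotically equivalent'' clause is Slutsky's theorem: $\widetilde{T}[i,\bullet]/\widetilde{T}[\bullet,\bullet]\to r_i$ and $\widetilde{T}[\bullet,j]/\widetilde{T}[\bullet,\bullet]\to c_j$ in probability, so replacing $\theta_0[i,j]$ by $\widetilde{T}[i,\bullet]\widetilde{T}[\bullet,j]/\widetilde{T}[\bullet,\bullet]^2$ multiplies each denominator by a factor tending to $1$ and leaves the limit unchanged.

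The only technically delicate point is passing from the per-cell convergences to convergence of the whole statistic while the denominators $E[i,j]$ are themselves random and $n_0$-dependent; I would handle this by first proving joint vector convergence of $B_{n_0}$ and then invoking an extended continuous mapping argument for the explicit map sending the array of standardized fluctuations to $\widetilde{\chi^2}$, whose $n_0$-indexed form converges uniformly on compact sets to the stated limit map. Everything else — the first-order expansion of $E[i,j]$ and the verification that the $O_P(1)$ remainders are negligible after division by $E[i,j]$, which grows like $n_0$ — is routine bookkeeping, and I expect the main effort to be purely expository: carrying the $r\times c$ indices through the four nested sums when checking the algebraic identity.
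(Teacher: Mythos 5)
Your proposal is correct, but it reaches the limit by a genuinely different route than the paper. The paper works with the likelihood ratio statistic first (judging it ``easier to work with''), decomposes $\widetilde{LR}$ into four log-sums indexed by cells, row margins, column margins, and the grand total, Taylor-expands each around the deterministic centers $n\theta_0[i,j]$, $n\theta_0[i,\bullet]$, etc., and obtains the four-term quadratic form directly from that decomposition; it then transfers to $\widetilde{\chi^2}$ via Theorem~\ref{thm:chilr}. You instead attack $\widetilde{\chi^2}$ head-on in the classical Pearson style: linearize the random ratio $E[i,j]=\widetilde{T}[i,\bullet]\widetilde{T}[\bullet,j]/\widetilde{T}[\bullet,\bullet]$ around $n_0 r_i c_j$, identify the numerator $\widetilde{T}[i,j]-E[i,j]$ as $\sqrt{n_0}$ times a residual $R_{n_0}[i,j]$ obtained by projecting the standardized fluctuations off the marginal directions, and then convert $\sum_{ij}R[i,j]^2/(r_ic_j)$ into the stated four-term form by an explicit Pythagorean identity (which I checked: in the inner product weighted by $1/(r_ic_j)$ the three marginal maps are self-adjoint idempotents with the right cross terms, so the identity holds). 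Your transfer to $\widetilde{LR}$ mirrors the paper's in reverse, using the same observation that $\sum_{ij}E[i,j]=\widetilde{T}[\bullet,\bullet]$ so the correction in Theorem~\ref{thm:chilr} vanishes. What the paper's route buys is that it never has to expand a ratio of random quantities --- each log-sum is centered at a constant --- at the cost of a longer Taylor bookkeeping with integral remainders; what your route buys is a shorter analytic argument and an explicit view of the limiting quadratic form as a squared projection residual, at the cost of the first-order expansion of $E[i,j]$ and the algebraic identity. Both are sound; the only point you flag as delicate (joint convergence of the array plus an extended continuous-mapping step) is handled in the paper by the same combination of Lemma~\ref{lem:clt} and Slutsky's theorem that you invoke.
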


\begin{restatable}{theorem}{thmhomo}
\label{thm:homo}\textbf{(Test of Sample Proportions).}
Let $T[\cdot]$ and $S[\cdot]$ be samples from Multinomial$(n_1,\theta_0)$
and Multinomial$(n_2,\theta_0)$
distributions, respectively. Consider the noisy versions $\widetilde{T}=T+V^1_\epsilon\kappa_1\sqrt{n_1}$ and
$\widetilde{S}=S+V^2_\epsilon\kappa_2\sqrt{n_2}$ where $V^1_\epsilon$, $V^2_\epsilon$ are vectors of independent Laplace$(2/\epsilon)$ random variables. If no cells have
probability $0$, then as $n_1,n_2\rightarrow\infty$, the chi-squared and
likelihood ratio statistics (Definition~\ref{def:homogeneity}) computed from
$\widetilde{T}$ and $\widetilde{S}$ (instead of $T$ and $S$) asymptotically have the distribution of the random variable:
{\small
\begin{equation*}
  \sum\limits_j\left[ \sqrt{\frac{n_2}{n_1+n_2}}\left( A_1[j]+\kappa_1 V^*_1[j] \right) - \sqrt{\frac{n_1}{n_1+n_2}}\left( A_2[j]+\kappa_2 V^*_2[j] \right)\right]^2 \bigg/ \theta_0[j]
\end{equation*}
}
where $V^*_1,V^*_2$ are independent with the same distribution as $V^1_\epsilon,V^2_\epsilon$ and $A_1,A_2\sim N(\bm{0}, \diag(\theta_0) - \theta_0\theta_0^t)$, and $A_1$,
$A_2$ are independent. It is asymptotically equivalent to the quantity we get by
replacing $\theta_0[j]$ with
$(\widetilde{T}[j]+\widetilde{S}[j])/(n_1+n_2)$.
\end{restatable}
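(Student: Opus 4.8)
The plan is to write the statistic computed from the perturbed tables as an explicit (continuous) function of two independent, asymptotically Gaussian vectors --- one per sample --- together with the injected noise, and then pass to the limit via the continuous mapping theorem and Slutsky's theorem. By the two-sample version of Theorem~\ref{thm:chilr} it suffices to analyze the chi-squared statistic $\widetilde{\chi^2}$ formed from $\widetilde{T}$ and $\widetilde{S}$: the corrections that Theorem~\ref{thm:chilr} subtracts are $\sum_j 2(\widetilde T[j]-\widetilde E_1[j])$ from the $T$-part and $\sum_j 2(\widetilde S[j]-\widetilde E_2[j])$ from the $S$-part, and since the pooled expected counts satisfy $\widetilde E_1[j]+\widetilde E_2[j]=\widetilde T[j]+\widetilde S[j]$ for every $j$, the two corrections add to zero. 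Hence the likelihood-ratio statistic of Definition~\ref{def:homogeneity} already coincides with the corrected statistic $\widetilde{LR}$ and inherits the asymptotic distribution of $\widetilde{\chi^2}$.

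First I would normalize each sample by setting $A^{(n_1)}_1[j]=(T[j]-n_1\theta_0[j])/\sqrt{n_1}$ and $A^{(n_2)}_2[j]=(S[j]-n_2\theta_0[j])/\sqrt{n_2}$, so that the multivariate CLT for the multinomial gives $A^{(n_1)}_1\Rightarrow A_1$ and $A^{(n_2)}_2\Rightarrow A_2$ with $A_i\sim N(\bm{0},\diag(\theta_0)-\theta_0\theta_0^t)$, independent of each other because $T$ and $S$ are independent. Writing $B^{(n_i)}_i[j]=A^{(n_i)}_i[j]+\kappa_i V^i_\epsilon[j]$, Slutsky's theorem yields the joint weak limit $(B^{(n_1)}_1,B^{(n_2)}_2)\Rightarrow(A_1+\kappa_1 V^*_1,\ A_2+\kappa_2 V^*_2)$ with independent components. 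The algebraic core is to substitute $\widetilde T[j]=n_1\theta_0[j]+\sqrt{n_1}\,B^{(n_1)}_1[j]$ and $\widetilde S[j]=n_2\theta_0[j]+\sqrt{n_2}\,B^{(n_2)}_2[j]$ into $\widetilde T[j]-\widetilde E_1[j]=(n_2\widetilde T[j]-n_1\widetilde S[j])/(n_1+n_2)$: the common null parameter $\theta_0$ cancels and one obtains $\widetilde T[j]-\widetilde E_1[j]=\sqrt{n_1 n_2}\bigl(\sqrt{n_2}\,B^{(n_1)}_1[j]-\sqrt{n_1}\,B^{(n_2)}_2[j]\bigr)/(n_1+n_2)$, while $\widetilde S[j]-\widetilde E_2[j]=-(\widetilde T[j]-\widetilde E_1[j])$. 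Since $1/\widetilde E_1[j]+1/\widetilde E_2[j]=(n_1+n_2)^2/\bigl(n_1 n_2(\widetilde T[j]+\widetilde S[j])\bigr)$, the two sums defining $\widetilde{\chi^2}$ collapse to the exact identity
\begin{equation*}
\widetilde{\chi^2}=\sum_j\frac{\Bigl(\sqrt{\tfrac{n_2}{n_1+n_2}}\,B^{(n_1)}_1[j]-\sqrt{\tfrac{n_1}{n_1+n_2}}\,B^{(n_2)}_2[j]\Bigr)^2}{(\widetilde T[j]+\widetilde S[j])/(n_1+n_2)}.
\end{equation*}

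To finish, observe that $(\widetilde T[j]+\widetilde S[j])/(n_1+n_2)=\theta_0[j]+\bigl(\sqrt{n_1}\,B^{(n_1)}_1[j]+\sqrt{n_2}\,B^{(n_2)}_2[j]\bigr)/(n_1+n_2)\to\theta_0[j]$ in probability, that $\theta_0[j]>0$ by hypothesis, and that the coefficients $\sqrt{n_i/(n_1+n_2)}$ lie in $[0,1]$ and so converge along any subsequence; combining the joint weak convergence of $(B^{(n_1)}_1,B^{(n_2)}_2)$ with the continuous mapping theorem (applied to the quadratic form) and Slutsky (for the random denominator) then gives the claimed limiting random variable, with $\sqrt{n_i/(n_1+n_2)}$ interpreted as the limiting ratios. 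The same identity shows directly that the empirical denominator $(\widetilde T[j]+\widetilde S[j])/(n_1+n_2)$ may replace $\theta_0[j]$ without changing the limit --- this is the ``asymptotically equivalent'' clause, and it is what makes the test implementable since $\theta_0$ is unknown. The only points needing care are routine bookkeeping: verifying that $\widetilde E_1[j],\widetilde E_2[j]$ --- which are random, being recomputed from the perturbed tables --- are positive with probability tending to one so the divisions are legitimate, and that the normalized numerators are $O_p(1)$ so Slutsky applies; everything else is standard CLT-plus-continuous-mapping, and (as in the first paragraph) the likelihood-ratio half follows immediately.
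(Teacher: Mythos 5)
Your proposal is correct and follows essentially the same route as the paper's proof: reduce the likelihood-ratio case to the chi-squared case via Theorem~\ref{thm:chilr} (noting the corrections cancel because $E_1+E_2=\widetilde T+\widetilde S$), perform the identical algebraic collapse of the two sums into a single quadratic form over $(n_2\widetilde T[j]-n_1\widetilde S[j])$, and conclude by the multinomial CLT, independence of the two samples, and Slutsky's theorem, with the empirical pooled proportion replacing $\theta_0[j]$ by convergence in probability. Your extra remarks on the ratios $\sqrt{n_i/(n_1+n_2)}$ and the positivity of the random denominators are minor refinements the paper leaves implicit.
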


\subsection{p-Value Algorithms}\label{subsec:pvaluealg}
To apply Theorems \ref{thm:indep}, \ref{thm:homo}, \ref{thm:good} \textcolor{black}{(\ref{thm:good} comes later in this Section)}, we set
$\kappa=1/\sqrt{n_0}$, where $n_0$ is the actual table size (in the case of the test
of sample proportions, we set $\kappa_1=1/\sqrt{n_1}$ and $\kappa_2=1/\sqrt{n_2}$).
%
The recipe for computing $p$-values is relatively simple:
\begin{list}{$\bullet$}{
\setlength{\itemsep}{0pt}
\setlength{\topsep}{2pt}
\setlength{\parsep}{2pt}
\setlength{\partopsep}{0pt}
\setlength{\leftmargin}{1em}}
\item[1.] Compute the appropriate test statistic from the noisy tables (e.g., $\widetilde{T}[\cdot,\cdot]$, $\widetilde{T}[\cdot]$ and/or $\widetilde{S}[\cdot]$). Let $t^*$ be its value.
 \begin{list}{$\bullet$}{
\setlength{\itemsep}{0pt}
\setlength{\topsep}{1pt}
\setlength{\parsep}{1pt}
\setlength{\partopsep}{0pt}
\setlength{\leftmargin}{1em}}
\item For goodness of fit, the test statistics are given by Equations \ref{eqn:noisygoodchi} or \ref{eqn:noisygoodlr}.
\item For sample proportions, they are obtained from Definition \ref{def:homogeneity} by replacing the true tables $T[\cdot]$ and $S[\cdot]$ with their noisy versions $\widetilde{T}[\cdot]$ and $\widetilde{S}[\cdot]$. The $E_1$ and $E_2$ values from the definition must also be computed from the noisy tables.
\item For independence, they are obtained from Definition \ref{def:independence} by replacing the true table $T[\cdot,\cdot]$ with the noisy version $\widetilde{T}[\cdot,\cdot]$ and the $E[i,j]$ values from the definition are computed as $\widetilde{T}[i,\bullet]\widetilde{T}[\bullet,j]/\widetilde{T}[\bullet,\bullet]$.
\end{list}
\item[2.] Sample $m$ reference points $t_1,\dots, t_m$ (discussed next).
\item[3.] Set the $p$-value to be $\left|\set{t_i~:~ t_i\geq t^*}\right|/m$.
\end{list}

\setlength{\algoheightrule}{0pt}
\setlength{\algotitleheightrule}{0pt}
\begin{algorithm}[!t]
 \LinesNumbered
 \DontPrintSemicolon
 \SetKwInOut{Input}{input}
 \SetKwInOut{Output}{output}
 \begin{lrbox}{\mybox}
 \begin{minipage}{\hsize}
 \caption{Sampling for Independence Test} \label{alg:sampleindep}

 \Input{Noisy table $\widetilde{T}[\cdot,\cdot]$, $\epsilon>0$.}
  
 \For{\textbf{each} $i,j$}{
         $\theta_0[i,j] \leftarrow \frac{\widetilde{T}[i,\cdot]\widetilde{T}[\cdot,j]}{\widetilde{T}[\cdot,\cdot]^2}$\;
 }
 \For{$\ell=1,\dots, m$}{
 $vec(A) \sim N(\bm{0}, \diag(vec(\theta_0))-vec(\theta_0)vec(\theta_0)^{t})$\;
 Reshape $A$ to dimensions of $\widetilde{T}$\;
 $V^*[\cdot,\cdot] \sim Laplace(2/\epsilon)~~~~//$ \textit{fresh noise}\;\label{line:sens}

 $X \leftarrow A + V^*/ \sqrt{n_0}$\;\label{line:n}
 
 generate $t_\ell=\sum\limits_{ij}\frac{X[i,j]^2}{\theta_0[i,j]} - 
       \sum\limits_{i}\frac{X[i,\cdot]^2}{\theta_0[i,\cdot]} -
       \sum\limits_{j}\frac{X[\cdot,j]^2}{\theta_0[\cdot,j]} + 
       \frac{X[\cdot,\cdot]^2}{\theta_0[\cdot,\cdot]}$\;\label{line:sample}
 }
 \end{minipage}%
 \end{lrbox}
 \hspace*{-10pt}\framebox[\columnwidth]{\hspace*{15pt}\usebox\mybox\par}
\end{algorithm}
\begin{algorithm}[!t]
 \LinesNumbered
 \DontPrintSemicolon
 \SetKwInOut{Input}{input}
 \SetKwInOut{Output}{output}
 \begin{lrbox}{\mybox}
 \begin{minipage}{\hsize}
 \caption{Sampling for Test of Sample Proportions} \label{alg:samplehomo}

 \Input{Noisy tables $\widetilde{T}[\cdot]$ and $\widetilde{S}[\cdot]$, $\epsilon>0$.}
  
 \For{\textbf{each} $i$}{
         $\theta_0[i] \leftarrow (\widetilde{T}[i] + \widetilde{S}[i])/(n_1+n_2)$\;
 }
 \For{$\ell=1,\dots, m$}{
 $A_1, A_2 \sim N(\bm{0}, \diag(\theta_0)-\theta_0\theta_0^{t})$\;
 $V_1^*[\cdot], V_2^*[\cdot] \sim Laplace(2/\epsilon)~~~~//$ \textit{fresh noise}\;\label{line:sens2}

 $X_1 \leftarrow A_1 + V_1^*/ \sqrt{n_1}; X_2\leftarrow A_2 + V^*_2/\sqrt{n_2}$\;\label{line:n2}
 
 generate $t_\ell=\sum_j \left(\sqrt{\frac{n_2}{n_1+n_2}}X_1[j] - \sqrt{\frac{n_1}{n_1+n_2}}X_2[j]\right)^2 \bigg/\theta_0[j]$\;\label{line:sample2}
 }
 \end{minipage}%
 \end{lrbox}
 \hspace*{-10pt}\framebox[\columnwidth]{\hspace*{15pt}\usebox\mybox\par}
\end{algorithm}

The $m$ reference points $t_1,\dots,t_m$ are obtained from Algorithm
\ref{alg:sampleindep} for independence testing and Algorithm
\ref{alg:samplehomo} for test of sample proportions. Goodness-of-fit testing is
actually a standard simulation
trick 
(also noted by \cite{gaboardi2016diffchi}):
since goodness-of-fit tests whether the table came from a Multinomial($n_0,
\theta_0$) distribution, for a prespecified $\theta_0$, one simply generates $m$
tables $Q_1,\dots, Q_m$ from this distribution, adds fresh independent
Laplace$(2/\epsilon)$ noise to each table to get $\widetilde{Q}_1,\dots,
\widetilde{Q}_m$, sets $t_i$ to be the value of the test statistic computed from
$\widetilde{Q}_i$, and compares all the $t_i$ to the test statistic $t^*$ on
$\widetilde{T}$.

\begin{restatable}{theorem}{thmgood}
\label{thm:good}\textbf{(Goodness-of-fit).}
Let $T[\cdot]$ be a sample from a Multinomial$(n_0,\theta_0)$
distribution. Let $\widetilde{T}=T+V_\epsilon\kappa \sqrt{n_0}$ where $V_\epsilon$ is
a vector of independent Laplace$(2/\epsilon)$ random variables. If no cells have probability $0$,
then as $n_0\rightarrow\infty$, then
the statistics:
{\small
\begin{align}
\widetilde{\chi^2} &= \sum_j (\widetilde{T}[j]-n_0\theta_0[j])^2/(n_0\theta_0[j])\label{eqn:noisygoodchi}\\
\widetilde{LR} &= 2\sum_j \left[\widetilde{T}[j]\log[\widetilde{T}[j]/(n_0\theta_0[j])] - \widetilde{T}[j]+n_0\theta_0[j]\right]\label{eqn:noisygoodlr} 
\end{align}
}
asymptotically have the same distribution as:
$\sum\limits_j\left( A[j]+\kappa V^*[j] \right)^2/\theta_0[j]$
where $V^*$ has the same distribution as $V_\epsilon$ and $A \sim N(\bm{0}, \diag(\theta_0) - \theta_0\theta_0^t)$.
\end{restatable}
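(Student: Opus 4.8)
The plan is to route both $\widetilde{\chi^2}$ and $\widetilde{LR}$ through the single rescaled residual vector $W_{n_0} := (\widetilde T[\cdot]-n_0\theta_0)/\sqrt{n_0}$ and then apply a central limit theorem together with the continuous mapping theorem. First I would write, componentwise,
\[
\frac{\widetilde T[j]-n_0\theta_0[j]}{\sqrt{n_0}} \;=\; \frac{T[j]-n_0\theta_0[j]}{\sqrt{n_0}} \;+\; \kappa\,V_\epsilon[j].
\]
Since $T[\cdot]$ is a sum of $n_0$ i.i.d.\ categorical indicator vectors with mean $\theta_0$ and covariance $\diag(\theta_0)-\theta_0\theta_0^t$, the multivariate CLT gives that $(T[\cdot]-n_0\theta_0)/\sqrt{n_0}$ converges in distribution to $A\sim N(\bm 0,\diag(\theta_0)-\theta_0\theta_0^t)$. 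Because $V_\epsilon$ is built independently of $T$ and has a fixed (Laplace) distribution, the pair converges jointly, so $W_{n_0}$ converges in distribution to $A+\kappa V^*$, with $A$ and $V^*$ independent and $V^*$ having the same distribution as $V_\epsilon$.

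For the chi-squared statistic this essentially finishes the argument: $\widetilde{\chi^2}=\sum_j \theta_0[j]^{-1}\,W_{n_0}[j]^2$ is a fixed continuous function of $W_{n_0}$ (using $\theta_0[j]>0$), so the continuous mapping theorem gives $\widetilde{\chi^2}\Rightarrow\sum_j(A[j]+\kappa V^*[j])^2/\theta_0[j]$, which is the claimed limiting law.

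For the likelihood-ratio statistic I would show $\widetilde{LR}-\widetilde{\chi^2}\to 0$ in probability and then apply Slutsky's theorem. Writing $e_j=n_0\theta_0[j]$ and $\delta_j=\widetilde T[j]-e_j=O_P(\sqrt{n_0})$, so that $\widetilde T[j]/e_j\to 1$ in probability, the $j$-th summand of $\widetilde{LR}$ equals $g(\delta_j)$ with $g(u)=2\big[(e_j+u)\log((e_j+u)/e_j)-u\big]$; here $g(0)=g'(0)=0$, $g''(0)=2/e_j$, and $g'''(u)=-2/(e_j+u)^2$. Hence on the event $\mathcal E_{n_0}=\{\,|\delta_j|\le e_j/2\ \text{for all }j\,\}$ (whose probability tends to $1$ since $\delta_j=o_P(e_j)$) every $\widetilde T[j]>0$, and Taylor's theorem with Lagrange remainder yields
\[
2\big[\widetilde T[j]\log(\widetilde T[j]/e_j)-\widetilde T[j]+e_j\big]=\frac{\delta_j^2}{e_j}+R_j,\qquad |R_j|\le C\,\frac{|\delta_j|^3}{e_j^2},
\]
for a universal constant $C$. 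Summing over the $r$ cells, $\widetilde{LR}=\widetilde{\chi^2}+\sum_j R_j$ on $\mathcal E_{n_0}$, and $\sum_j|R_j|=O_P(n_0^{3/2}/n_0^2)=O_P(n_0^{-1/2})\to 0$ because each $\theta_0[j]$ is a fixed positive constant. An event of vanishing probability cannot affect convergence in distribution, so $\widetilde{LR}$ has the same limit as $\widetilde{\chi^2}$. Since the $\widetilde{LR}$ of the statement is exactly the goodness-of-fit likelihood-ratio statistic with $2(\widetilde T[j]-n_0\theta_0[j])$ subtracted from each term, this step can alternatively be quoted directly from Theorem~\ref{thm:chilr}.

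The main obstacle is the remainder control in this last step. The map $x\mapsto x\log(x/e)$ is badly behaved as $x$ approaches $0$ from above, and $\widetilde T[j]$ can be zero or negative after the Laplace perturbation, so there is no Taylor estimate valid for all values of $\widetilde T[j]$. The resolution is precisely the restriction to the high-probability event $\mathcal E_{n_0}$, on which $\widetilde T[j]/e_j$ stays bounded away from $0$; there the cubic remainder is genuinely negligible because $\delta_j$ lives at scale $\sqrt{n_0}$ while $e_j$ grows like $n_0$. The remaining ingredients — the multinomial CLT, joint convergence of an independent pair, and the continuous mapping theorem — are routine.
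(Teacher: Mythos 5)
Your proposal is correct and follows essentially the same route as the paper: the convergence of $(\widetilde T-n_0\theta_0)/\sqrt{n_0}$ to $A+\kappa V^*$ is exactly the paper's Lemma~\ref{lem:clt} (CLT plus a Slutsky variant), the chi-squared case then follows by continuous mapping, and the reduction of $\widetilde{LR}$ to $\widetilde{\chi^2}$ via a Taylor expansion with controlled remainder is precisely the content of Theorem~\ref{thm:chilr}, which the paper simply cites at this point. Your explicit treatment of the high-probability event on which $\widetilde T[j]>0$ is a slightly more careful version of the paper's remark that negative or zero noisy counts do not matter asymptotically because $\widetilde T[j]/n_0$ converges in probability to $\theta_0[j]>0$.
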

For proof, see \ConfOrTech{the full version of our paper
  \cite{wang2015differentially}}{the online Appendix \ref{app:good}}.

\begin{theorem}
The $p$-value algorithms satisfy $\epsilon$-differential privacy.
\end{theorem}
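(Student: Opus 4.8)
The plan is to reduce the statement to two standard facts: (i) the Laplace mechanism with noise calibrated to the $\ell_1$ sensitivity is $\epsilon$-differentially private, and (ii) differential privacy is closed under post-processing, i.e.\ if $\randalg$ is $\epsilon$-DP and $g(\,\cdot\,;U)$ is any (deterministic or randomized) map whose auxiliary randomness $U$ is drawn independently of the data, then $\vec{x}\mapsto g(\randalg(\vec{x});U)$ is also $\epsilon$-DP. The structural observation that makes this work is that in each recipe of Section~\ref{subsec:pvaluealg} the sensitive counts are used \emph{only} to form the noisy tables $\widetilde{T}$ (and, for the test of sample proportions, $\widetilde{S}$); every subsequent operation --- estimating $\theta_0$, drawing the $m$ reference points, evaluating $t^*$, and returning the fraction --- is a function of those noisy tables together with fresh randomness that does not see the raw data. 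I would organise the argument in three short steps.

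\emph{Step 1 (the noisy release is $\epsilon$-DP).} Under Definition~\ref{def:dp} two neighbouring datasets differ in the value of a single record, so the corresponding contingency tables differ by decrementing one cell and incrementing another; hence the vector of cell counts has $\ell_1$ sensitivity exactly $2$, and since neighbours have the same number of records, publishing the table size $n_0$ exactly costs nothing. Adding an independent $\mathrm{Laplace}(2/\epsilon)$ variate to each cell is therefore precisely the Laplace mechanism with sensitivity bound $\mathcal{S}=2$, which is $\epsilon$-DP. For the test of sample proportions, where the pair $(\widetilde{T},\widetilde{S})$ is released, I would view this as the Laplace mechanism applied to the concatenated count vector $(T,S)$; as the two samples range over disjoint sets of individuals, a single record lies in exactly one of them, so changing its value again perturbs the concatenated vector by $\ell_1$-norm $2$, and the joint release is $\epsilon$-DP.

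\emph{Steps 2 and 3 (post-processing and conclusion).} Here I would walk through Algorithms~\ref{alg:sampleindep} and~\ref{alg:samplehomo} and the goodness-of-fit simulation and observe that, after the noisy tables are produced, $\theta_0$ is computed deterministically from $\widetilde{T}$ (resp.\ from $\widetilde{T},\widetilde{S}$), the vectors $A$ (resp.\ $A_1,A_2$) and the ``fresh noise'' $V^*$ (resp.\ $V^*_1,V^*_2$) are sampled independently of the original data, the reference statistics $t_1,\dots,t_m$ are functions of $\theta_0$ and this fresh randomness, $t^*$ is a function of the noisy tables, and the output $\left|\set{t_i : t_i\geq t^*}\right|/m$ is a function of all of the above. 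Thus the $p$-value equals $g(\widetilde{T},\widetilde{S};U)$ for a fixed map $g$ and an auxiliary seed $U$ independent of the sensitive data; note that although that fresh noise is itself $\mathrm{Laplace}(2/\epsilon)$, it is irrelevant to the accounting precisely because it never touches the data. For goodness-of-fit this is even more immediate: $\theta_0$ is prespecified and the reference tables $Q_1,\dots,Q_m$ are sampled from $\mathrm{Multinomial}(n_0,\theta_0)$ and perturbed with fresh noise, so they involve no raw data. Combining Step~1 with the post-processing fact (ii) then yields the theorem.

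\textbf{Main obstacle.} This is essentially a corollary of the Laplace mechanism and post-processing invariance, so no deep difficulty arises; the care required is bookkeeping. One must check that the estimated $\theta_0$ and each reference point genuinely depend on the data only through the noisy tables plus fresh, data-independent randomness (so that no hidden dependence on $T$ slips in), that the neighbouring relation of Definition~\ref{def:dp} is the value-change (``bounded'') version, which is what makes $n_0$ constant across neighbours and the sensitivity equal to $2$, and --- in the sample-proportions case --- that the two samples range over disjoint individuals; otherwise one record could perturb both $\widetilde{T}$ and $\widetilde{S}$ and the Laplace scale would have to be doubled. Once these points are verified the argument closes at once.
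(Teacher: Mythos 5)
Your proposal is correct and follows essentially the same route as the paper: the noisy tables are an $\epsilon$-differentially private release (Laplace mechanism with sensitivity $2$, with $n_0$ public under the bounded neighbouring relation of Definition~\ref{def:dp}), and everything downstream --- the estimated $\theta_0$, the reference points, and the final fraction --- is post-processing with data-independent auxiliary randomness. The only difference is that you spell out the sensitivity-$2$ calculation and the concatenated-vector argument for the two-sample case explicitly, whereas the paper takes the privacy of the noisy tables as already established in its setup and states only the post-processing step.
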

\begin{proof}
The algorithms only use the differentially private noisy tables, the table sizes (which are assumed to be known according to the definition of differential privacy in Definition \ref{def:dp}), and the density of the noise distribution (which is also public knowledge). At no point do they access the true data or the true noise that was added to it. As such, our algorithms are strictly post-processing algorithms and hence satisfy differential privacy due to the post-processing property of differential privacy \citep{dwork2006calibrating}.
\end{proof}


\section{Experiments} \label{sec:experiments}
\conferenceversion{ 
\begin{figure}
 \centering
 \subfloat{
  \includegraphics[width=0.21\textwidth]{figs/bothqq/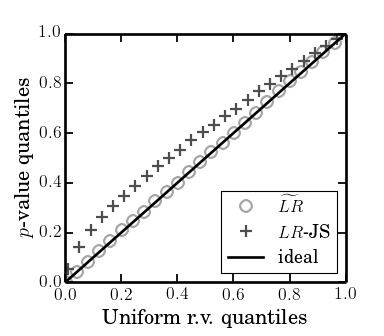}
 }%
 \hskip -5pt
 \subfloat{
  \includegraphics[width=0.21\textwidth]{figs/bothqq/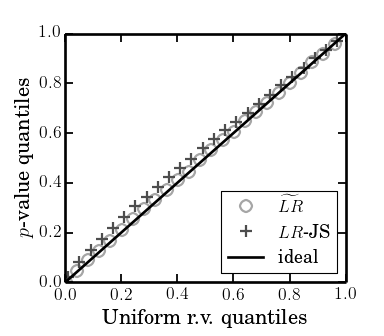}
 }%
 \hskip -5pt
 \subfloat{
  \includegraphics[width=0.21\textwidth]{figs/bothqq/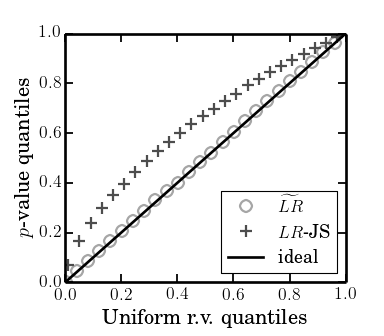}
 }%
 \hskip -5pt
 \subfloat{
  \includegraphics[width=0.21\textwidth]{figs/bothqq/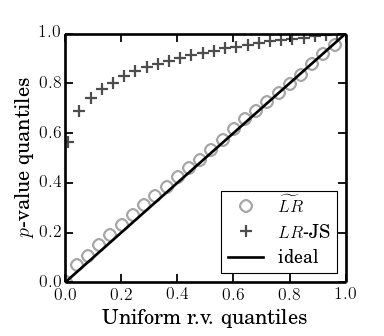}
 }
 \vskip -13pt
 \caption{Q-Q plots against the uniform distribution with $\epsilon=0.2$ for
   test of independence. $n=1000$, $r=c=2$, $P_{row}=P_{col}=[1/2,1/2]$ (left);
   $n=4000$, $r=c=2$, $P_{row}=P_{col}=[1/2,1/2]$ (middle left); $n=4000$,
   $r=c=3$, $P_{row}=P_{col}=[1/3,1/3,1/3]$ (middle right); $n=4000$, $r=c=3$,
   $P_{row}=P_{col}=[0.1,0.1,0.8]$ (right).}
 \label{fig:privatereliability}
\end{figure}
} 

\techreport{ 
} 

\conferenceversion{
\begin{figure}
 \centering
  \subfloat{
    \includegraphics[width=0.32\textwidth]{figs/yellowtaxi/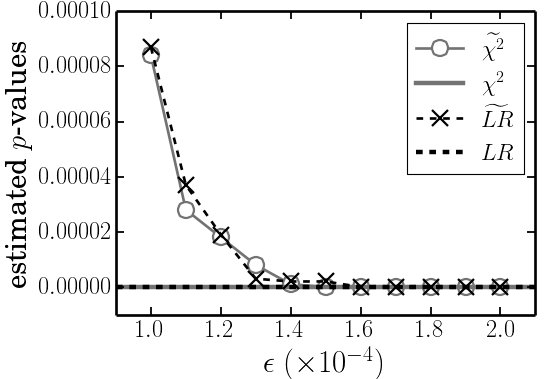}
 }%
 \hskip -5pt
  \subfloat{
    \includegraphics[width=0.32\textwidth]{figs/yellowtaxi/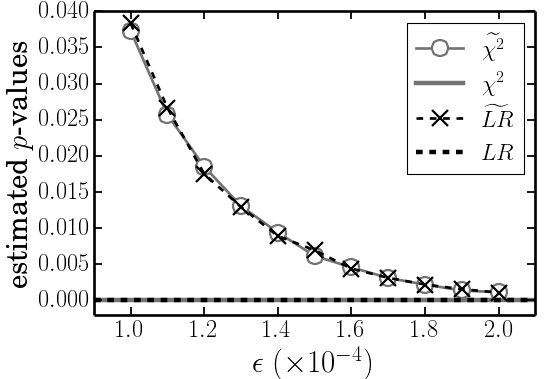}
 }%
 \hskip -5pt
  \subfloat{
    \includegraphics[width=0.32\textwidth]{figs/yellowtaxi/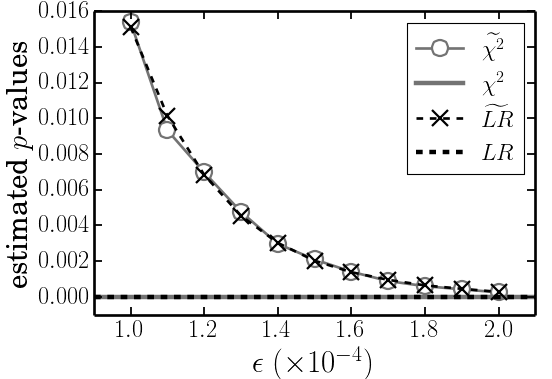}
 }%
 \vskip -14pt
 \caption{Test: independence, Attributes: Passenger Count, Payment Type with
   $n=165,114,361$, $r=4$, $c=3$ (left); Test: sample proportions, Attribute:
   Payment Type (first half year and second half year) with $n_1=85,480,239$,
   $n_2=79,634,122$, $r=2$, $c=3$ (middle); Test: Goodness-of-fit, Attribute:
   Payment Type (second half of year) with $n=79,634,122$, $r=1$, $c=3$,
   \probvec=$[0.00823912,0.5762475,0.41551338]$ estimated from first half of
   year (right).}
 \label{fig:nycpower}
\end{figure}
} 

We evaluate our proposed algorithms on a variety of datasets, both large and small, with various settings of the differential privacy parameter $\epsilon$. 
In particular, we use extremely challenging conditions where the added noise is significant compared to the standard deviation of the data.
Small data sets are commonly seen in the social sciences since the effort of collecting experimental data naturally limits the data size. However, large datasets allow one to use very small $\epsilon$ values (such as $0.0001$) to provide very strong privacy guarantees for individuals. 

The goal of the experiments is to determine at what point the methods
  break down. This will serve to identify the frontier for future research.
Generally, we found that the tests work extremely well on large data even with
large amounts of privacy noise. The tests work well on small datasets (e.g.,
$n\approx 1,800$)  where the non-private $p$-value strongly rejects the null
hypothesis (e.g., $p\leq 0.01$). Beyond that, the \textcolor{black}{agreement with non-private tests} starts to
degrade (a significant improvement over prior work \citep{johnson2013privacy,
 uhler2013privacy,yu2014scalable} in terms of reliability and \textcolor{black}{agreement with non-private tests}).

\textcolor{black}{ We use five real data sets from which we obtain seven
  contingency tables. The first dataset (Czech) was used
  in~\cite{fienberg2010differential}. Its purpose is to study the risk factors
  for coronary thrombosis with data collected from all men employed in a Czech
  car factory at the beginning of the $15$ year follow-up study. Its sample size
  is $1841$. The second dataset (Rochdale) was also used
  in~\cite{fienberg2010differential} and it contains information on $665$
  households in Rochdale, UK, which was used to study the factors that
  influence whether a wife is economically active or not. Its sample size is
  $665$. The third data set was used in~\cite{yang2012differential}. It is a
  synthetic dataset which contains information about home zone, work zone and
  income category of individuals. It was formed using an ad hoc privacy approach
  for data extracted from a $2000$ census database. Its sample size is $2291$.
  The fourth data set was used in~\cite{wright2014intuitive} and contains data
  from the $1972$ National Opinion Research Center General Society Survey about
  white Christians' attitude toward abortion. Its sample size is $1055$. The
  previous datasets are relatively small (and hence very challenging for
  privacy-preserving statistical testing). For a large dataset, which allows us
  to explore very small $\epsilon$ settings, we used the 2014 NYC Taxi data
  (available at
  \url{http://www.nyc.gov/html/tlc/html/about/trip_record_data.shtml}) which
  contains trip records for all NYC yellow taxis in 2014. This dataset has a sample size
  of $165,114,361$. Fig.~\ref{tab:nyctaxi} from online Appendix~\ref{app:datasets}
  summarizes the dataset. Other details for these datasets can
  be found from online Appendix~\ref{app:datasets}.} We evaluate the reliability of
our methods in Section \ref{subsec:exprely}. We evaluate the quality of the
$p$-values on the real datasets in Section
\ref{subsec:expreal}. 
Our experiments use
10,000 reference points to compute $p$-values and $p$-value results in Section
\ref{subsec:expreal} are averaged over 100
runs (of privacy-preserving table perturbations). 

\subsection{Reliability of the Asymptotic Distributions} \label{subsec:exprely}

A $p$-value is a strong statistical statement: under the null hypothesis,
$P(\text{$p$-value}\leq q)=q$. In other words, under the null hypothesis,
$p$-values must be uniformly distributed. This criterion allows us to evaluate
the reliability of tests: pick a null distribution, sample data from it, compute
$p$-values from each sampled dataset, and plot the quantiles of the resulting
$p$-values against the uniform distribution. The result is a Q-Q plot and a
perfect match will be a diagonal line. We compare the reliability of our
$p$-value computations for the $\chi^2$ and likelihood ratio statistics (denoted
by $\widetilde{\chi^2}$ and $\widetilde{LR}$) against the reliability of the
earlier method proposed by \cite{johnson2013privacy}, which simply ran noisy
tables through off-the-shelf software (we denote their results by $\chi^2$-JS
and LR-JS). \textcolor{black}{In each Q-Q plot, we only show every $400^{\text{th}}$ points from the total $10,000$ points for each test in case we cannot distinguish between the markers due to stacking.} 
\conferenceversion{We present results for $\widetilde{LR}$ vs. LR-JS as the
graphs are virtually identical for $\widetilde{\chi^2}$ vs. $\chi^2$-JS.
Reliability plots for sample proportions and goodness-of-fit tests also look
very similar and are omitted due to lack of space. The omitted plots can all
be found in the full version \citep{wang2015differentially}.}

\techreport{
\begin{figure}
 \centering
 \subfloat{
  \includegraphics[width=0.32\textwidth]{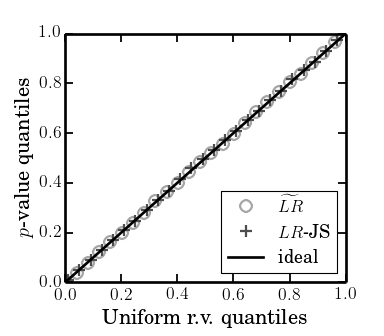}
 }%
 \hskip -6pt
 \subfloat{
  \includegraphics[width=0.32\textwidth]{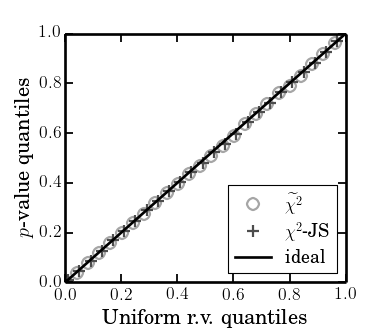}
 }%
 \hskip -6pt
 \subfloat{
  \includegraphics[width=0.32\textwidth]{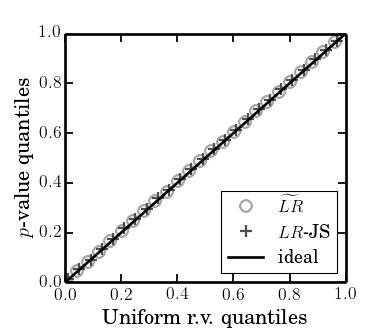}
 }%
 \vskip -16pt
 \caption{Q-Q plots against the uniform distribution with $\epsilon=\infty$ (no
   noise). Test of independence (left two), $n_0=1000$, $r=c=2$,
   $P_{row}=P_{col}=[1/2,1/2]$. Test of sample proportions (right), $n_1=1200$,
   $n_2=2800$, table dimension$=2$, \probvec{}$=[1/2,1/2]$. Similar results are
   obtained for other parameter settings. Both methods are expected to perform
   well for this sanity check.}
 \label{fig:nonprivatereliability}
\end{figure}

\begin{figure}
 \centering
 \subfloat{
  \includegraphics[width=0.32\textwidth]{independence-pv-n1000r2c2skew0eps2teststat0.png}
 }%
 \hskip -6pt
 \subfloat{
  \includegraphics[width=0.32\textwidth]{independence-pv-n4000r3c3skew0eps2teststat0.png}
 }%
 \hskip -6pt
 \subfloat{
  \includegraphics[width=0.32\textwidth]{independence-pv-n4000r3c3skew1eps2teststat0.png}
 }%
 \vskip -13pt
 \subfloat{
  \includegraphics[width=0.32\textwidth]{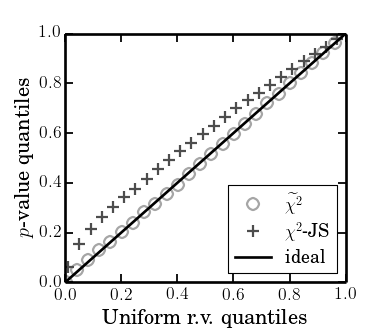}
 }%
 \hskip -6pt
 \subfloat{
  \includegraphics[width=0.32\textwidth]{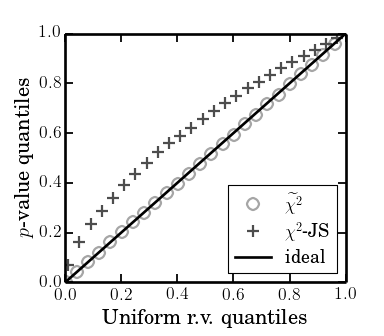}
 }%
 \hskip -6pt
 \subfloat{
  \includegraphics[width=0.32\textwidth]{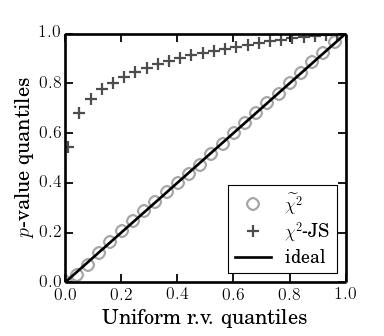}
 }%
 \vskip -16pt
 \caption{Q-Q plots against the uniform distribution with $\epsilon=0.2$ for
   test of independence. $n_0=1000$, $r=c=2$, $P_{row}=P_{col}=[1/2,1/2]$ (left);
   $n_0=4000$, $r=c=3$, $P_{row}=P_{col}=[1/3,1/3,1/3]$ (middle); $n_0=4000$,
   $r=c=3$, $P_{row}=P_{col}=[0.1,0.1,0.8]$ (right).}
 \label{fig:privatereliability}
\end{figure}
}

\techreport{ 
When no noise is added to the tables, both methods match the ideal diagonal line
and so are reliable, as expected and shown in
Fig.~\ref{fig:nonprivatereliability} (left two) for independence testing and
Fig.~\ref{fig:nonprivatereliability} (right) for test of sample proportions.
}


Fig. \ref{fig:privatereliability} shows the Q-Q plots under a variety of settings, such as sample size $n_0$, number of rows $r$, number of columns $c$ and the type of null distribution: the null distribution probability of table entry $T[i,j]$ is set to $P_{row}[i]P_{col}[j]$. As can be seen from Fig. \ref{fig:privatereliability}, our methods  remain reliable throughout these settings while the competitors  returned unreliable $p$-values. The upward bend of the reliability curve for  LR-JS (and $\chi^2$-JS) indicates strong bias towards producing small $p$-values and hence would lead to many false discoveries if applied in practice.

\techreport{ 

\begin{figure}
 \centering
 \subfloat{
  \includegraphics[width=0.32\textwidth]{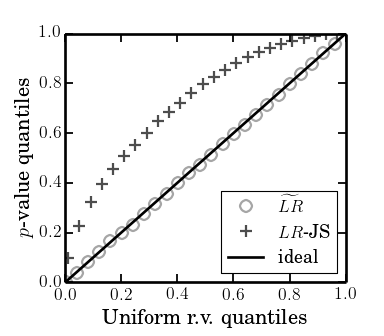}
 }%
 \hskip -6pt
 \subfloat{
  \includegraphics[width=0.32\textwidth]{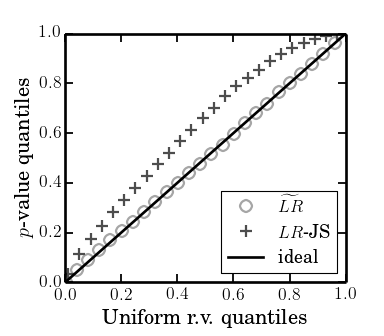}
 }%
 \hskip -6pt
 \subfloat{
  \includegraphics[width=0.32\textwidth]{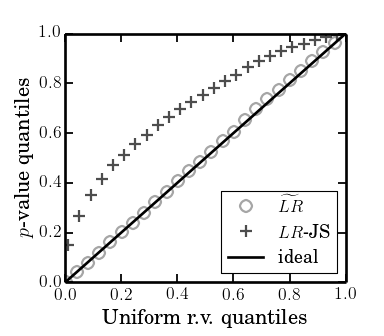}
 }%
 \vskip -13pt
 \subfloat{
  \includegraphics[width=0.32\textwidth]{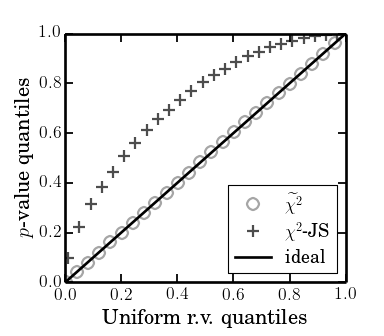}
 }%
 \hskip -6pt
 \subfloat{
  \includegraphics[width=0.32\textwidth]{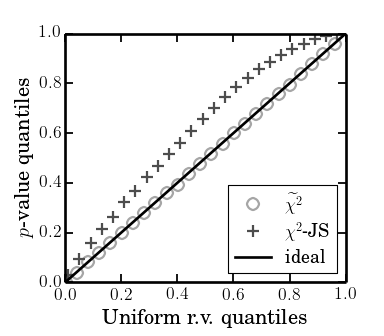}
 }%
 \hskip -6pt
 \subfloat{
  \includegraphics[width=0.32\textwidth]{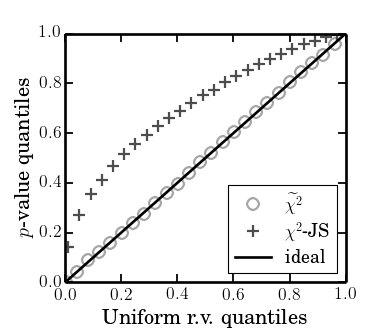}
 }%
 \vskip -16pt
 \caption{Q-Q plots against the uniform distribution with $\epsilon=0.2$ for
   test of sample proportions. $n_1=400$, $n_2=600$, table dimension$=2$,
   \probvec{}$=[1/2,1/2]$ (left); $n_1=1200$, $n_2=2800$, table dimension$=2$,
   \probvec{}$=[1/2,1/2]$ (middle);
   $n_1=1200$, $n_2=2800$, table dimension$=3$, \probvec{}$=[0.1,0.1,0.8]$
   (right).}
 \label{fig:noisyhomoreliability}
\end{figure}

\begin{figure}
 \centering
 \subfloat{
  \includegraphics[width=0.32\textwidth]{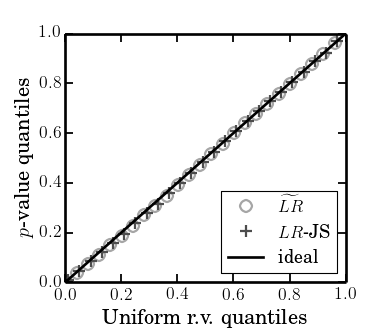}
 }%
 \hskip -6pt
 \subfloat{
  \includegraphics[width=0.32\textwidth]{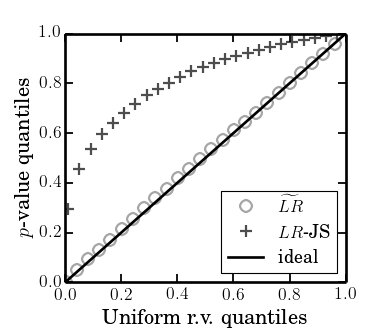}
 }%
 \hskip -6pt
 \subfloat{
  \includegraphics[width=0.32\textwidth]{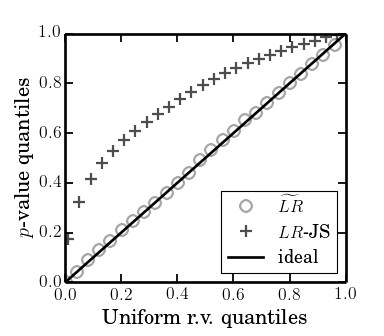}
 }%
 \vskip -13pt
 \subfloat{
  \includegraphics[width=0.32\textwidth]{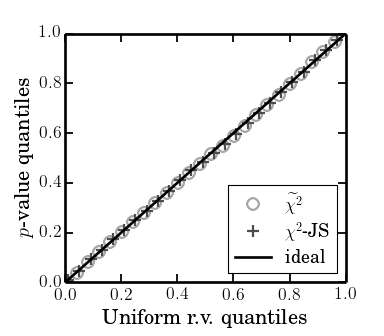}
 }%
 \hskip -6pt
 \subfloat{
  \includegraphics[width=0.32\textwidth]{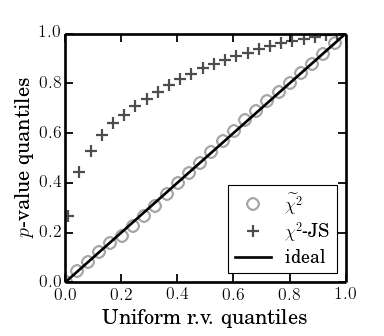}
 }%
 \hskip -6pt
 \subfloat{
  \includegraphics[width=0.32\textwidth]{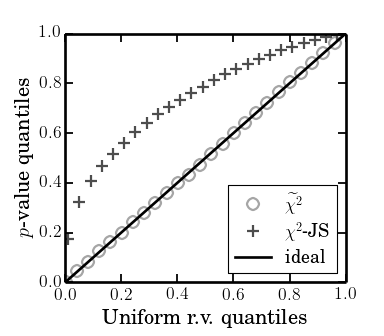}
 }%
 \vskip -16pt
 \caption{Q-Q plots against the uniform distribution for goodness of fit test.
   $n_0=1000$, $r=1$, $c=4$, \probvec{}$=[1/4,1/4,1/4,1/4]$, $\epsilon=\infty$
   (left); $n_0=500$, $r=1$, $c=4$, \probvec{}$=[1/4,1/4,1/4,1/4]$, $\epsilon=0.2$
   (middle);
   $n_0=1000$, $r=1$, $c=4$, \probvec{}$=[0.1,0.2,0.3,0.4]$, $\epsilon=0.2$
   (right).}
 \label{fig:goodreliability}
\end{figure}

Reliability plots for the test of sample proportions can be found in
Fig.~\ref{fig:noisyhomoreliability} and goodness-of-fit in
Fig.~\ref{fig:goodreliability}. Recall that test of sample proportions tests
whether two tables come from the same multinomial distribution. In the figure,
we report the settings of the two table sizes $n_1$ and $n_2$, the number of
cells in each table, and the null distribution Multinomial probability vector
\probvec{} used to generate tables for the reliability plot. The various
settings for the goodness-of-fit reliability plots are presented in
Fig.~\ref{fig:goodreliability}.

\textcolor{black}{Those reliability results show that under privacy settings the naive method proposed by \cite{johnson2013privacy} is not reliable at all, and suggest that our $p$-value computations are more reliable and thus should be used instead.}
} 

\subsection{P-value comparison on Real Data}\label{subsec:expreal}

\techreport{
\begin{figure}
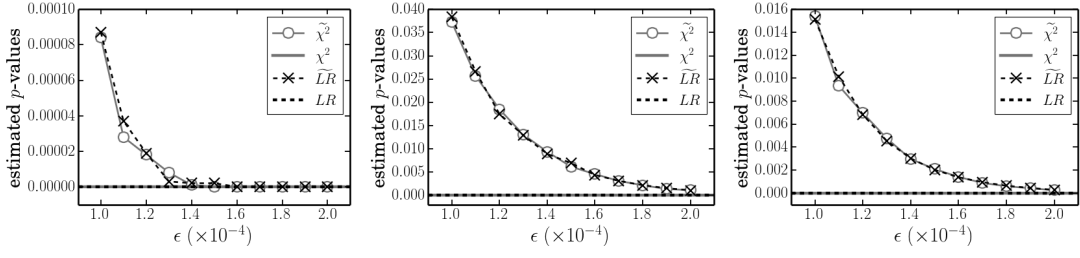

 \centering
  \subfloat{
    \includegraphics[width=0.32\textwidth]{power-indep-r4.png}
 }%
 \hskip -5pt
  \subfloat{
    \includegraphics[width=0.32\textwidth]{power-homo.png}
 }%
 \hskip -5pt
  \subfloat{
    \includegraphics[width=0.32\textwidth]{power-good.png}
 }%
 \vskip -14pt
 \caption{Test: independence, Attributes: Passenger Count, Payment Type with
   $n_0=165,114,361$, $r=4$, $c=3$ (left); Test: sample proportions, Attribute:
   Payment Type (first half year and second half year) with $n_1=85,480,239$,
   $n_2=79,634,122$, $r=2$, $c=3$ (middle); Test: Goodness-of-fit, Attribute:
   Payment Type (second half of year) with $n_0=79,634,122$, $r=1$, $c=3$,
   \probvec=$[0.00823912,0.5762475,0.41551338]$ estimated from first half of
   year (right).}
 \label{fig:nycpower}
\end{figure}
} 

Now we evaluate the $p$-values generated by our methods using real datasets
and compare them to non-private $p$-values. \textcolor{black}{All private $p$-values are averages over 100 repetitions.} Please note that the experimental settings
are challenging as the standard deviation of the added noise $(\sqrt{8}/\epsilon)$ is substantial compared to the standard deviation of the data itself ($O(\sqrt{n})$).
For example, the taxi data has sample size $n_0=165,114,361$ and we used $\epsilon=0.0001$. Here $\sqrt{n_0}=12,849.7$ and noise std$=28,284.3$ so a loss in \textcolor{black}{agreement with non-private tests} is expected.
Nevertheless, we generally find that when the null hypothesis is strongly rejected (non-private $\chi^2$ or LR $p$-value $\leq 0.01$), our private tests $\widetilde{\chi^2}$ and $\widetilde{LR}$ also reject the null hypothesis at level $0.01$. The output perturbation methods of \cite{uhler2013privacy,yu2014scalable} required too much noise and are omitted to avoid skewing the graphs. \neweat{ (we evaluate them in the permutation testbed, where their noise requirement is smaller).}
Fig. \ref{fig:nycpower} shows our results for various tests on large NYC taxi data. The $p$-values show very good performance\textcolor{black}{, as the null hypothesis is rejected (as with the original data) even when the privacy noise is extremely large ($\epsilon=0.0001$)}.


%

Next we move to extremely challenging cases with small sample sizes but high
relative noise. We arrange the figures so that from left to right there is a
decrease in sample size and a reduction in statistical significance of
\emph{non-private} analysis. Fig.~\ref{fig:realtable1} and
Fig.~\ref{fig:homopower2} show results for independence testing and test of
sample proportions, respectively\conferenceversion{( additional experiments can
  be found in \cite{wang2015differentially})}. The results show good \textcolor{black}{agreement with the non-private tests} when
the null hypothesis is strongly rejected (non-private $p\leq 0.01$) with sample
sizes of $1,800$ or more. Agreement decreases as sample size decreases and
non-private $p$-value increases. Of particular interest is the Rochdale data
(Fig.~\ref{fig:realtable1}, right). It is a small dataset with a very high
$p$-value and small test statistic that is dominated by noise. Since the noise
obscures any statistical signal, the $p$-value behaves more like a uniform
random variable (hence we add 80\% error bars on the plot). This is expected,
and note that the null hypothesis would be erroneously rejected only in rare
circumstances (which is the expected behavior of $p$-values).

\begin{figure}
 \centering
 \subfloat{
  \includegraphics[width=0.32\textwidth]{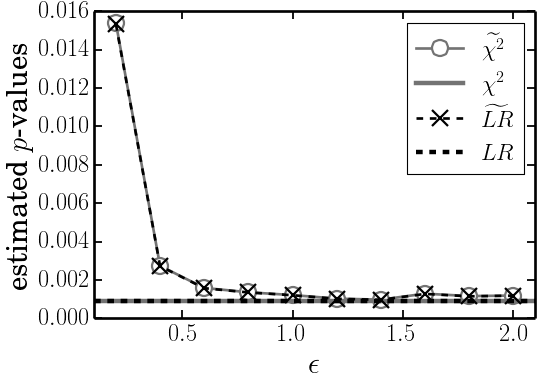}
   \label{fig:realtable1a}
 }%
 \hskip -5pt
 \subfloat{
  \includegraphics[width=0.32\textwidth]{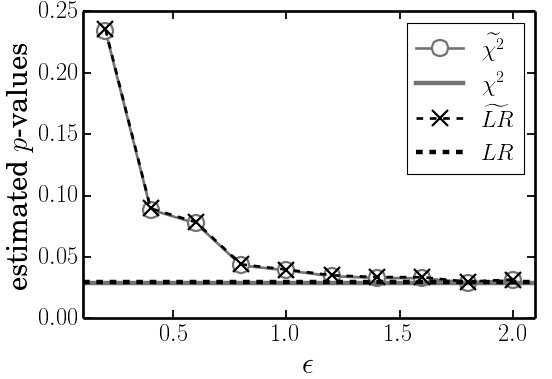}
    \label{fig:realtable1b}
 }%
 \hskip -5pt
 \subfloat{
  \includegraphics[width=0.32\textwidth]{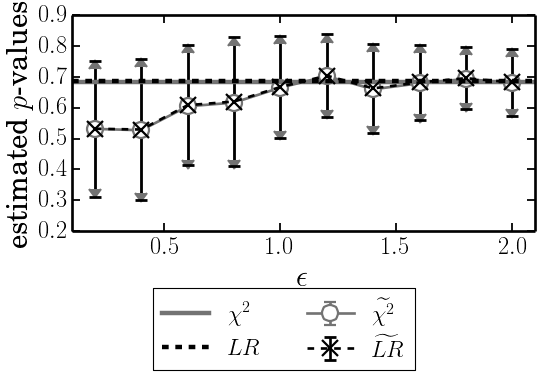}
    \label{fig:realtable1c}
 }%
 \vskip -13pt
 \caption{Independence testing on $2 \times 2$ tables. Tables used:
   Czech \texttt{AD} with $n_0=1841$ (left), Czech \texttt{BC} with $n_0=1841$
   (middle), Rochdale \texttt{AB} with $n_0=665$ (right).}
 \label{fig:realtable1}
\end{figure}

\techreport{ 
Fig.~\ref{fig:realtable2} shows the results for independence tests with Census data. Again, as we move from left to right we observe smaller data sizes and less (non-private) evidence against the null hypothesis seem to reduce \textcolor{black}{agreement with non-private tests}. Fig.~\ref{fig:homopower1} shows the results for test of sample
proportions with Czech car worker and Rochdale data. Again, as we move from left
to right we observe smaller data sizes and less (non-private) evidence against
the null hypothesis \textcolor{black}{seem} to reduce \textcolor{black}{agreement with non-private tests}. The extreme
case is again the Rochdale data where the small non-private test statistic gets
dominated by noise. The resulting $p$-value is closer to being uniformly
distributed as noise gets larger.

\begin{figure}
 \centering
 \subfloat{
   \includegraphics[width=0.32\textwidth]{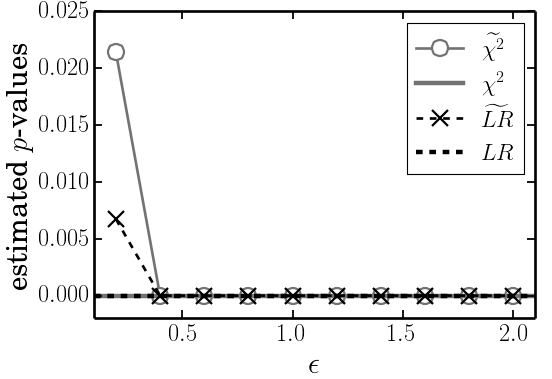}
 }%
 \hskip -5pt
  \subfloat{
    \includegraphics[width=0.32\textwidth]{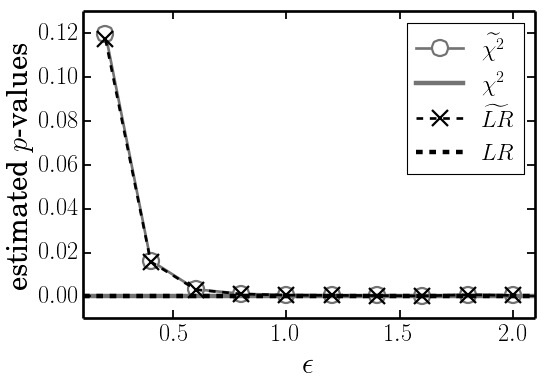}
 }%
 \hskip -5pt
  \subfloat{
    \includegraphics[width=0.32\textwidth]{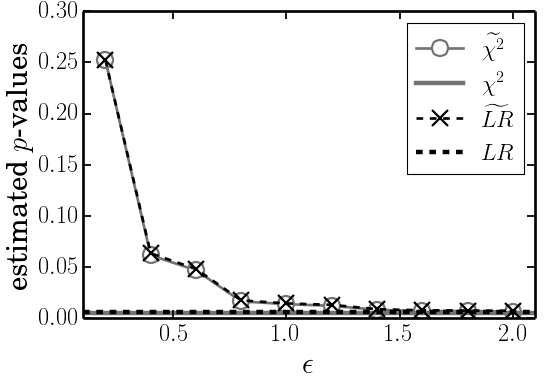}
 }%
 \vskip -13pt
 \caption{Independence tests. Tables used: Home Zone and Income Category with
   $n_0=2291$, $r=4$, $c=16$ (left), Religion and Attitudes with $n_0=1055$, $r=c=3$
   (middle), Religion and Education with $n_0=1055$, $r=c=3$ (right).}
 \label{fig:realtable2}
\end{figure}

\begin{figure}
 \centering
 \subfloat{
  \includegraphics[width=0.32\textwidth]{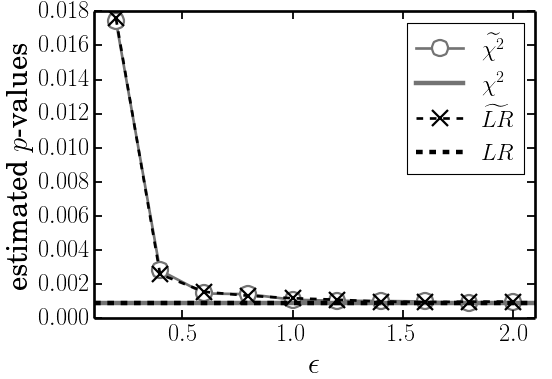}
 }%
 \hskip -5pt
 \subfloat{
  \includegraphics[width=0.32\textwidth]{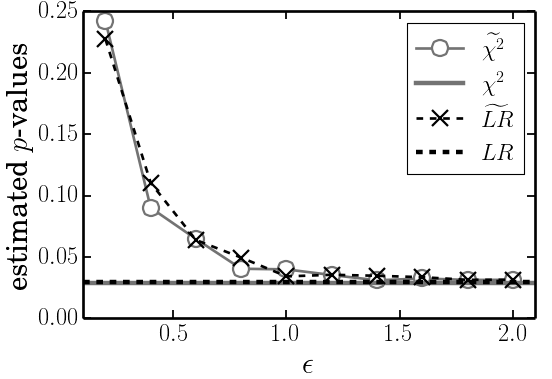}
 }%
 \hskip -5pt
 \subfloat{
  \includegraphics[width=0.32\textwidth]{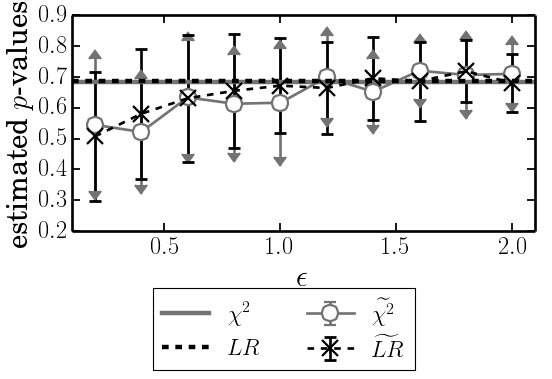}
 }%
 \vskip -14pt
 \caption{Test of sample proportions (Table dim $=2$). Tables used:
   Czech \texttt{A} with $n_1=1054$, $n_2=$ $787$ (left), Czech \texttt{B} with
   $n_1=1581$, $n_2=260$ (middle), Rochdale \texttt{A} with $n_1=586$, $n_2=79$
   (right).}
 \label{fig:homopower1}
\end{figure}


} 
\begin{figure}
 \centering
 \subfloat{
  \includegraphics[width=0.32\textwidth]{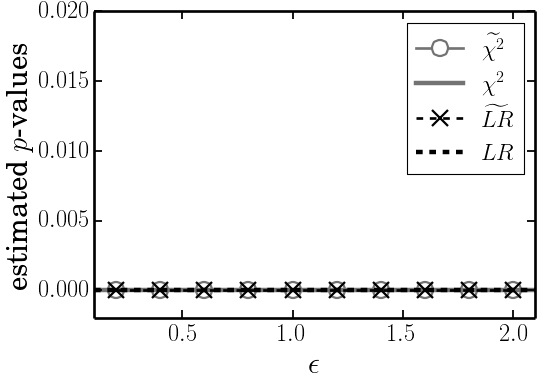}
 }%
 \hskip -5pt
 \subfloat{
  \includegraphics[width=0.32\textwidth]{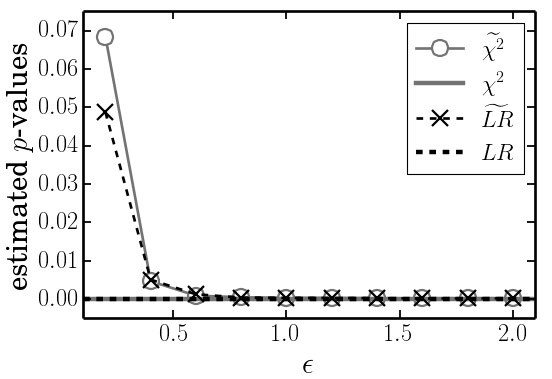}
 }%
 \hskip -5pt
 \subfloat{
  \includegraphics[width=0.32\textwidth]{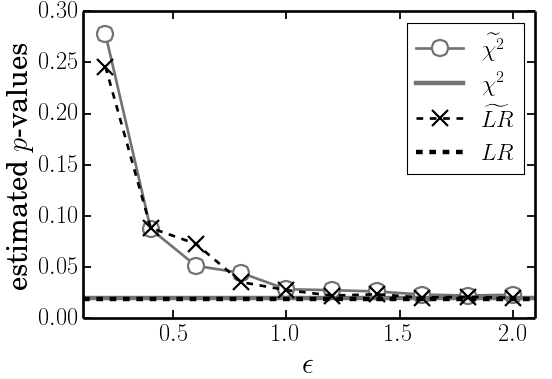}
 }%
 \vskip -13pt
 \caption{Test of sample proportions. Attributes used: Home Zone with
   $n_1=1286$, $n_2=1005$, Table dim $=4$ (left), Attitudes with $n_1=453$,
   $n_2=602$, Table dim $=3$ (middle), Education with $n_1=453$, $n_2=602$,
   Table dim $=3$ (right).}
 \label{fig:homopower2}
\end{figure}

\techreport{ 
\begin{figure}
 \centering
 \subfloat{
  \includegraphics[width=0.32\textwidth]{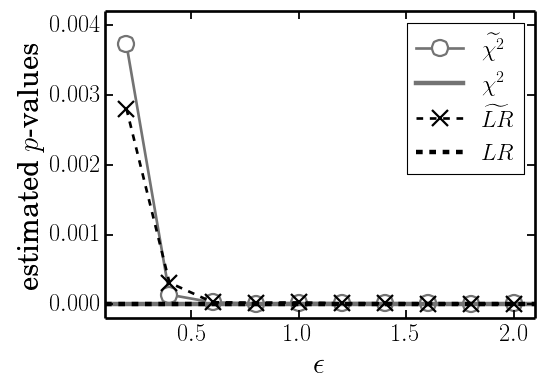}
 }%
 \hskip -5pt
 \subfloat{
  \includegraphics[width=0.32\textwidth]{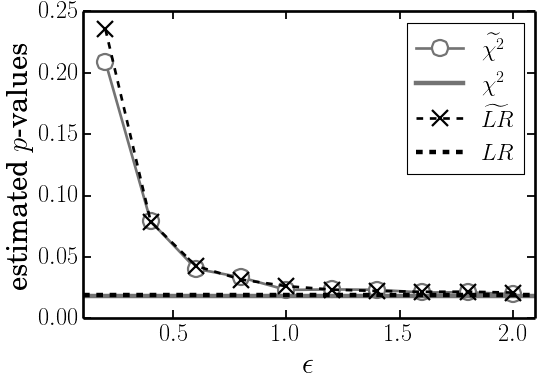}
 }%
 \hskip -5pt
 \subfloat{
  \includegraphics[width=0.32\textwidth]{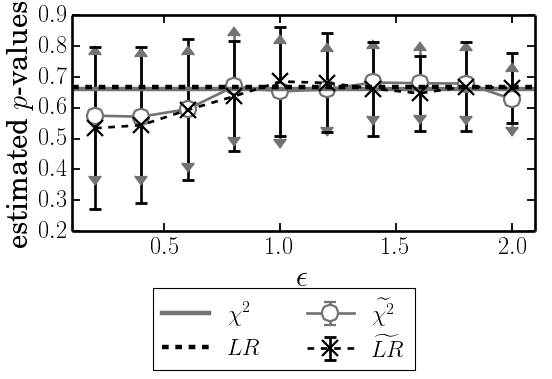}
 }%
 \vskip -13pt
 \caption{Goodness of fit on $1 \times 2$ tables. Tables
   used: Czech \texttt{A} with $n_0=787$, \probvec{}=$[0.4886148,0.5113852]$
   (left), Czech \texttt{B} with $n_0=260$, \probvec=$[0.58760278, 0.41239722]$
   (middle), Rochdale \texttt{A} with $n_0=79$, \probvec=$[0.77986348,
   0.24013652]$ (right).}
 \label{fig:goodpower1}
\end{figure}

\begin{figure}
 \centering
 \subfloat{
  \includegraphics[width=0.32\textwidth]{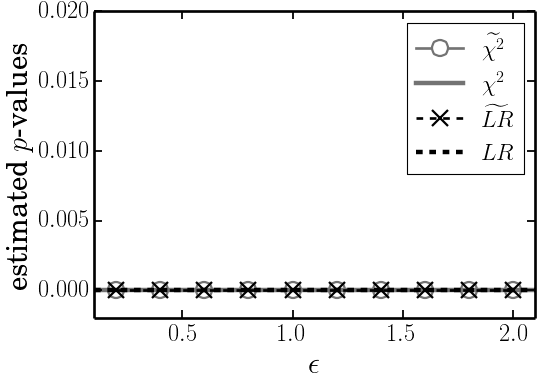}
 }%
 \hskip -5pt
 \subfloat{
  \includegraphics[width=0.32\textwidth]{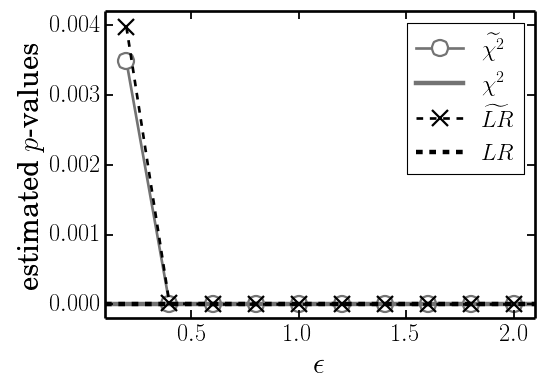}
 }%
 \hskip -5pt
 \subfloat{
  \includegraphics[width=0.32\textwidth]{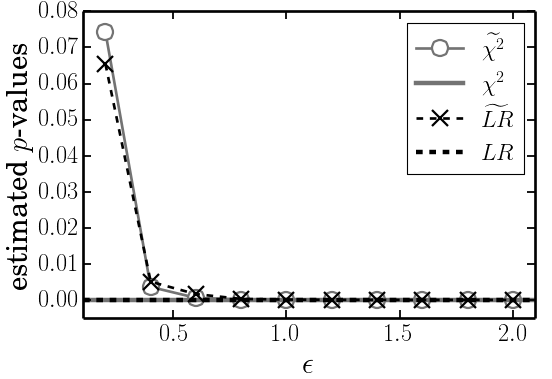}
 }%
 \vskip -13pt
 \caption{Goodness of fit tests. Attributes used: Home Zone with $n_0=1005$, $r=1$, $c=4$,
   \probvec{}=$[0.50155521, 0.05987558, 0.21772939,0.22083981]$ (left),
   Attitudes with $n_0=602$, $r=1$, $c=3$, \probvec{}=$[0.37748344, 0.21633554,
   0.40618102]$ (middle), Education with $n_0=602$, $r=1$, $c=3$,
   \probvec{}=$[0.14569536, 0.53421634, 0.3200883]$ (right).}
 \label{fig:goodpower2}
\end{figure}

For goodness of fit tests, we used extremely small sample sizes (on the order of
a few hundred) with large standard deviation of Laplace noise relative to the
standard deviation of the data. The results are shown in Figs.
\ref{fig:goodpower1} and \ref{fig:goodpower2}. The private $p$-values are still
in good quality but their quality degrades as the sample size is further
diminished. Again, the Rochdale data with only 79 data points has a very small
non-private $\chi^2$ and LR value and so is completely dominated by noise,
leading to large variance but no false conclusions except in rare cases (as is
allowed by the definition of $p$-value).

\textcolor{black}{We have several observations from the above results. For very large datasets, our tests show very good agreement with the non-private tests even with very rigorous differential privacy guarantee. When it comes to the cases with small sample sizes but high relative noise, our tests perform very well when there is strong signal against the null hypothesis. Agreement with non-private tests appears to decrease as sample size decreases and non-private $p$-value increases. Even for very small datasets where the statistical signal is dominated by noise, our tests only lead to false conclusions in rare cases.}
} 

\section{Conclusions}\label{sec:conc}
In this paper, we revisited the topic of $\epsilon$-differentially private hypothesis testing. We provided $p$-value algorithms that, for the first time, allow reliable private hypothesis tests for data sizes often used in social sciences, as well as reliable tests with very strong privacy protections (i.e. small $\epsilon$ values) for large data sizes. \textcolor{black}{The advantages of our algorithms over previous approaches have been verified through the extensive experiments.} We believe that new test statistics tailored for the privacy domain may yield even further improvement, but those test statistics are open problems. \neweat{ We also proposed a permutation-based testbed to help estimate the quality of new test statistics on real data before fully working out their mathematical details.}

\bibliographystyle{rss}

\techreport{
\newpage
\onecolumn
\appendix
\section{Proof of Theorem \lowercase{\ref{thm:asympguarantee}}}\label{app:asympguarantee}
\thmasympguarantee*
\begin{proof}
For vectors, we will use the notation $\vec{z}\prec\vec{t}$ to mean each component of $\vec{z}$ is less than or equal to the corresponding component of $\vec{t}$. Hence a cumulative distribution function $F_A(\vec{t})$ for a vector-valued random variable $A$ represents $P(A\prec\vec{t})$.

Since $X_0$ and $Y$ may be discrete random variables while the Gaussian is continuous, let $\mu$ be a suitable measure (e.g., mixture of Lebesgue and counting measure) so that we can write $f_Y$ to be the Radon-Nikodym derivative of the $Y$ with respect to $\mu$,  $f_0$ to be the Radon-Nikodym derivative of $X_0$ with respect to $\mu$ and $f_X$ be the Radon-Nikodym derivative of $N(0,\sigma^2)$ with respect to $\mu$. Then, for any $\vec{t}$:
\begin{itemize}
\item The density of $X_{n_0}/\sqrt{n_0}$ is $f^*(\vec{x})\equiv\sqrt{n_0}f_0(\sqrt{n_0}\vec{x})$
\item The density of $Z_{n_0}/\sqrt{n_0}$ is then $g_0(\vec{z})\equiv \int_{\vec{x}}f^*(\vec{x})\sqrt{n_0}f_Y(\sqrt{n_0}(\vec{z}-\vec{x}))~d\mu(\vec{x})=\int_{\vec{x}}\sqrt{n_0}f_0(\sqrt{n_0}\vec{x})\sqrt{n_0}f_Y(\sqrt{n_0}(\vec{z}-\vec{x}))~d\mu(\vec{x})$
\end{itemize}

We claim that for all $\vec{t}$, $G_Z(\vec{t}) = \int_{\vec{z}\prec\vec{t}}\int_{\vec{x}} \phi(\vec{x})\sqrt{n_0}f_Y(\sqrt{n_0}(\vec{z}-\vec{x})) ~d\mu(\vec{z})~d\mu(\vec{x})$, which is the CDF of the convolution of the Gaussian with $Y/\sqrt{n_0}$. This follows immediately from Slutsky's Theorem because the $X_{n}$ are independent from $Y$. Then,

\begin{align*}
\lefteqn{|G_0(\vec{t}) - G_Z(\vec{t})|}\\ &=\left|\int_{\vec{z}\prec\vec{t}}\int_{\vec{x}} \Big(
\sqrt{n_0}f_0(\sqrt{n_0}\vec{x})\sqrt{n_0}f_Y(\sqrt{n_0}(\vec{z}-\vec{x}))
-\phi(\vec{x})\sqrt{n_0}f_Y(\sqrt{n_0}(\vec{z}-\vec{x}))\Big)d\mu(\vec{x})~d\mu(\vec{z})\right|\\
&=\left|\int_{\vec{z}\prec\vec{t}}\int_{\vec{x}} \Big(
\sqrt{n_0}f_0(\sqrt{n_0}\vec{x})
-\phi(\vec{x})\Big)\sqrt{n_0}f_Y(\sqrt{n_0}(\vec{z}-\vec{x}))d\mu(\vec{x})~d\mu(\vec{z})\right|\\
&=\left|\int_{\vec{z}\prec\vec{t}}\int_{\vec{x}} \Big(
\sqrt{n_0}f_0(\sqrt{n_0}(\vec{z}-\vec{x}))
-\phi(\vec{z}-\vec{x})\Big)\sqrt{n_0}f_Y(\sqrt{n_0}\vec{x})d\mu(\vec{x})~d\mu(\vec{z})\right|\\
&=\left|\int_{\vec{x}} \Big(
F_0(\vec{t}-\vec{x})
-\Phi(\vec{t}-\vec{x})\Big)\sqrt{n_0}f_Y(\sqrt{n_0}\vec{x})~d\mu(\vec{x})\right|\\
&\leq \int_{\vec{x}} \left|
F_0(\vec{t}-\vec{x})
-\Phi(\vec{t}-\vec{x})\right|\sqrt{n_0}f_Y(\sqrt{n_0}\vec{x})~d\mu(\vec{x})\\
&\leq \left(\sup_{\vec{x}} \left|
F_0(\vec{x})
-\Phi(\vec{x})\right|\right)\int_{\vec{x}}\sqrt{n_0}f_Y(\sqrt{n_0}\vec{x})~d\mu(\vec{x})\\ 
&=\sup_{\vec{x}} \left|
F_0(\vec{x})
-\Phi(\vec{x})\right|
\end{align*}
\end{proof}
\section{Proof of Theorem \lowercase{\ref{thm:chilr}}}\label{app:chilr}
\thmchilr*

Without loss of generality, assume all of the noisy tables have been stuffed into one vector $\widetilde{T}$ and similarly for the expected counts $E$.

Note that asymptotically, the correction of terms where $\widetilde{T}[i]<0$ will not be needed since $\widetilde{T}[i]/n$ converges in probability to the true parameter $\theta[i]$.
\begin{proof}
We will use the Taylor series expansion of $\log(1+x)$ around $x=0$:
\begin{align*}
\log(1+x) &= x - \int_0^1\int_0^1 u\frac{1}{(1+uvx)^2}x^2~dv~du
\end{align*}
\begin{align*}
& 2\sum_i \left(\widetilde{T}[i]\log \frac{\widetilde{T}[i]}{E[i]} - \widetilde{T}[i] + E[i]\right)\\
=&  2\sum_i \widetilde{T}[i]\left(\frac{\widetilde{T}[i]}{E[i]}-1\right) - 2\sum_i\widetilde{T}[i]\int_0^1\int_0^1 u\left(\frac{\widetilde{T}[i]}{E[i]}-1\right)^2\frac{1}{(1+uv\left(\frac{\widetilde{T}[i]}{E[i]}-1\right) )^2}~dv~du\\
\quad& - 2\sum_i E[i]\left(\frac{\widetilde{T}[i]}{E[i]}-1\right)\\
=&2\sum_i (\widetilde{T}[i]-E[i])\left(\frac{\widetilde{T}[i]}{E[i]}-1\right) - 2\sum_i\widetilde{T}[i]\int_0^1\int_0^1 u\left(\frac{\widetilde{T}[i]}{E[i]}-1\right)^2\frac{1}{(1+uv\left(\frac{\widetilde{T}[i]}{E[i]}-1\right) )^2}~dv~du\\
=&2\sum_i \frac{(\widetilde{T}[i]-E[i])^2}{E[i]} - 2\sum_i\frac{\widetilde{T}[i]}{E[i]}\int_0^1\int_0^1 u\frac{(\widetilde{T}[i]-E[i])^2}{E[i]}\frac{1}{(1+uv\left(\frac{\widetilde{T}[i]}{E[i]}-1\right) )^2}~dv~du\\
\end{align*}

Now, both $\widetilde{T}/n$ and $E/n$ converge in probability to the true (nonzero) null distribution and so $\widetilde{T}/E$ converges to 1 in probability. Thus, an application of Slutsky's theorem \citep{largesamplebook} allows us to conclude that the term containing the integral converges in distribution to the same limit as $2\sum_i\int_0^1\int_0^1 u\frac{(\widetilde{T}[i]-E[i])^2}{E[i]}~dv~du= \sum_i \frac{(\widetilde{T}[i]-E[i])^2}{E[i]}$.

Thus $2\sum_i \left(\widetilde{T}[i]\log \frac{\widetilde{T}[i]}{E[i]} - \widetilde{T}[i] + E[i]\right)$ and $\frac{(\widetilde{T}[i]-E[i])^2}{E[i]}$ converge in distribution to the same limit.
\end{proof}

\section{Proof of Theorem \lowercase{\ref{thm:indep}}}\label{app:indep}
\thmindep*

We first need the following Lemma~\ref{lem:clt}.
\begin{lemma} \label{lem:clt} Let $T$ be a contingency table sampled from a
  Multinomial$(n,\theta)$ distribution with no entries having $0$ probability. Let $V_\epsilon$ be a table (with same dimensions
  as $T$) of independent Laplace$(2/\epsilon)$ random variables. Let $\widetilde{T}=T+V_{\epsilon}\kappa \sqrt{n}$. Then as $n
  \rightarrow \infty$,
  $\frac{\widetilde{T}-n\theta}{\sqrt{n}}$ converges in law to the distribution
  of the random variable $A+\kappa V^*$, where $V^*$ has the same distribution as $V_\epsilon$ and $vec(A) \sim N(0, \diag(vec(\theta)) - vec(\theta)vec(\theta)^t)$ 
\end{lemma}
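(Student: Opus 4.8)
The plan is to reduce the statement to the multivariate Central Limit Theorem for multinomial counts together with an elementary argument about adding an independent, $n$-free perturbation. First I would divide the defining relation $\widetilde{T}=T+V_\epsilon\kappa\sqrt{n}$ by $\sqrt{n}$ to obtain
\[
\frac{\widetilde{T}-n\theta}{\sqrt{n}} \;=\; \frac{T-n\theta}{\sqrt{n}} \;+\; \kappa V_\epsilon ,
\]
so the problem splits into understanding the centered-and-scaled multinomial table and an additive term whose law does not change with $n$.

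For the first term I would realize $T$ (or $vec(T)$ in the two-dimensional case) as a sum of $n$ i.i.d.\ indicator vectors $X_1,\dots,X_n$, each equal to a standard basis vector $e_i$ with probability $\theta[i]$; then $E[X_k]=\theta$ and $\mathrm{Cov}(X_k)=\diag(\theta)-\theta\theta^t$. The multivariate CLT \citep{largesamplebook} gives that $(T-n\theta)/\sqrt{n}=n^{-1/2}\sum_k(X_k-\theta)$ converges in distribution to $A$ with $vec(A)\sim N(\bm{0},\diag(vec(\theta))-vec(\theta)vec(\theta)^t)$. The hypothesis that no cell has probability $0$ is not strictly needed for this step, but it keeps the limiting covariance non-degenerate in the relevant directions and is used in the downstream theorems.

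Next I would combine the two terms. Since $V_\epsilon$ is independent of $T$ and its law is fixed (it does not depend on $n$), the joint law of $\big((T-n\theta)/\sqrt{n},\,\kappa V_\epsilon\big)$ converges to that of $(A,\kappa V^*)$ with $A$ and $V^*$ independent and $V^*\stackrel{d}{=}V_\epsilon$: for any bounded continuous $f$, conditioning on $V_\epsilon$ and applying dominated convergence reduces the claim to the marginal convergence of the first coordinate established above. Equivalently, one can argue with characteristic functions — by independence the characteristic function of the sum factors as the (convergent) characteristic function of $(T-n\theta)/\sqrt{n}$ times the fixed characteristic function of $\kappa V_\epsilon$, and L\'evy's continuity theorem concludes. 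Applying the continuous mapping theorem to the addition map then yields $(\widetilde{T}-n\theta)/\sqrt{n}\rightarrow A+\kappa V^*$ in distribution, which is the claim.

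There is no serious obstacle here; the only point needing a little care is the joint-convergence step, i.e.\ ensuring that adding the fixed independent perturbation $\kappa V_\epsilon$ really produces $A+\kappa V^*$ with $A\perp V^*$ in the limit rather than some coupled pair. The characteristic-function route makes this transparent and also automatically delivers the limit in the stated convolution form. The two-dimensional table case is identical after vectorizing.
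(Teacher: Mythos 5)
Your proposal is correct and follows essentially the same route as the paper, whose proof is a one-line appeal to the Central Limit Theorem plus ``a variation of Slutsky's theorem'' for combining the multinomial term with the independent noise term. Your characteristic-function argument for the joint convergence is precisely the detail that the paper's ``variation of Slutsky'' glosses over (classical Slutsky would require the second term to converge to a constant, which it does not), so your write-up is a faithful and somewhat more careful elaboration of the intended proof.
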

\begin{proof}
Since $T$ and $V_\epsilon$ are independent, the result follows from the Central limit theorem and a variation of Slutsky's theorem \citep{largesamplebook}.
\end{proof}

The proof of Theorem~\ref{thm:indep} is provided below.

\begin{proof} \label{pf:asymptotic}
When convenient, we will treat $\widetilde{T}$, $T$ and $\theta$ as either vectors (with one index) or 2-d arrays  with two indices (e.g.,  $\theta[i,j]$). The conversion is simple: $\theta[(i-1)*c + j]=\theta[i,j]$.

 We will consider the noisy likelihood ratio statistic as it is easier to work with (we apply Theorem \ref{thm:chilr} and note that in the theorem, $E[i,j]=\widetilde{T}[i,\bullet]\widetilde{T}[\bullet,j]/\widetilde{T}[\bullet,\bullet]$ and so $\sum_{ij} E[i,j]-\sum_{ij}\widetilde{T}[i,j]=0$):
\begin{align*}
 \widetilde{LR} = & 2\left(\sum_i\sum_j \widetilde{T}[i,j]\log\left(\frac{\widetilde{T}[i,j]}{E[i,j]}\right)\right)\\
 =& 2\sum\limits_{i=1}^r\sum\limits_{j=1}^c \widetilde{T}[i,j]\log\left(\frac{\widetilde{T}[i,j]}{\sum\limits_{i^*=1}^r\sum\limits_{j^*=1}^c \widetilde{T}[i^*,j^*]}\right)
  - 2\sum\limits_{i=1}^r\sum\limits_{j=1}^c \widetilde{T}[i,j]\log\left(\frac{\sum\limits_{i^*=1}^r \widetilde{T}[i^*,j]}{\sum\limits_{i^*=1}^r\sum\limits_{j^*=1}^c \widetilde{T}[i^*,j^*]}\right)\\
  &- 2\sum\limits_{i=1}^r\sum\limits_{j=1}^c \widetilde{T}[i,j]\log\left(\frac{\sum\limits_{j^*=1}^c \widetilde{T}[i,j^*]}{\sum\limits_{i^*=1}^r\sum\limits_{j^*=1}^c \widetilde{T}[i^*,j^*]}\right)\\
 =&2\sum\limits_{i=1}^r\sum\limits_{j=1}^c \widetilde{T}[i,j]\log\left(\widetilde{T}[i,j]\right)
  +2\left(\sum\limits_{i=1}^r\sum\limits_{j=1}^c \widetilde{T}[i,j]\right)\log\left(\sum\limits_{i^*=1}^r\sum\limits_{j^*=1}^c \widetilde{T}[i^*,j^*]\right)\\
 &-2\sum\limits_{j=1}^c \left(\sum\limits_{i=1}^r \widetilde{T}[i,j]\right)\log\left(\sum\limits_{i^*=1}^r \widetilde{T}[i^*,j]\right)
  -2\sum\limits_{i=1}^r\left(\sum\limits_{j=1}^c \widetilde{T}[i,j]\right)\log\left(\sum\limits_{j^*=1}^c \widetilde{T}[i,j^*]\right)
 \end{align*}

 We will
 \begin{itemize}
 \item use the second order taylor expansion to expand these quantities around
   $n\theta_0[i,j]$, $n\theta_0[i,\cdot]$ $n\theta_0[\cdot, j]$ and
   $n\theta_0[\cdot,\cdot]$. That is, $f(x)= f(x_0) + (x-x_0)^t\nabla f(x_0) +
   (x-x_0)^t \int_0^1\int_0^1 \nabla^2 vf(x_0 + uv(x-x_0))~du~dv (x-x_0)$.
 \item use the fact that $\theta_0[i,j]=\theta_0[i,\cdot]\theta_0[\cdot,j]$.
 \item use convergence in probability of $\widetilde{T}/n$ to $\theta_0$  to deduce that
   $n/[n\theta_0[i,j] + uv(\widetilde{T}[i,j]-n\theta_0[i,j])]\rightarrow 1/\theta_0[i,j] $ in probability.
 \end{itemize}
 
 {\small
 \begin{align*}
 &\widetilde{LR}\\
 =& 2\sum\limits_{ij}n\theta_0[i,j]\log(n\theta_0[i,j]) - 2\sum\limits_i n\theta_0[i,\cdot]\log(n\theta_0[i,\cdot])
 - 2\sum\limits_j n\theta_0[\cdot,j]\log(n\theta_0[\cdot,j]) \\
 & + 2n \theta_0[\cdot,\cdot]\log(n\theta_0[\cdot,\cdot])
  +2\sum\limits_{ij}(\widetilde{T}[i,j] - n\theta_0[i,j])(1+\log(n\theta_0[i,j]))\\
 & - 2\sum\limits_i (\widetilde{T}[i,\cdot]-n\theta_0[i,\cdot])(1+\log(n\theta_0[i,\cdot])) 
  - 2\sum\limits_j (\widetilde{T}[\cdot,j]-n\theta_0[\cdot,j])(1+\log(n\theta_0[\cdot,j])) \\
 & + 2(\widetilde{T}[\cdot,\cdot] - n \theta_0[\cdot,\cdot])(1+\log(n\theta_0[\cdot,\cdot]))\\
 & +2\sum\limits_{ij}\int_0^1\int_0^1 v\frac{(\widetilde{T}[i,j]-n\theta_0[i,j])^2}{n\theta_0[i,j] + uv(\widetilde{T}[i,j]-n\theta_0[i,j])}~du~dv
 -2\sum\limits_{i}\int_0^1\int_0^1 v\frac{(\widetilde{T}[i,\cdot]-n\theta_0[i,\cdot])^2}{n\theta_0[i,\cdot] + uv(\widetilde{T}[i,\cdot]-n\theta_0[i,\cdot])}~du~dv\\
 & -2\sum\limits_{j}\int_0^1\int_0^1 v\frac{(\widetilde{T}[\cdot,j]-n\theta_0[\cdot,j])^2}{n\theta_0[\cdot,j] + uv(\widetilde{T}[\cdot,j]-n\theta_0[\cdot,j])}~du~dv
 + 2\int_0^1\int_0^1 v\frac{(\widetilde{T}[\cdot,\cdot]-n\theta_0[\cdot,\cdot])^2}{n\theta_0[\cdot,\cdot] + uv(\widetilde{T}[\cdot,\cdot]-n\theta_0[\cdot,\cdot])}~du~dv\\
 =& 2\sum\limits_{ij}\int_0^1\int_0^1 v\frac{(\widetilde{T}[i,j]-n\theta_0[i,j])^2}{n\theta_0[i,j] + uv(\widetilde{T}[i,j]-n\theta_0[i,j])}~du~dv
 -2\sum\limits_{i}\int_0^1\int_0^1 v\frac{(\widetilde{T}[i,\cdot]-n\theta_0[i,\cdot])^2}{n\theta_0[i,\cdot] + uv(\widetilde{T}[i,\cdot]-n\theta_0[i,\cdot])}~du~dv\\
 & -2\sum\limits_{j}\int_0^1\int_0^1 v\frac{(\widetilde{T}[\cdot,j]-n\theta_0[\cdot,j])^2}{n\theta_0[\cdot,j] + uv(\widetilde{T}[\cdot,j]-n\theta_0[\cdot,j])}~du~dv
 + 2\int_0^1\int_0^1 v\frac{(\widetilde{T}[\cdot,\cdot]-n\theta_0[\cdot,\cdot])^2}{n\theta_0[\cdot,\cdot] + uv(\widetilde{T}[\cdot,\cdot]-n\theta_0[\cdot,\cdot])}~du~dv\\
 \sim& \sum\limits_{ij}\left(\frac{\widetilde{T}[i,j]-n\theta_0[i,j]}{\sqrt{n}}\right)^2\frac{1}{\theta_0[i,j]}
  -\sum\limits_{i}\left(\frac{\widetilde{T}[i,\cdot]-n\theta_0[i,\cdot]}{\sqrt{n}}\right)^2\frac{1}{\theta_0[i,\cdot]}\\
&  -\sum\limits_{j}\left(\frac{\widetilde{T}[\cdot,j]-n\theta_0[\cdot,j]}{\sqrt{n}}\right)^2\frac{1}{\theta_0[\cdot,j]} 
  + \left(\frac{\widetilde{T}[\cdot,\cdot]-n\theta_0[\cdot,\cdot]}{\sqrt{n}}\right)^2\frac{1}{\theta_0[\cdot,\cdot]}\\
\sim& \sum\limits_{ij}\frac{\left(A[i,j]+\kappa V^*[i,j]\right)^2}{\theta_0[i,j]}
  -\sum\limits_{i}\frac{\left(A[i,\cdot]+\kappa V^*[i,\cdot]\right)^2}{\theta_0[i,\cdot]}
 -\sum\limits_{j}\frac{\left(A[\cdot,j]+\kappa V^*[\cdot,j]\right)^2}{\theta_0[\cdot,j]} 
  + \frac{\left(A[\cdot,\cdot]+\kappa V^*[\cdot,\cdot]\right)^2}{\theta_0[\cdot,\cdot]}
 \end{align*}
}

 Where the last two lines follow from: (a) noting that under the null hypothesis
 $\theta_0[i,j]=\theta_0[i,\cdot]\theta_0[\cdot,j]$, (b) convergence in probability
 (e.g., $\frac{\widetilde{T}[1,1]}{\sum_{ij} \widetilde{T}[i,j]}\rightarrow \theta_0[1,1]$), (c)
 Lemma~\ref{lem:clt} and (d) Slutsky's theorem \citep{largesamplebook}. 

 The theorem follows for the likelihood ratio statistic. The asymptotic equivalence follows from convergence in probability. Together with
 Theorem~\ref{thm:chilr}, the theorem also follows for the chi-squared statistic.
\end{proof}

\section{Proof of Theorem \lowercase{\ref{thm:homo}}}\label{app:homo}
\thmhomo*

\begin{proof}
Note that $E_1[i] = \frac{n_1(\widetilde{T}[j]+\widetilde{S}[j])}{n_1+n_2}$ and $E_2[i]=\frac{n_2(\widetilde{T}[j]+\widetilde{S}[j])}{n_1+n_2}$ and $\sum_i E_1[i]+\sum_j E_2[j] = \sum_i \widetilde{T}[i] + \sum_j \widetilde{S}[j]$ (which we use when applying Theorem \ref{thm:chilr}).

According to Definition~\ref{def:homogeneity}, the chi-squared statistic based
on $\widetilde{T}$ and $\widetilde{S}$ is:
$$\widetilde{\chi^2}=\sum_j\frac{\left(
    \widetilde{T}[j]-\dfrac{n_1(\widetilde{T}[j]+\widetilde{S}[j])}{n_1+n_2}
  \right)^2}{\dfrac{n_1(\widetilde{T}[j]+\widetilde{S}[j])}{n_1+n_2}} +
\sum_j\frac{\left(
    \widetilde{S}[j]-\dfrac{n_2(\widetilde{T}[j]+\widetilde{S}[j])}{n_1+n_2}
  \right)^2}{\dfrac{n_2(\widetilde{T}[j]+\widetilde{S}[j])}{n_1+n_2}}$$


\begin{align*}
&\widetilde{\chi^2} \\
=& 
\sum\limits_j\left[ \dfrac{n_1+n_2}{n_1\zdj}\left( \widetilde{T}[j]-\dfrac{n_1\zdj}{n_1+n_2} \right)^2 + \dfrac{n_1+n_2}{n_2\zdj}\left( \widetilde{S}[j]-\dfrac{n_2\zdj}{n_1+n_2} \right)^2 \right]
\\
=& \sum\limits_j\left[ \dfrac{n_1+n_2}{n_1\zdj}\left( \dfrac{n_2\widetilde{T}[j]}{n_1+n_2}-\dfrac{n_1\widetilde{S}[j]}{n_1+n_2} \right)^2 + \dfrac{n_1+n_2}{n_2\zdj}\left( \dfrac{n_1\widetilde{S}[j]}{n_1+n_2}-\dfrac{n_2\widetilde{T}[j]}{n_1+n_2} \right)^2 \right]
\\
=& \sum\limits_j\left( \dfrac{1}{n_1\myn\zdj} + \dfrac{1}{n_2\myn\zdj} \right)\left( n_2\widetilde{T}[j] - n_1\widetilde{S}[j] \right)^2
\\
=& \sum\limits_j\dfrac{n_1n_2}{\zdj}\left( \dfrac{\widetilde{T}[j]}{n_1}-\thetaj+\thetaj-\dfrac{\widetilde{S}[j]}{n_2} \right)^2
\\
=& \sum\limits_j\dfrac{n_1n_2}{\myn}\dfrac{1}{\dfrac{\zdj}{\myn}}\left[ \dfrac{1}{\sqrt{n_1}}\whichzj{1}{\widetilde{T}} - \dfrac{1}{\sqrt{n_2}}\whichzj{2}{\widetilde{S}} \right]^2
\\
=& \sum\limits_j\dfrac{1}{\dfrac{\zdj}{\myn}}\left[ \whichn{2}\whichzj{1}{\widetilde{T}}- \whichn{1}\whichzj{2}{\widetilde{S}} \right]^2
\\
\sim & \sum\limits_j\frac{1}{\thetaj}\left[ \whichn{2}\left(A_1[j]+V^*_1[j]\right) - \whichn{1}\left(A_2[j]+V^*_2[j]\right)\right]^2
\end{align*}
where the last line follows from a) convergence in probability $\frac{(\widetilde{T}[j]+\widetilde{S}[j])}{n_1+n_2}\rightarrow \text{\probvec{}}[j]$, (b) Lemma~\ref{lem:clt} in Appendix~\ref{app:indep}), and (c) Slutsky's
theorem \citep{largesamplebook}.

The theorem follows for the chi-squared statistic. Together with
Theorem~\ref{thm:chilr}, the theorem also follows for the likelihood ratio statistic. The asymptotic equivalence also follows from convergence in probability.
\end{proof}

\section{Proof of Theorem \lowercase{\ref{thm:good}}}\label{app:good}
\thmgood*

\begin{proof}
According to Definition~\ref{def:goodness}, the chi-squared statistic based on $\widetilde{T}$ is:
\begin{align*}
\widetilde{\chi^2}&=\sum_j\frac{(\widetilde{T}[j]-n\text{\probvec{}}[j])^2}{n\text{\probvec{}}[j]}
=\sum_j\frac{\left( \dfrac{\widetilde{T}[j]-n\text{\probvec{}}[j]}{\sqrt{n}} \right)^2}{\text{\probvec{}}[j]}
\sim\sum_j\frac{\left( A[j]+\kappa V^*[j] \right)^2}{\text{\probvec{}}[j]}
\end{align*}
because of Lemma~\ref{lem:clt} in
Appendix~\ref{app:indep}.

The theorem follows for the chi-squared statistic. Together with
Theorem~\ref{thm:chilr}, the theorem also follows for the likelihood ratio statistic.
\end{proof}

\newpage
\section{Datasets Used in Experiments}\label{app:datasets}
\textcolor{black}{ We provide details for the $5$ datasets used in the
  experiments in this section.}

\textcolor{black}{ The first dataset (Czech) was used
  in~\cite{fienberg2010differential} and it was collected from all men employed
  in a Czech car factory at the beginning of a $15$ year follow-up study. It was
  used to study the risk factors for coronary thrombosis. Its sample size is
  $1841$. There are $6$ binary attributes in the dataset. A: smoking, B:
  strenuous mental work, C: strenuous physical work, D: systolic blood pressure,
  E: ratio of $\beta$ and $\alpha$ lipoproteins, F: family anamnesis of coronary
  heart disease.}

\textcolor{black}{ The second dataset (Rochdale) was also used
  in~\cite{fienberg2010differential} and it contains information from $665$
  households in Rochdale, UK. It was used to study the factors which influence
  whether a wife is economically active or not. Its sample size is $665$. There
  are $8$ binary attributes in the dataset. A: wife employed? (yes, if wife is
  economically active), B: wife's age $>38$? C: husband employed? D: child?
  (yes, if there is a child of age $<4$ in the household), E: wife's education
  is O-level+? F: husband's education is O-level+? G: Asian origin? H: household
  working? (yes, if any other member than wife or husband of the household is
  working).}

\textcolor{black}{ The third dataset was used in~\cite{yang2012differential}. It
  is a synthetic dataset which contains information about home zone, work zone
  and income category of individuals. It was formed using an ad hoc privacy
  approach for data extracted from a $2000$ census database. Its sample size is
  $2291$. There are $3$ categorical variables, with $4$ zones of origin (Home
  Zone), $4$ zones of destination (Work Zone) and $16$ income categories (Income
  Category).}

\textcolor{black}{ The fourth dataset was used in~\cite{wright2014intuitive} and
  contains data from the $1972$ National Opinion Research Center General Society
  Survey about white Christians' attitude toward abortion. Its sample size is
  $1055$. There are $3$ categorical variables with $3$ religions (Religion), $3$
  groups of education years (Education) and $3$ attitudes (Attitudes).}

\textcolor{black}{ The fifth dataset contains all NYC yellow taxi trip data in
  $2014$ (available at
  \url{http://www.nyc.gov/html/tlc/html/about/trip_record_data.shtml}). This
  dataset has a sample size of $165,114,361$. There are $2$ categorical
  variables: passenger count (Passenger Count) and payment type (Payment Type).
  Fig.~\ref{tab:nyctaxi} summarizes the dataset.}

  \begin{figure}
    \centering
    \begin{tabular}{|c|ccc|}
      \hline
      \multicolumn{1}{|c|}{\multirow{2}{*}{Passenger Count}} & \multicolumn{3}{c|}{Payment Type} \\ \cline{2-4}
      \multicolumn{1}{|c|}{} & CRD & CSH & Others \\ \hline
      1 & 68685857 & 46625277 & 980220 \\
      2 & 12711902 & 10180961 & 166088 \\
      3-4 & 5232235 & 5043192 & 82001 \\
      Others & 8941327 & 6318250 & 147051 \\ \hline
    \end{tabular}
    \caption{2014 NYC yellow taxi trip data (There are $2$ categorical
      variables: passenger count and payment type). Its sample size is
      $165,114,361$.}
    \label{tab:nyctaxi}
  \end{figure}

\clearpage
\section{Permutation Testbed}\label{sec:permutation}
The likelihood ratio and $\chi^2$ statistics are general-purpose statistical tools that can be adapted to a variety of tests. However, it is likely that some unknown privacy-specific test statistics could outperform them. Finding such test statistics is an open problem and, as we have seen, approximating their null distributions will generally not be easy. It would be very helpful to be able to estimate how well a new test statistic could perform \emph{on real data} before figuring out all of these mathematical details. This is the purpose of our permutation-based testbed.

First, we need to choose an application that is rich enough to exhibit variations between various test statistics. Naively, one could select the goodness-of-fit test since it is possible to exactly sample from its null distribution (see Section \ref{sec:asymptotics}). However, goodness-of-fit is too simple: rejecting the null hypothesis is the same as rejecting 1 possible distribution for the data. On the other hand, independence testing is a much richer scenario. The null hypothesis (independence of rows and columns) consists of infinitely many possible distributions (those where rows and columns are independent) and rejecting the null hypothesis means rejecting all of these distributions.

We argue that the ideal testbed is differentially-private independence testing when row and column sums are known (in other words, it protects information about individuals modulo what can be learned from the marginals). In practice, releasing row and column sums could cause a breach of privacy. Hence, this is only a method for exploring how test statistics would behave under statistical disclosure control (and for most applications it is not to be used for actually releasing analytical results). As a testbed, it has several appealing properties (which we explain in this section):
\begin{list}{$\bullet$}{
\setlength{\itemsep}{0pt}
\setlength{\topsep}{3pt}
\setlength{\parsep}{3pt}
\setlength{\partopsep}{0pt}
\setlength{\leftmargin}{1em}}
\item The standard permutation test of independence provides a null sampling distribution that works with any test statistic, hence there is no need to derive asymptotic approximations for the testbed.
\item Real-data experimental results on input-perturbation methods would carry over straightforwardly to the unrestricted case (i.e. unknown row and column sums).
\item Real-data experimental results on output-perturbation methods would be a lower bound on the unrestricted case, hence allowing experimenters to rule out statistics that do not perform well in the testbed.
\end{list}

First, we explain how to do differentially-private hypothesis testing when row and column sums are known in Section \ref{subsec:testbedexplain}. Then we discuss how experimental results would carry over to the unrestricted case in Section \ref{subsec:testbedcarry}. We illustrate its use in Section \ref{subsec:testbedusage}. We present experiments in Section \ref{subsec:expperm} to compare likelihood ratio and $\chi^2$  with other statistics to confirm our intuition that there exist other test statistics that are better suited for privacy. Earlier work by \cite{uhler2013privacy} and  \cite{yu2014scalable} studied differential privacy when only exact column sums (but not row sums) are known -- this scenario would not be suitable for a testbed as the exact null distribution cannot be sampled from. 

\subsection{Private Independence Testing with Known Marginals}\label{subsec:testbedexplain}
Consider a set of records $D=\set{x_1,\dots, x_n}$ and suppose the records have two distinguished categorical attributes, which we call $R$ and $C$. We can construct a table $T[\cdot,\cdot]$ where $T[i,j]$ is the number of records with $R=i$ and $C=j$. Although $T$ is not public knowledge, suppose its row and column sums are known (i.e. for any $i,j$, $T[\bullet, j]$ and $T[i,\bullet]$ are public). We are interested in publishing the rest of the information in $T$ in a private manner that does not leak any more information beyond what the row and column sums already revealed. The privacy definition that allows us to do this was proposed by \cite{pufferfish} -- it ends up being a variant of differential privacy based on a concept called \emph{marginal neighbors}.

\begin{definition}\emph{(Marginal Neighbors \citep{pufferfish}).}\label{def:neighbors}
Two datasets $D_1, D_2$ are marginal neighbors if $D_2$ can be obtained from $D_1$ by swapping some of the attributes between $2$ records from $D_1$. Two tables $T_1[\cdot,\cdot], T_2[\cdot,\cdot]$ are marginal neighbors if they are tabulated from datasets that are marginal neighbors. 
\end{definition}

\begin{definition}\emph{($\epsilon$-MN-Differential Privacy \citep{pufferfish}).}\label{def:neighdp}
$\randalg$ satisfies $\epsilon$-mn-differential privacy if for all $T$ and $T^\prime$ that are marginal neighbors (and have the same row and column sums as the true data) and for  all $V \subseteq \range(\randalg)$,
$$P(\randalg(T) \in V) \leq e^\epsilon  P(\randalg(T^\prime) \in V)$$
\end{definition}

Achieving $\epsilon$-mn-differential privacy is straightforward:
\begin{lemma}[\cite{pufferfish}]\label{lem:mnlaplace}
Given a vector-valued function $h$, $\epsilon$-mn-differential privacy can be achieved by adding independent Laplace$(s_h/\epsilon)$ noise to each component of $h(T)$, where $s_h\geq \max ||h(T_1)-h(T_2)||_1$ (and the max is over all marginal neighbors $T_1,T_2$ having the same row/column sums as the true data). In particular, if $h(T)$ just outputs the table $T$, then $s_h=4$.
\end{lemma}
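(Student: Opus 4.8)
The plan is to split the argument into the standard Laplace-mechanism density comparison and a short combinatorial computation of the marginal-neighbor sensitivity $s_h$.

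First I would establish privacy directly from densities. Fix marginal neighbors $T, T'$ with the same row and column sums as the true data, write $h(T) = (h_1,\dots,h_d)$ and $h(T') = (h'_1,\dots,h'_d)$, and let $\randalg$ output $h(T) + (W_1,\dots,W_d)$ with the $W_k$ i.i.d.\ $\mathrm{Laplace}(s_h/\epsilon)$. Because the noise coordinates are independent, the output density at $v = (v_1,\dots,v_d)$ is $\prod_k \frac{\epsilon}{2 s_h}\exp(-\epsilon|v_k - h_k|/s_h)$, so the ratio of the density under $T$ to that under $T'$ equals $\exp\big(\frac{\epsilon}{s_h}\sum_k(|v_k - h'_k| - |v_k - h_k|)\big) \le \exp\big(\frac{\epsilon}{s_h}\sum_k |h_k - h'_k|\big) = \exp\big(\frac{\epsilon}{s_h}\|h(T)-h(T')\|_1\big) \le e^\epsilon$, using the triangle inequality coordinatewise and then $\|h(T)-h(T')\|_1 \le s_h$. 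Integrating this pointwise bound over any measurable $V \subseteq \range(\randalg)$ yields $P(\randalg(T)\in V) \le e^\epsilon P(\randalg(T')\in V)$, which is exactly Definition~\ref{def:neighdp}.

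Second, for the identity function $h(T) = T$ I would bound $\|T_1 - T_2\|_1$ over all marginal neighbors by a case analysis on which attributes are swapped. By Definition~\ref{def:neighbors}, $T_2$ arises from $T_1$ by picking two records and swapping the values of some subset $\mathcal A$ of their attributes; only $\mathcal A\cap\{R,C\}$ (where $R,C$ are the two attributes indexing the table) affects $T$. If $\mathcal A\cap\{R,C\}=\emptyset$ the table is unchanged; if $\mathcal A\supseteq\{R,C\}$ the two records merely exchange cells, so $T$ is again unchanged. The only nontrivial case is $\mathcal A\cap\{R,C\}$ a singleton, say $\{R\}$: if the two records lie in cells $(i_1,j_1)$ and $(i_2,j_2)$ of $T_1$, then after the swap $T$ changes by $-1$ at $(i_1,j_1)$ and $(i_2,j_2)$ and by $+1$ at $(i_2,j_1)$ and $(i_1,j_2)$, so $\|T_1-T_2\|_1 \le 4$; moreover row $i_1$ merely moves one unit from column $j_1$ to column $j_2$ (similarly for row $i_2$), so all row and column sums are preserved, confirming that these pairs are precisely the ones quantified over in Definition~\ref{def:neighdp}. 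The case $\mathcal A\cap\{R,C\}=\{C\}$ is symmetric. Hence $s_h = 4$ is a valid sensitivity bound, and it is attained whenever the true data populate two cells in distinct rows and distinct columns.

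I do not expect a genuine obstacle here; the only points requiring care are (i) checking that swapping attributes leaves the marginals fixed, so the maximum defining $s_h$ ranges over exactly the neighbor pairs of Definition~\ref{def:neighdp}, and (ii) keeping the density-ratio manipulation correct for vector-valued $h$ (independence of the coordinates makes the joint density factorize, and the $\ell_1$ norm then appears as the natural aggregate). Everything else is the textbook Laplace-mechanism calculation specialized to the marginal-neighbor relation.
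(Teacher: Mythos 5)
Your proof is correct, and it is the standard argument: the paper itself does not prove this lemma but imports it from \cite{pufferfish}, so the expected justification is exactly your two-step combination of the Laplace-mechanism density-ratio bound with a computation of the marginal-neighbor sensitivity. Your characterization of a marginal-neighbor move as $-1$ at cells $(i_1,j_1),(i_2,j_2)$ and $+1$ at $(i_1,j_2),(i_2,j_1)$ (hence $\ell_1$ distance at most $4$, with marginals preserved) matches the characterization the paper uses in its Appendix sensitivity proofs, e.g.\ for Theorem~\ref{thm:absolutediffsensitivity}.
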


Given a test statistic $h$ and a real table $T$, the testbed works as follows:
\begin{list}{$\bullet$}{
\setlength{\itemsep}{0pt}
\setlength{\topsep}{3pt}
\setlength{\parsep}{3pt}
\setlength{\partopsep}{0pt}
\setlength{\leftmargin}{1em}}
\item[1.] If one is interested in input perturbation, compute the private statistic value $t^* = h(T + V_\epsilon)$, where $V_\epsilon$ is a random table that ensures $\epsilon$-mn-differential privacy (e.g., a table of independent Laplace$(4/\epsilon)$ random variables). For output perturbation, set $t^*=h(T) + V_\epsilon$, where $V_\epsilon$ is a noisy random variable (such as Laplace$(s_h/\epsilon)$) that ensures $\epsilon$-mn-differential privacy.
\item[2.] Generate multiple pseudo-tables $T_1,\dots, T_m$. Conceptually we do this by creating two urns: urn $U_r$ containing $T[1,\bullet]$ balls labeled "1", $T[2,\bullet]$ balls labeled "2", etc.; and urn $U_c$ containing $T[\bullet,1]$ balls labeled "1", etc.
Generate $n$ samples $r_1,\dots, r_n$ from $U_r$ without replacement, generate $c_1,\dots,c_n$ from $U_c$ without replacement. Set the data $D_i=\set{(r_1,c_1), (r_2,c_2), \dots}$ and tabulate $T_i$ from $D_i$. These are samples from the null hypothesis of independence and are probabilistically equivalent to randomly permuting the $R$ attribute in the original data $D$ \citep{permutationtest}. 
\item[3.] Compute the test statistic value $t_i$ for each of the $T_i$ using the exact same procedure as in Step 1, but using fresh noise and operating on $T_i$ instead of $T$.
\item[4.] Set the $p$-value to be $|\set{t_i:~ t_i\geq t^*}|/m$.
\end{list}

\begin{theorem}The testbed satisfies $\epsilon$-mn-differential privacy.\end{theorem}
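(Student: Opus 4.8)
The plan is to reduce the statement to the post-processing closure of $\epsilon$-mn-differential privacy, after identifying the unique step of the testbed that reads the private table $T$ rather than its public row/column sums. That step is Step~1. In the input-perturbation variant, $T + V_\epsilon$ with $V_\epsilon$ a table of independent $\mathrm{Laplace}(4/\epsilon)$ entries is precisely the mechanism of Lemma~\ref{lem:mnlaplace} applied to the identity function on tables, which has $\ell_1$-sensitivity $4$ over marginal neighbors; hence its output is $\epsilon$-mn-differentially private. In the output-perturbation variant, $h(T) + V_\epsilon$ with $V_\epsilon \sim \mathrm{Laplace}(s_h/\epsilon)$ is the Laplace mechanism for $h$, and Lemma~\ref{lem:mnlaplace} again yields $\epsilon$-mn-differential privacy as long as $s_h \geq \max\|h(T_1)-h(T_2)\|_1$ over marginal neighbors with the true marginals, which is exactly the calibration prescribed in Step~1.

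I would then show that Steps~2--4 form a randomized post-processing of the Step~1 output that never revisits $T$. The key point is that the urns $U_r, U_c$ used to generate the pseudo-tables $T_1,\dots,T_m$ are built entirely from the row sums $T[i,\bullet]$ and column sums $T[\bullet,j]$, which, by Definition~\ref{def:neighbors} and the quantification in Definition~\ref{def:neighdp}, are identical for all tables considered under the mn-neighboring relation. So the law of $(T_1,\dots,T_m)$ is a fixed, data-independent distribution. The fresh noise added in Step~3 is independent of the Step~1 noise and of everything else, and the $p$-value in Step~4 is a deterministic function of $(t^*, t_1,\dots,t_m)$. Conditioned on the Step~1 release, the $p$-value is therefore produced by a fixed randomized map whose inputs are the Step~1 release, the public marginals, and independent auxiliary randomness only.

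The theorem then follows by composition: if $\mathcal{A}$ is $\epsilon$-mn-differentially private and $g$ is any randomized map applied to its output and independent of the input, then $g \circ \mathcal{A}$ is $\epsilon$-mn-differentially private. This is immediate from Definition~\ref{def:neighdp} by the same argument that establishes post-processing for ordinary differential privacy (integrate the defining inequality against the post-processing kernel), so I would state it in a single line rather than reprove it, and conclude for both testbed variants at once.

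I do not expect a serious obstacle; the one place that warrants explicit justification is the assertion that Steps~2--4 leak nothing beyond Step~1, which hinges entirely on marginal neighbors having identical row and column sums so that the urn construction is data-independent. If one instead used the unrestricted neighboring relation of ordinary differential privacy (where the marginals may change), this argument would fail -- and indeed the testbed is deliberately private only in the mn-sense that matches the ``known marginals'' setting. I would make this scope explicit in the write-up.
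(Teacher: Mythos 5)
Your proposal is correct and follows the same route as the paper's proof: Step~1 is $\epsilon$-mn-differentially private by construction of the Laplace mechanism over marginal neighbors, and Steps~2--4 are post-processing because the urn construction depends only on the marginals, which are public and identical across mn-neighbors. The paper states this more tersely, but your added justification of why the pseudo-table generation is data-independent is exactly the right point to make explicit.
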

\begin{proof}
Step 1 satisfies $\epsilon$-mn-differential privacy by construction and the rest of the steps just use public data (hence they are just post-processing steps). By the post-processing property \citep{dwork2006calibrating}, the whole procedure satisfies $\epsilon$-mn-differential privacy.
\end{proof}

\subsection{Translating Experimental Results}\label{subsec:testbedcarry}

Let us compare input perturbation noise for $\epsilon$-differential privacy and for $\epsilon$-mn-differential privacy. We need Laplace$(2/\epsilon)$ noise for the former and Laplace$(4/\epsilon)$ noise for the latter (i.e. the tables are twice as noisy). Thus $\epsilon$-differential privacy and $2\epsilon$-mn-differential privacy use the same amount of noise and therefore would generate the same values for the test statistic $t^*$ (in the case of input perturbation), so the $p$-value one gets under this testbed using  $2\epsilon$-mn-differential privacy  should correspond to the $p$-value an experimenter would have gotten under $\epsilon$-differential privacy (had statistical details, such as approximating the null distribution with \emph{unknown} row/columns sums, been worked out in advance).

For output perturbation, we add Laplace$(s_h/\epsilon)$ noise for $\epsilon$-mn differential privacy (where $s_h$ is defined in Lemma \ref{lem:mnlaplace}) and Laplace$(\mathcal{S}(h)/\epsilon)$ noise for $\epsilon$-differential privacy (where $\mathcal{S}(h)$ is defined in Definition \ref{def:sensitivity}). Thus the noise added under this testbed using $\frac{s_h\epsilon}{\mathcal{S}(h)}$-mn differential privacy is equal to the noise added under $\epsilon$-differential privacy (and hence sets up the correspondence between the resulting $p$-values). What if one hasn't yet fully worked out the sensitivity $\mathcal{S}(h)$ under $\epsilon$-differential privacy? In this case, the statements are slightly less precise, but by comparing the following: 
\begin{align*}
s_h &= \max_{T_1,T_2 \text{ are marginal neighbors} } ||h(T_1) - h(T_2)||_1 \\
s^* &= \max_{T_1, T_2: \text{ underlying datasets differ on 2 records}} ||h(T_1) - h(T_2)||_1\\
\mathcal{S}(h) &= \max_{T_1, T_2:\text{ underlying datasets differ on 1 record}} ||h(T_1) - h(T_2)||_1
\end{align*}
it follows directly from the definitions (and the triangle inequality applied to the $L_1$ norm) that $s_h \leq s^* \leq 2\mathcal{S}(h)$. This means Laplace$(s_h/2\epsilon)$ has less variance than Laplace$(\mathcal{S}(h)/\epsilon)$ and so the quality of $p$-values for output perturbation under $2\epsilon$-mn-differential privacy are expected to be a lower bound on the quality for $\epsilon$-differential privacy. We note that typically,  $s_h$ can be much smaller than $s^*$ while $\mathcal{S}(h)\approx s^*/2$.

\subsection{Usage}\label{subsec:testbedusage}
We will use our permutation testbed to compare the likelihood ratio and $\chi^2$ statistics (with both input and output perturbation) to two other statistics that are rarely, if ever, used for independence testing in the non-private case, log-likelihood (LL) and absolute difference between actual count and expected count (Diff):
{\small
\begin{align}
LL &= -\Big[\sum_{i=1}^r \log(T[i,\bullet]!) + \sum_{j=1}^c \log(T[\bullet,j]!)
- \log(n!) - \sum_{i,j} \log(T[i,j]!)\Big]\label{eqn:lldef}\\
\text{Diff} &= \sum_{i,j} \left|T[i,j]-\frac{T[i,\bullet]T[\bullet,j]}{n}\right|\label{eqn:diffdef}
\end{align}
}

An important point of distinction is that for input perturbation, we will be using the noisy values $\widetilde{T}[i,j]$, $\widetilde{T}[i,\bullet]$ (equal to $\sum_j \widetilde{T}[i,j]$) and $\widetilde{T}[\bullet,j]$ to compute the statistics instead of the true values $T[i,j]$, $T[i,\bullet]$, $T[\bullet,j]$. This is because in the unrestricted case, the true values will not be available once the input has been perturbed.

In contrast, for output perturbation, we will use the true values of $T[i,\bullet], T[\bullet,j]$ to compute the statistics (since they are public in $\epsilon$-mn-differential privacy). This is because the output perturbation results would be a lower bound to the quality we should expect in the unrestricted case, and the sensitivity $s_h$ using this method is easier to compute (another advantage of this testbed!).

We now provide $s_h$ calculations for the output perturbation versions of the test statistics. In some cases, $s_h$ depends on the dimensions of $T$. 
\ConfOrTech{Due to space constraints, we present the most important $s_h$ results. The full results and proofs can be found in the long version \cite{wang2015differentially}.}{Proofs can be found in Appendix \ref{sec:sensitivityproofs}.} One important fact to note is that the Diff statistic has lowest $s_h$ and so, intuitively, is expected to perform well in the privacy setting.

\techreport{
\begin{restatable}{theorem}{cssensitivitytwo}
\label{thm:chi2sensitivity2by2}
 The $s_h$ value of the $\chi^2$-statistic for a $2 \times 2$ contingency tables is: 
 {\small
 \begin{align*}
  \max
  \begin{cases}
   C\left|n-2T[\cdot,2]T[1,\cdot]\right| & \text{if $T[1,\cdot] \leq T[\cdot,1]$, $T[2,\cdot] \geq T[\cdot,2]$} \\
   C\left|n-2T[\cdot,1]T[2,\cdot]\right| & \text{if $T[1,\cdot] > T[\cdot,1]$, $T[2,\cdot] < T[\cdot,2]$} \\
   C\left|n-2T[\cdot,1]T[1,\cdot]\right| & \text{if $T[1,\cdot] \leq T[\cdot,2]$, $T[2,\cdot] \geq T[\cdot,1]$} \\
   C\left|n-2T[\cdot,2]T[2,\cdot]\right| & \text{if $T[1,\cdot] > T[\cdot,2]$, $T[2,\cdot] < T[\cdot,1]$}
  \end{cases}
 \end{align*}
 }
 where $C=\frac{n^2}{T[\cdot,1]T[\cdot,2]T[1,\cdot]T[2,\cdot]}$.
\end{restatable}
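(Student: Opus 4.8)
The plan is to collapse the maximization over marginal neighbors into a one‑parameter optimization. Fix the four margins of the true table, $r_1 = T[1,\cdot]$, $r_2 = T[2,\cdot]$, $c_1 = T[\cdot,1]$, $c_2 = T[\cdot,2]$ with $n = r_1+r_2 = c_1+c_2$; every $2\times 2$ table sharing these margins is determined by $k := T[1,1]$ (the other cells being $r_1-k$, $c_1-k$, $n-r_1-c_1+k$), and nonnegativity of the four cells restricts $k$ to $k_{\min} \le k \le k_{\max}$ with $k_{\min} = \max(0, r_1+c_1-n)$ and $k_{\max} = \min(r_1, c_1)$. First I would verify the combinatorial claim that, for a $2\times 2$ table, swapping an attribute between two records either leaves the table unchanged or changes $T[1,1]$ by exactly $\pm 1$; hence, writing $T_k$ for the table with $T[1,1]=k$, the marginal neighbors of $T$ that share its margins are exactly the pairs $(T_k, T_{k+1})$ with $k_{\min}\le k < k_{\max}$, so that $s_h = \max_{k_{\min}\le k<k_{\max}} \bigl|\chi^2(T_{k+1}) - \chi^2(T_k)\bigr|$.

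Next I would bring in the classical closed form of the $\chi^2$‑statistic for $2\times 2$ tables with fixed margins. Writing $D := T[1,1]T[2,2] - T[1,2]T[2,1]$, one has $T[i,j] - E[i,j] = \pm D/n$ for every cell, and $\sum_{i,j} 1/(r_i c_j) = n^2/(r_1 r_2 c_1 c_2)$, so $\chi^2(T_k) = n D_k^2/(r_1 r_2 c_1 c_2)$ where $D_k = kn - r_1 c_1$ is affine in $k$ with $D_{k+1} = D_k + n$. Hence $\chi^2(T_{k+1}) - \chi^2(T_k) = C\,(D_{k+1} + D_k) = C\,(2D_k + n)$ with $C = n^2/(r_1 r_2 c_1 c_2) = n^2/\bigl(T[\cdot,1]T[\cdot,2]T[1,\cdot]T[2,\cdot]\bigr)$, which is a strictly increasing affine function of $k$.

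Since this increment is affine and strictly monotone in $k$, the maximum of its absolute value over the integer interval $[k_{\min}, k_{\max}-1]$ is attained at one of the two endpoints, so it suffices to evaluate $D$ at two extreme tables. At $k = k_{\min}$: if $r_1 + c_1 \le n$ then $k_{\min}=0$, the $(1,1)$ cell is zero, $D = -r_1 c_1$, and the increment is $C\,(n - 2r_1 c_1)$; if $r_1+c_1 > n$ then $k_{\min} = r_1+c_1-n$, the $(2,2)$ cell is zero, $D = -r_2 c_2$, and the increment is $C\,(n - 2r_2 c_2)$. At $k = k_{\max}-1$ (so $k+1 = k_{\max}$): if $r_1 \le c_1$ then $k_{\max}=r_1$, the $(1,2)$ cell of $T_{k_{\max}}$ is zero, $D = r_1 c_2$, and the increment is $C\,(2r_1 c_2 - n)$; if $r_1 > c_1$ then $k_{\max} = c_1$, the $(2,1)$ cell is zero, $D = c_1 r_2$, and the increment is $C\,(2 c_1 r_2 - n)$. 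Finally I would note the equivalences $r_1 \le c_1 \iff r_2 \ge c_2$ and $r_1 + c_1 \le n \iff r_1 \le c_2 \iff r_2 \ge c_1$, which show that exactly one of the first two and exactly one of the last two expressions is actually realized at its endpoint; taking absolute values, rewriting $r_i, c_j$ back as $T[i,\cdot], T[\cdot,j]$, and taking the max of the two realized expressions yields precisely the four‑case formula in the statement.

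The algebra here — the product formula for $\chi^2$, the affine form of $D_k$, and the four boundary evaluations — is elementary. The parts that need care are the combinatorial fact that for $2\times 2$ tables a marginal neighbor changes $T[1,1]$ by exactly one (in particular, that one cannot jump two steps by re‑pairing two records), and the bookkeeping that matches each of the two endpoint tables and its possible zero‑cell shapes with the correct branch of the stated formula; I would also flag the degenerate case $k_{\min} = k_{\max}$, where $T$ has no marginal neighbor and the claim is vacuous.
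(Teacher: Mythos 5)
Your proposal is correct and follows essentially the same route as the paper's proof: both reduce the $\chi^2$-statistic to the closed form $C\,D^2/n$ with $D=T[1,1]T[2,2]-T[1,2]T[2,1]$, observe that a marginal neighbor shifts the diagonal by $\pm 1$ so the difference of statistics is $C\lvert 2D-n\rvert$, and then evaluate this at the extreme tables compatible with the fixed margins, yielding the same four cases. Your one-parameter formulation ($D_k=kn-r_1c_1$ affine in $k=T[1,1]$, hence the increment is monotone and extremized at the endpoints $k_{\min},k_{\max}$) is a cleaner justification of the optimization step that the paper disposes of with ``it is easy to see,'' and your remarks on the degenerate case $k_{\min}=k_{\max}$ and on which branches are simultaneously realizable are correct details the paper leaves implicit.
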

The Proof of Theorem~\ref{thm:chi2sensitivity2by2} is provided in Appendix~\ref{subsec:chi2sensitivity2by2}.
Note that table margins are $O(n)$, so the constant $C$ in Theorem~\ref{thm:chi2sensitivity2by2} is $O(1/n^2)$ and so the $s_h$ value is $O(1)$ for $2\times 2$ tables. However, the chi-squared statistic does not grow with $n$ under the null hypothesis, so the noise can be a significant part of the output.

}

\begin{restatable}{theorem}{cssensitivity}
\label{thm:chi2sensitivity}
 The $s_h$ value of the $\chi^2$-statistic for an $r \times c$ table $T$ with $r \geq 3$, $c \geq 3$ is:
$$
  \max
  \begin{cases}
   \max\limits_{i_1,i_2,j_1,j_2}C'\Big|2(T[i_2,\cdot]T[\cdot,j_2]a+T[i_1,\cdot]T[\cdot,j_1]d)-
   (T[i_1,\cdot]+T[i_2,\cdot])(T[\cdot,j_1]+T[\cdot,j_2])\Big|/n
   \\
   \max\limits_{i_1,i_2,j_1,j_2}C'\Big|(T[i_1,\cdot]-T[i_2,\cdot])(T[\cdot,j_1]-T[\cdot,j_2])-
    2(T[i_2,\cdot]T[\cdot,j_1]b+T[i_1,\cdot]T[\cdot,j_2]c)\Big|/n
  \end{cases}
$$
 where $a=\min(T[i_1,\cdot],T[\cdot,j_1])$, $d=\min(T[i_2,\cdot],T[\cdot,j_2])$, $b=\min(T[i_1,\cdot],T[\cdot,j_2])-1$, $c=\min(T[i_2,\cdot],T[\cdot,j_1])-1$ and $C'=\frac{n^2}{T[\cdot,j_1]T[\cdot,j_2]T[i_1,\cdot]T[i_2,\cdot]}$.
\end{restatable}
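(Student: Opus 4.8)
The plan is to reduce the computation of $s_h$ to a small \emph{affine} optimization problem over the four entries of a $2\times 2$ ``window'' of the table, and then to solve that optimization by inspection of its vertices.

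\textbf{Step 1 (structure of marginal-neighbor moves).} Since the two distinguished attributes take values in $\{1,\dots,r\}$ and $\{1,\dots,c\}$, swapping attributes between two records either leaves $T$ unchanged or amounts to choosing distinct rows $i_1\ne i_2$, distinct columns $j_1\ne j_2$, and a sign $\sigma\in\{+1,-1\}$, and replacing $T$ by $T'=T+\sigma M$ where $M[i_1,j_1]=M[i_2,j_2]=1$, $M[i_1,j_2]=M[i_2,j_1]=-1$, and $M$ vanishes elsewhere. Every such move preserves all row and column sums, and it is admissible precisely when the two entries that it decrements were already $\ge 1$. Hence $s_h=\max|\chi^2(T')-\chi^2(T)|$, where the maximum runs over all tables $T$ with the prescribed margins and all admissible moves.

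\textbf{Step 2 (an affine formula for the change).} Writing $\chi^2(T)=\sum_{i,j}T[i,j]^2/E[i,j]-n$ with the fixed values $E[i,j]=T[i,\bullet]T[\bullet,j]/n$, only the four window entries $T_{11}=T[i_1,j_1],T_{12}=T[i_1,j_2],T_{21}=T[i_2,j_1],T_{22}=T[i_2,j_2]$ change, and $(T[i,j]+\sigma M[i,j])^2-T[i,j]^2=2\sigma M[i,j]T[i,j]+M[i,j]^2$. Abbreviating $r_a=T[i_a,\bullet]$, $c_b=T[\bullet,j_b]$ and $C'=n^2/(r_1r_2c_1c_2)$, a short computation (using $\sum_{\text{window}}1/E=(r_1+r_2)(c_1+c_2)C'/n$ for the constant term) gives
\begin{align*}
\chi^2(T')-\chi^2(T)=\frac{C'}{n}\Big[\,2\sigma\big(r_2c_2T_{11}-r_2c_1T_{12}-r_1c_2T_{21}+r_1c_1T_{22}\big)+(r_1+r_2)(c_1+c_2)\Big].
\end{align*}
This is an affine function of $(T_{11},T_{12},T_{21},T_{22})$ with a strictly positive constant term, so its absolute value is convex and is maximized at an extreme point of the feasible set of window entries.

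\textbf{Step 3 (polytope relaxation and the vertex enumeration).} I would bound the feasible window entries by the ``window polytope'' $T_{ab}\ge 0$, $T_{11}+T_{12}\le r_1$, $T_{21}+T_{22}\le r_2$, $T_{11}+T_{21}\le c_1$, $T_{12}+T_{22}\le c_2$, together with the two unit lower bounds forced by admissibility ($\sigma=+1$ forces $T_{12},T_{21}\ge 1$; $\sigma=-1$ forces $T_{11},T_{22}\ge 1$). The objective's coefficients have fixed signs, so the optimizing vertex is found by pushing each window cell to its extreme: for the choice of $\sigma$ making the linear part large and positive one takes $T_{12}=T_{21}=1$, $T_{11}=\min(r_1,c_1)-1$, $T_{22}=\min(r_2,c_2)-1$; for the other choice one takes $T_{11}=T_{22}=1$, $T_{12}=\min(r_1,c_2)-1=b$, $T_{21}=\min(r_2,c_1)-1=c$. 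In the first case the unit offsets cancel against the constant term and one is left with $C'\big|2(r_2c_2\min(r_1,c_1)+r_1c_1\min(r_2,c_2))-(r_1+r_2)(c_1+c_2)\big|/n$, which is the first branch with $a=\min(r_1,c_1)$, $d=\min(r_2,c_2)$; in the second case the offsets do not fully cancel and one obtains the second branch. The two remaining sign patterns are mirror images of these under $\sigma\mapsto-\sigma$ and produce the same two values. Taking the maximum over all $i_1,i_2,j_1,j_2$ gives the formula.

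\textbf{Step 4 (tightness, the main obstacle).} To upgrade these from upper bounds to the exact $s_h$, I would show each optimizing window extends to a genuine $r\times c$ table with the prescribed margins. This is where $r\ge 3$ and $c\ge 3$ enter: once the $2\times2$ window is fixed, the rest of the table splits into the two strips (rows $i_1,i_2$ against the remaining columns, columns $j_1,j_2$ against the remaining rows) and an $(r-2)\times(c-2)$ block, and because there is a spare row and a spare column the residual supplies and demands can always be matched (a Gale--Ryser / northwest-corner argument); for $r=2$ or $c=2$ this breaks, which is why that case needs the separate Theorem~\ref{thm:chi2sensitivity2by2}. I expect this feasibility/tightness bookkeeping — and in particular tracking which admissibility lower bound ($0$ versus $1$) is active for each $\sigma$, since that is exactly what produces the ``$-1$'' inside $b$ and $c$ but not inside $a$ and $d$ — to be the delicate part of the argument; the rest is the routine affine optimization above.
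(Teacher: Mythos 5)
Your proposal is correct and follows essentially the same route as the paper's proof: restrict attention to the $2\times 2$ window where a marginal-neighbor move acts, observe that the change in $\chi^2$ is affine in the four window entries with the constant term $(T[i_1,\cdot]+T[i_2,\cdot])(T[\cdot,j_1]+T[\cdot,j_2])C'/n$, and push the entries to their extremes under the margin and admissibility bounds to obtain the two branches. The only difference is presentational — you organize the cases by the sign of the move rather than by maximizing versus minimizing the expression inside the absolute value, and you flag explicitly the realizability of the extremal window as a full $r\times c$ table (which is indeed where $r,c\ge 3$ enters), a point the paper dispatches with a single sentence.
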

\techreport{
The proof of Theorem~\ref{thm:chi2sensitivity} is provided in Appendix~\ref{subsec:chi2sensitivity}.
}

\techreport{
A simple analysis shows that the $s_h$ in Theorem \ref{thm:chi2sensitivity} is $O(1)$.

\begin{restatable}{theorem}{lrsensitivitytwo} 
\label{thm:lrsensitivity2by2}
 The $s_h$ value of the likelihood ratio statistic for $2 \times 2$ contingency tables is
\begin{align*}
  2 \times \max
  \begin{cases}
   \left|\log\frac{T[1,\cdot]^{T[1,\cdot]}}{(T[1,\cdot]-1)^{T[1,\cdot]-1}} + \log\frac{T[\cdot,2]^{T[\cdot,2]}}{(T[\cdot,2]-1)^{T[\cdot,2]-1}} +
   \log\frac{(T[2,\cdot]-T[\cdot,2])^{T[2,\cdot]-T[\cdot,2]}}{(T[2,\cdot]-T[\cdot,2]+1)^{T[2,\cdot]-T[\cdot,2]+1}}\right|
   \\
   \text{if $T[1,\cdot] \leq T[\cdot,1]$, $T[2,\cdot] \geq T[\cdot,2]$}
   \\
   \left|\log\frac{T[2,\cdot]^{T[2,\cdot]}}{(T[2,\cdot]-1)^{T[2,\cdot]-1}} + \log\frac{T[\cdot,1]^{T[\cdot,1]}}{(T[\cdot,1]-1)^{T[\cdot,1]-1}} +
   \log\frac{(T[\cdot,2]-T[2,\cdot])^{T[\cdot,2]-T[2,\cdot]}}{(T[\cdot,2]-T[2,\cdot]+1)^{T[\cdot,2]-T[2,\cdot]+1}}\right|
   \\
   \text{if $T[1,\cdot] > T[\cdot,1]$, $T[2,\cdot] < T[\cdot,2]$}
   \\
   \left|\log\frac{(T[1,\cdot]-1)^{T[1,\cdot]-1}}{T[1,\cdot]^{T[1,\cdot]}} + \log\frac{(T[\cdot,1]-1)^{T[\cdot,1]-1}}{T[\cdot,1]^{T[\cdot,1]}} +
   \log\frac{(T[2,\cdot]-T[\cdot,1]+1)^{T[2,\cdot]-T[\cdot,1]+1}}{(T[2,\cdot]-T[\cdot,1])^{T[2,\cdot]-T[\cdot,1]}}\right|
   \\
   \text{if $T[1,\cdot] \leq T[\cdot,2]$, $T[2,\cdot] \geq T[\cdot,1]$}
   \\
   \left|\log\frac{(T[2,\cdot]-1)^{T[2,\cdot]-1}}{T[2,\cdot]^{T[2,\cdot]}} + \log\frac{(T[\cdot,2]-1)^{T[\cdot,2]-1}}{T[\cdot,2]^{T[\cdot,2]}} +
   \log\frac{(T[1,\cdot]-T[\cdot,2]+1)^{T[1,\cdot]-T[\cdot,2]+1}}{(T[1,\cdot]-T[\cdot,2])^{T[1,\cdot]-T[\cdot,2]}}\right|
   \\
   \text{if $T[1,\cdot] > T[\cdot,2]$, $T[2,\cdot] < T[\cdot,1]$}
  \end{cases}
\end{align*}
\end{restatable}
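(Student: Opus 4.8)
The plan is to reduce the marginal-neighbor sensitivity to a one-parameter extremal problem, exploit convexity of the likelihood-ratio statistic to localize the extremum at the endpoints of the parameter range, and then evaluate the two endpoint differences in closed form. Fix the margins $T[1,\cdot],T[2,\cdot],T[\cdot,1],T[\cdot,2]$ of the true table (with $T[1,\cdot]+T[2,\cdot]=T[\cdot,1]+T[\cdot,2]=n$). Every $2\times2$ table with these margins is determined by its $(1,1)$-entry $k$, which ranges over the integers $k_{\min}:=\max(0,\,T[1,\cdot]+T[\cdot,1]-n)\le k\le k_{\max}:=\min(T[1,\cdot],T[\cdot,1])$; write $L(k)$ for its likelihood-ratio statistic. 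Swapping one categorical attribute between two records preserves both margins and changes $T[1,1]$ by $0$ or $\pm1$, and conversely any two tables with these margins whose $(1,1)$-cells differ by $1$ are related by such a swap (when $k<k_{\max}$ the cells $T[1,2]$ and $T[2,1]$ are positive, so suitable records exist). Hence $s_h=\max\{\,|L(k+1)-L(k)|:k_{\min}\le k<k_{\max}\,\}$, with $s_h=0$ in the degenerate case $k_{\min}=k_{\max}$.

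Next I would extend $L$ to real $k\in[k_{\min},k_{\max}]$ with the convention $0\log 0=0$. Since $E[i,j]=T[i,\cdot]T[\cdot,j]/n$, the quantity $\sum_{i,j}T[i,j]\log E[i,j]=\sum_i T[i,\cdot]\log T[i,\cdot]+\sum_j T[\cdot,j]\log T[\cdot,j]-n\log n$ does not depend on $k$, so $L(k)/2=\sum_{i,j}T_k[i,j]\log T_k[i,j]+\mathrm{const}$, where $T_k$ is the table with $(1,1)$-entry $k$. Differentiating gives $L'(k)=2\log\frac{T[1,1]\,T[2,2]}{T[1,2]\,T[2,1]}$ and $L''(k)=2\big(\tfrac1{T[1,1]}+\tfrac1{T[1,2]}+\tfrac1{T[2,1]}+\tfrac1{T[2,2]}\big)>0$ on the open interval, so $L$ is strictly convex. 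Consequently the forward difference $k\mapsto L(k+1)-L(k)=\int_k^{k+1}L'(t)\,dt$ is increasing in $k$, so $|L(k+1)-L(k)|$ is maximized over the feasible range at $k=k_{\min}$ or at $k=k_{\max}-1$.

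It then remains to evaluate these two differences. Using the cancellation above, $\tfrac12\big(L(k)-L(k-1)\big)=\sum_{i,j}\big(T_k[i,j]\log T_k[i,j]-T_{k-1}[i,j]\log T_{k-1}[i,j]\big)$. At $k=k_{\max}$ exactly one interior cell of $T_{k_{\max}}$ vanishes, so its term drops; substituting and using $T[1,\cdot]+T[2,\cdot]=T[\cdot,1]+T[\cdot,2]=n$ to rewrite the remaining entries in terms of the margins gives the first displayed case when $T[1,\cdot]\le T[\cdot,1]$ (so $k_{\max}=T[1,\cdot]$) and the second when $T[1,\cdot]>T[\cdot,1]$ (so $k_{\max}=T[\cdot,1]$). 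The identical computation at $k=k_{\min}$ (where again one cell of $T_{k_{\min}}$ vanishes) gives the third case when $T[1,\cdot]\le T[\cdot,2]$ (so $k_{\min}=0$) and the fourth when $T[1,\cdot]>T[\cdot,2]$ (so $k_{\min}=T[1,\cdot]-T[\cdot,2]$). Exactly one of the first two cases and one of the last two is active, so $s_h=2\max$ of the active expressions, which is Theorem~\ref{thm:lrsensitivity2by2}.

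The hard part will be the reduction in the first paragraph: one has to justify carefully that the marginal neighbors sharing the true margins are precisely the $(1,1)$-cell $\pm1$ perturbations (a single attribute swap never moves the cell by $2$, and every admissible $\pm1$ step is realizable by an actual swap of records), so that the max over neighbor pairs equals the max of consecutive forward differences. Once that is in place the convexity computation localizes the maximum for free, and the endpoint evaluations are routine algebra with logarithms, modulo bookkeeping for the four sign cases and the conventions $0\log 0=0$, $0^0=1$ at degenerate endpoints.
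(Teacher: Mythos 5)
Your proposal is correct and follows essentially the same route as the paper's proof: both reduce the marginal-neighbor sensitivity to the change in the LR statistic as the single free cell $T[1,1]$ moves by one unit, argue that this change is monotone in $T[1,1]$ so that only the two extreme tables need be checked, and then evaluate those two endpoint differences to obtain the four margin-dependent cases. The only difference is cosmetic: you obtain the monotonicity of the forward difference from strict convexity of $k\mapsto L(k)$ (via $L''>0$), whereas the paper gets the same conclusion by observing term-by-term that each of the four summands $\log\frac{x^x}{(x\mp 1)^{x\mp 1}}$ is monotone in its cell value.
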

The proof of Theorem~\ref{thm:lrsensitivity2by2} is provided in Appendix~\ref{subsec:lrsensitivity2by2}.
}

\begin{restatable}{theorem}{lrsensitivity} 
\label{thm:lrsensitivity}
 The $s_h$ of the likelihood ratio statistic LR on $r \times c$ ($r \geq 3$, $c \geq 3$) contingency tables is
 \begin{displaymath}
  2 \times \max
   \left\{\max\limits_{i_1,i_2,j_1,j_2}\left[\log\frac{a^a}{(a-1)^{a-1}} + \log\frac{d^d}{(d-1)^{d-1}}\right],
   \max\limits_{i_1,i_2,j_1,j_2}\left[\log\frac{(b+1)^{b+1}}{b^b} + \log\frac{(c+1)^{c+1}}{c^c}\right]\right\}
 \end{displaymath}
 where $a=\min(T[i_1,\cdot],T[\cdot,j_1])$, $d=\min(T[i_2,\cdot],T[\cdot,j_2])$, $b=\min(T[i_1,\cdot],T[\cdot,j_2])-1$, $c=\min(T[i_2,\cdot],T[\cdot,j_1])-1$.
\end{restatable}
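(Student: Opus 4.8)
The plan is to reduce the sensitivity to a short telescoping identity plus a convexity argument, and then match it with an explicit extremal table.

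\textbf{Step 1: structure of marginal neighbors.} Two tables with the same row and column sums are marginal neighbors exactly when one is obtained from the other by a single $2\times 2$ ``flip'': pick rows $i_1\neq i_2$ and columns $j_1\neq j_2$ and either subtract $1$ from the cells $(i_1,j_1),(i_2,j_2)$ and add $1$ to $(i_1,j_2),(i_2,j_1)$, or the other way around (swapping one attribute between two records has precisely this effect; if $i_1=i_2$ or $j_1=j_2$ nothing changes). The orientation that lowers two given cells is legal only when those two cells hold a value $\geq1$.

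\textbf{Step 2: the statistic difference telescopes.} Marginal neighbors share all margins, so the expected counts $E[i,j]=T[i,\bullet]T[\bullet,j]/n$ are identical for both; moreover, writing $\phi(t)=t\log t$ (with $\phi(0):=0$), one has $LR(T)=2\sum_{i,j}\phi(T[i,j])+c$, where $c=-2\sum_i T[i,\bullet]\log T[i,\bullet]-2\sum_j T[\bullet,j]\log T[\bullet,j]+2n\log n$ depends only on the fixed margins. Hence a flip changes $LR$ only through its four affected cells. With $\delta(k):=\phi(k+1)-\phi(k)=\log\frac{(k+1)^{k+1}}{k^k}$, a flip that lowers the cells holding $w=T[i_1,j_1]$ and $z=T[i_2,j_2]$ (legal iff $w,z\geq1$) and raises the cells holding $x=T[i_1,j_2]$ and $y=T[i_2,j_1]$ changes $LR$ by exactly $2\bigl(\delta(x)+\delta(y)-\delta(w-1)-\delta(z-1)\bigr)$; the opposite orientation produces the negative of the same expression read on the reversed pair. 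So $s_h$ equals the supremum of $2|\delta(x)+\delta(y)-\delta(w-1)-\delta(z-1)|$ over all index $4$-tuples and all legal cell configurations.

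\textbf{Step 3: upper bound by convexity.} Strict convexity of $\phi$ makes $\delta$ nonnegative, zero at $0$, and strictly increasing on the nonnegative integers. Thus $\delta(x)+\delta(y)-\delta(w-1)-\delta(z-1)\leq \delta(x)+\delta(y)$, and since in any legal flip a raised cell still stays below its row and column sums, $x+1\leq\min(T[i_1,\bullet],T[\bullet,j_2])$ and $y+1\leq\min(T[i_2,\bullet],T[\bullet,j_1])$, giving $\delta(x)+\delta(y)\leq\delta(b)+\delta(c)$ with $b,c$ as in the statement. Symmetrically, the negative of the quantity is $\leq\delta(w-1)+\delta(z-1)\leq\delta(a-1)+\delta(d-1)$ because $w\leq a=\min(T[i_1,\bullet],T[\bullet,j_1])$ and $z\leq d=\min(T[i_2,\bullet],T[\bullet,j_2])$. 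Taking absolute values, maximizing over $4$-tuples, and rewriting $\delta(a-1)=\log\frac{a^a}{(a-1)^{a-1}}$ and $\delta(b)=\log\frac{(b+1)^{b+1}}{b^b}$ gives ``$s_h\leq$ the claimed expression.''

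\textbf{Step 4: matching lower bound (the hard part).} It remains to realize both inner maxima. For the $4$-tuple attaining $\max[\delta(a-1)+\delta(d-1)]$ (where necessarily $a,d\geq1$), I would build a table with the prescribed margins in which $T[i_1,j_1]=a$, $T[i_2,j_2]=d$ and $T[i_1,j_2]=T[i_2,j_1]=0$, then flip downward on $(i_1,j_1),(i_2,j_2)$; this changes $LR$ by $-2(\delta(a-1)+\delta(d-1))$. The other inner maximum is realized analogously by a table with $T[i_1,j_1]=T[i_2,j_2]=1$, $T[i_1,j_2]=b$, $T[i_2,j_1]=c$. Producing these tables is the main obstacle: after fixing the four block cells one must complete the remaining $(r-2)$ rows and $(c-2)$ columns together with the residual row and column budgets by a nonnegative table with the correct margins. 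This is where $r,c\geq3$ is essential --- the spare row and column absorb the slack --- and it requires a transportation-/Gale--Ryser-type feasibility argument; for $2\times 2$ tables the interior is pinned down by the margins, so $s_h$ depends on the actual entries, which is why Theorem~\ref{thm:lrsensitivity2by2} takes a different, case-dependent form. The completion step needs the most care when a few margins are large relative to $n$.
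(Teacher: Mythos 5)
Your argument follows the same route as the paper's proof: a marginal neighbor is a single $2\times 2$ flip, the margins (hence the expected counts and the margin-dependent constant) cancel, so the change in $LR$ reduces to $2\bigl(\delta(x)+\delta(y)-\delta(w-1)-\delta(z-1)\bigr)$ with $\delta(k)=\log\frac{(k+1)^{k+1}}{k^k}$, and one then pushes the raised cells up to $b,c$ and the lowered cells up to $a,d$. Your Steps 1--3 are in fact a cleaner rendering of the paper's upper bound: the paper establishes monotonicity of the relevant terms by a derivative computation in the $2\times2$ case and otherwise writes ``it is easy to see,'' whereas you get the same two branches of the max directly from nonnegativity and monotonicity of $\delta$.

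The gap you flag in Step 4 is genuine, and it is also present (silently) in the paper, which simply asserts that the extremal settings ``can form valid tables with dimensions at least $3\times3$.'' That blanket assertion is false for arbitrary index tuples: take $r=c=3$ with row sums $(1,8,1)$ and column sums $(8,1,1)$ (realizable margins), and the tuple $i_1=1$, $j_1=1$, $i_2=2$, $j_2=2$, so $a=d=1$; the configuration $T[1,1]=1$, $T[2,2]=1$, $T[1,2]=T[2,1]=0$ forces $T[1,3]=0$ and $T[3,2]=0$, hence row $2$ must place $7$ units in column $3$, whose sum is $1$ --- infeasible. So to prove the stated \emph{equality} one must verify feasibility at least at the maximizing tuple(s), which neither you nor the paper does; the useful observation is that at a maximizer the cell value $a$ (respectively $d$) equals and therefore exhausts an entire row or column sum, which collapses the completion to an ordinary transportation problem on the remaining rows or columns, but this case analysis needs to be written out. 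Note, however, that your Steps 1--3 already deliver the inequality ``$s_h$ is at most the claimed expression,'' and an upper bound on $s_h$ is all that Lemma~\ref{lem:mnlaplace} requires for the privacy guarantee, so only the exactness claim, not the mechanism, is at stake.
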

\techreport{The proof of Theorem~\ref{thm:lrsensitivity} is provided in Appendix~\ref{subsec:lrsensitivity}.}
Thus $s_h = O(\log n)$ while the statistic itself does not grow with $n$ under the null hypothesis.


\techreport{
\begin{restatable}{theorem}{llsensitivitytwo}
\label{thm:llsensitivity2by2}
 The $s_h$ value of the log-likelihood statistic based on $2 \times 2$ contingency tables is
 \begin{equation*}
  \max
  \begin{cases}
   \left|\log(T[2,\cdot]-T[\cdot,2]+1)-\log T[1,\cdot]-\log T[\cdot,2]\right|
   \quad\text{if $T[1,\cdot] \leq T[\cdot,1]$, $T[2,\cdot] \geq T[\cdot,2]$}
   \\
   \left|\log(T[\cdot,2]-T[2,\cdot]+1)-\log T[2,\cdot]-\log T[\cdot,1]\right|
    \quad\text{if $T[1,\cdot] > T[\cdot,1]$, $T[2,\cdot] < T[\cdot,2]$}
   \\
   \left|\log T[1,\cdot]+\log T[\cdot,1]-\log(T[2,\cdot]-T[\cdot,1]+1)\right|
    \quad\text{if $T[1,\cdot] \leq T[\cdot,2]$, $T[2,\cdot] \geq T[\cdot,1]$}
   \\
   \left|\log T[2,\cdot]+\log T[\cdot,2]-\log(T[1,\cdot]-T[\cdot,2]+1)\right|
    \quad\text{if $T[1,\cdot] > T[\cdot,2]$, $T[2,\cdot] < T[\cdot,1]$}
  \end{cases}
 \end{equation*}
\end{restatable}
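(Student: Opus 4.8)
The plan is to exploit the rigidity of $2\times 2$ tables under the marginal-neighbor relation (Definition~\ref{def:neighbors}). By Lemma~\ref{lem:mnlaplace}, since $LL$ is scalar-valued, $s_h$ equals the maximum of $|LL(T_1)-LL(T_2)|$ over marginal neighbors $T_1,T_2$ that carry the true table's row sums $T[1,\cdot],T[2,\cdot]$ and column sums $T[\cdot,1],T[\cdot,2]$. First I would note that a $2\times2$ table with these margins is determined by its $(1,1)$ entry: writing $x=T[1,1]$ gives $T[1,2]=T[1,\cdot]-x$, $T[2,1]=T[\cdot,1]-x$, $T[2,2]=T[2,\cdot]-T[\cdot,1]+x$, and nonnegativity of all four cells confines $x$ to the interval $[\alpha,\beta]$ with $\alpha=\max(0,\,T[\cdot,1]-T[2,\cdot])$ and $\beta=\min(T[1,\cdot],\,T[\cdot,1])$. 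Second, I would check that the only marginal-neighbor operation that actually changes such a table --- swapping the $R$- or $C$-value between two records lying in different rows and different columns --- is the $2\times 2$ cycle that decrements $T[1,1],T[2,2]$ while incrementing $T[1,2],T[2,1]$, or its reverse; every other swap (within a row or column, or of attributes irrelevant to the table) leaves the table unchanged. Hence each pair of margin-preserving marginal neighbors is a pair of tables with parameters $x$ and $x-1$, and such a pair exists precisely when $x\in[\alpha+1,\,\beta]$.

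Next I would use that on the slice of tables with fixed margins, $LL$ equals a constant plus $\sum_{i,j}\log(T[i,j]!)$, the margin and $n$ terms being fixed. Thus for the move $x\mapsto x-1$,
\begin{align*}
\Delta(x)&:=LL(\text{table }x)-LL(\text{table }x-1)\\
&=\log x+\log\!\big(T[2,\cdot]-T[\cdot,1]+x\big)-\log\!\big(T[1,\cdot]-x+1\big)-\log\!\big(T[\cdot,1]-x+1\big).
\end{align*}
The analytic heart of the argument is that $\Delta$ is strictly increasing on $[\alpha+1,\beta]$: the first two terms increase in $x$, and the last two are minus the logarithm of arguments that decrease (and stay positive) as $x$ grows, hence also increase. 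A strictly monotone function attains its extrema at the endpoints, so $|\Delta(x)|\le\max\{|\Delta(\alpha+1)|,\,|\Delta(\beta)|\}$; since the reverse move $y\mapsto y+1$ contributes $|\Delta(y+1)|$ with $y+1$ again ranging over $[\alpha+1,\beta]$, we conclude $s_h=\max\{|\Delta(\alpha+1)|,\,|\Delta(\beta)|\}$.

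Finally I would evaluate the two endpoint values, splitting into cases by which nonnegativity constraint is binding. When $T[1,\cdot]\le T[\cdot,1]$ --- equivalently, using $T[1,\cdot]+T[2,\cdot]=T[\cdot,1]+T[\cdot,2]$, when $T[2,\cdot]\ge T[\cdot,2]$ --- one has $\beta=T[1,\cdot]$, so $T[1,2]=0$ there and $\Delta(\beta)$ collapses, after substituting $n=T[1,\cdot]+T[2,\cdot]=T[\cdot,1]+T[\cdot,2]$, to $\log T[1,\cdot]+\log T[\cdot,2]-\log(T[2,\cdot]-T[\cdot,2]+1)$, which is the first listed expression; otherwise $\beta=T[\cdot,1]$ and the same computation yields the second. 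Likewise $\alpha=0$ exactly when $T[1,\cdot]\le T[\cdot,2]$ (equivalently $T[2,\cdot]\ge T[\cdot,1]$), forcing $T[1,1]=1$ at $x=\alpha+1$ and giving the third expression, while otherwise $\alpha=T[\cdot,1]-T[2,\cdot]$, forcing $T[2,2]=1$ at $x=\alpha+1$ and giving the fourth. Exactly one of the first two conditions and exactly one of the last two holds for any table, so $s_h$ is the maximum of the two applicable expressions among the four --- precisely the stated formula. I expect the main obstacle to be bookkeeping rather than anything deep: keeping the nonnegativity constraints of both neighbours straight so as to identify the binding one at each endpoint, and running the elementary margin identities that recast each endpoint value into the advertised closed form; the degenerate tables so constrained that no margin-preserving neighbour exists should be noted separately (there $s_h=0$).
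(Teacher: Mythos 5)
Your proof is correct and follows essentially the same route as the paper's: reduce the $LL$ difference between margin-preserving neighbors to $\pm\bigl(\log T[1,1]-\log(T[1,2]+1)-\log(T[2,1]+1)+\log T[2,2]\bigr)$, then identify the extremal tables by which nonnegativity constraints bind, yielding the four cases. Your one-parameter formulation $x=T[1,1]$ with the monotonicity of $\Delta(x)$ is a cleaner justification of what the paper dismisses as ``it is easy to see,'' and your remark about degenerate tables with no margin-preserving neighbor is a sensible caveat the paper omits, but the substance of the argument is the same.
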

The proof of Theorem~\ref{thm:llsensitivity2by2} is provided in Appendix~\ref{subsec:llsensitivity2by2}.
}

\begin{restatable}{theorem}{llsensitivity}
\label{thm:llsensitivity}
 The $s_h$ value of the LL statistic (from Equation~\ref{eqn:lldef}) for $r \times c$ tables ($r \geq 3$, $c \geq 3$) is
 \begin{equation*}
  \max\left\{\max\limits_{i_1,i_2,j_1,j_2}\log\left[(b+1)(c+1)\right], \max\limits_{i_1,i_2,j_1,j_2}\log\left(ad\right)\right\}
 \end{equation*}
 where $a=\min(T[i_1,\cdot],T[\cdot,j_1])$, $d=\min(T[i_2,\cdot],T[\cdot,j_2])$, $b=\min(T[i_1,\cdot],T[\cdot,j_2])-1$, $c=\min(T[i_2,\cdot],T[\cdot,j_1])-1$.
\end{restatable}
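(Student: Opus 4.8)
The plan is to reduce the sensitivity to a cellwise quantity, enumerate the marginal-neighbor moves, write down the exact change each induces in $LL$, and then match a cellwise upper bound with an explicit worst-case table.

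First I would use that $s_h$ is the maximum of $|LL(T_1)-LL(T_2)|$ over all pairs of marginal neighbors sharing the row and column sums of the true data. In Equation~(\ref{eqn:lldef}) the terms $\sum_i\log(T[i,\bullet]!)$, $\sum_j\log(T[\bullet,j]!)$ and $\log(n!)$ depend only on the (fixed) marginals, so they cancel and $LL(T_1)-LL(T_2)=\sum_{i,j}\bigl(\log(T_1[i,j]!)-\log(T_2[i,j]!)\bigr)$. By Definition~\ref{def:neighbors}, a nontrivial move swaps the row-label (or the column-label) of two records in distinct rows $i_1\ne i_2$ and distinct columns $j_1\ne j_2$, which subtracts $1$ from one diagonal of the $2\times2$ block on rows $\{i_1,i_2\}$, columns $\{j_1,j_2\}$ and adds $1$ to the other diagonal (all other choices, e.g.\ swapping an attribute twice, leave $T$ fixed). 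Using $\log(m!)-\log((m-1)!)=\log m$, the move that decrements $(i_1,j_1)$ and $(i_2,j_2)$ (which requires both counts $\ge1$) and increments $(i_1,j_2)$ and $(i_2,j_1)$ changes $LL$ by
\begin{equation*}
\log T_1[i_1,j_1]+\log T_1[i_2,j_2]-\log\bigl(T_1[i_1,j_2]+1\bigr)-\log\bigl(T_1[i_2,j_1]+1\bigr),
\end{equation*}
and the opposite orientation is the decrementing move of the same rows with $j_1$ and $j_2$ interchanged, so it is covered by the maximization over quadruples.

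For the upper bound, fix a quadruple and the decrementing move above. Dropping the two nonnegative logarithms of the incremented cells and using $T_1[i,j]\le\min(T[i,\bullet],T[\bullet,j])$ bounds the change by $\log a+\log d$. In the negative direction, $T_1[i_1,j_1],T_1[i_2,j_2]\ge1$ makes their logarithms nonnegative, while $T_1[i_1,j_2]+1\le\min(T[i_1,\bullet],T[\bullet,j_2])=b+1$ and $T_1[i_2,j_1]+1\le\min(T[i_2,\bullet],T[\bullet,j_1])=c+1$ (using $T_1[i_1,j_1]\ge1$ and $T_1[i_2,j_2]\ge1$), so the change is at least $-\log[(b+1)(c+1)]$. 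Hence $|\text{change}|\le\max\{\log(ad),\log[(b+1)(c+1)]\}$ for each quadruple, and maximizing over quadruples yields the stated formula as an upper bound for $s_h$.

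For the matching lower bound I would exhibit an extremal table. If the maximum is the $\log(ad)$ term for a quadruple $(i_1,i_2,j_1,j_2)$, build $T_1$ with the prescribed marginals and $T_1[i_1,j_1]=a$, $T_1[i_2,j_2]=d$, $T_1[i_1,j_2]=T_1[i_2,j_1]=0$: saturate $(i_1,j_1)$ at $a=\min(T[i_1,\bullet],T[\bullet,j_1])$ (exhausting whichever of row $i_1$ or column $j_1$ is the binding line), saturate $(i_2,j_2)$ similarly, and complete the table through the remaining ($\ge1$) rows and columns; the decrementing move on $T_1$ then changes $LL$ by exactly $\log(ad)$, and the $\log[(b+1)(c+1)]$ case is symmetric, saturating the anti-diagonal corners. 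The hard part -- and the place where $r,c\ge3$ must be used -- is the completion step: the residual is a transportation problem with equal margins and the $2\times2$ block forbidden, which by a Hall/max-flow argument is solvable exactly when $n\ge\max(T[i_1,\bullet],T[\bullet,j_1])+\max(T[i_2,\bullet],T[\bullet,j_2])$; since this is not automatic for an arbitrary quadruple, one must argue that the maximizer of $\log(ad)$ (resp.\ $\log[(b+1)(c+1)]$) can always be taken to satisfy it -- the cases where both binding lines are rows, and where both are columns, collapse immediately because $\sum_iT[i,\bullet]=\sum_jT[\bullet,j]=n$, and the remaining mixed case would be settled by comparing the candidate products across the quadruples built from the same four marginal values.
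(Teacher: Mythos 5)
Your reduction and upper bound follow essentially the same route as the paper's proof: the marginal terms of $LL$ cancel between neighbors, a nontrivial move is a $\pm 1$ pattern on a $2\times 2$ block, and the per-quadruple change is $\log T[i_1,j_1]+\log T[i_2,j_2]-\log(T[i_1,j_2]+1)-\log(T[i_2,j_1]+1)$, which you correctly bound above by $\log(ad)$ and below by $-\log[(b+1)(c+1)]$. This half is complete, and in fact more carefully justified than the paper's one-line ``it is easy to see the objective function is maximized by\dots''; your derivation of $T[i_1,j_2]+1\le b+1$ from the positivity of the two decremented cells is exactly the constraint the paper invokes implicitly.

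The genuine gap is in the lower bound, and although you have correctly located it, your proposed fix does not close it. The extremal configuration $T[i_1,j_1]=a$, $T[i_2,j_2]=d$, $T[i_1,j_2]=T[i_2,j_1]=0$ need not extend to a table with the prescribed marginals, and the mixed case cannot be rescued by comparing quadruples built from the same four marginal values. Concretely, take $3\times 3$ tables with row sums $(10,10,1)$ and column sums $(15,5,1)$, so $n=21$. Every quadruple on rows $\{1,2\}$ and columns $\{1,2\}$ gives $ad=(b+1)(c+1)=50$, yet any feasible table has $T[2,1]\ge 15-10-1=4$ (and symmetrically $T[1,1]\ge 4$ for the other orientation), so the change is at most $\log 10+\log 5-\log 5=\log 10<\log 50$; quadruples touching row $3$ or column $3$ give even smaller products. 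Hence the displayed formula strictly over-estimates $\max|LL(T_1)-LL(T_2)|$ for these marginals, and no completion or max-flow argument can restore equality. The paper's own proof contains the same unexamined step (it simply asserts that the extremal cell values are attainable), so the honest conclusion is that the formula is a correct \emph{upper bound} on the sensitivity --- which is all that Lemma~\ref{lem:mnlaplace} requires for privacy --- while the claimed exactness fails in general; your plan should stop after the upper-bound half rather than pursue the transportation-feasibility argument.
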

\techreport{The proof of Theorem~\ref{thm:llsensitivity} is provided in Appendix~\ref{subsec:llsensitivity}.}
Here $s_h$ also grows logarithmically. 

\begin{restatable}{theorem}{absolutediffsensitivity}
\label{thm:absolutediffsensitivity}
 The $s_h$ value of the Diff statistic (from Equation~\ref{eqn:diffdef}) is equal to $4$.
\end{restatable}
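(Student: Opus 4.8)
The plan is to prove $s_h\le 4$ and $s_h\ge 4$ separately, the first being the statement that actually matters for the Laplace mechanism of Lemma~\ref{lem:mnlaplace} and the second showing that the constant cannot be improved. The key observation, which makes the whole argument routine, is that marginal neighbors (Definition~\ref{def:neighbors}) preserve by construction every row sum $T[i,\bullet]$, every column sum $T[\bullet,j]$, and the total $n$. Hence the expected-count array $E[i,j]=T[i,\bullet]T[\bullet,j]/n$ appearing in Equation~\ref{eqn:diffdef} is \emph{identical} for two marginal neighbors $T_1$ and $T_2$, and therefore
\[
\text{Diff}(T_1)-\text{Diff}(T_2)=\sum_{i,j}\Big(\big|T_1[i,j]-E[i,j]\big|-\big|T_2[i,j]-E[i,j]\big|\Big),
\]
so only cells on which $T_1$ and $T_2$ actually differ can contribute.

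For the upper bound I would describe precisely how a single marginal swap alters the table. Swapping an attribute other than the two that define the table, or swapping both defining attributes at once, leaves $T$ unchanged; swapping the column attribute between a record counted in cell $(i_1,j_1)$ and one counted in cell $(i_2,j_2)$ changes nothing unless $i_1\ne i_2$ and $j_1\ne j_2$, in which case exactly the four cells $(i_1,j_1),(i_2,j_2)$ (each changed by $\mp1$) and $(i_1,j_2),(i_2,j_1)$ (each changed by $\pm1$) are modified; swapping the row attribute is symmetric. So at most four cells change, each by exactly $1$ in absolute value. Applying the reverse triangle inequality $\big|\,|a|-|b|\,\big|\le|a-b|$ cell by cell with $a=T_1[i,j]-E[i,j]$, $b=T_2[i,j]-E[i,j]$ gives a per-cell bound of $|T_1[i,j]-T_2[i,j]|\le1$, and summing the at most four nonzero terms yields $|\text{Diff}(T_1)-\text{Diff}(T_2)|\le4$. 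Thus $s_h=4$ is always a valid choice.

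For tightness I would exhibit a concrete marginal-neighbor pair realizing the bound. Take the $2\times2$ table $T_1$ with $T_1[1,1]=T_1[2,2]=k$ and $T_1[1,2]=T_1[2,1]=1$ for a large integer $k$; then all four margins equal $k+1$, $n=2k+2$, and $E[i,j]=(k+1)/2$ for every cell, so the two large cells lie above $E$ and the two small cells below it. Obtain $T_2$ by swapping a column label between a record in cell $(1,1)$ and one in cell $(2,2)$, giving $T_2[1,1]=T_2[2,2]=k-1$ and $T_2[1,2]=T_2[2,1]=2$. For $k\ge3$ none of the four changed cells crosses $E$, so every one of the four terms in the displayed difference equals $+1$ and $\text{Diff}(T_1)-\text{Diff}(T_2)=4$; combined with the upper bound this gives $s_h=4$ (and shows $4$ is the smallest universal constant, even though for degenerate margins the attainable gap can be strictly smaller).

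Since every step is an elementary computation, there is no genuine obstacle; the only points needing care are the case analysis of which cells a marginal swap touches together with the sign pattern of the induced $\pm1$ changes, and, for the lower bound, choosing the example so that all four absolute-value terms move in the same direction simultaneously — which is exactly why a table whose off-diagonal cells sit far below their expected counts works so cleanly.
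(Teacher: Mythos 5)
Your proof is correct and your upper-bound argument is essentially the paper's: both exploit that marginal neighbors share all row and column sums (hence the identical expected counts $E[i,j]=T[i,\bullet]T[\bullet,j]/n$), restrict the difference to the four changed cells, and apply the reverse triangle inequality cellwise to obtain $|\mathrm{Diff}(T_1)-\mathrm{Diff}(T_2)|\le 4$. Your explicit $2\times 2$ construction establishing tightness goes slightly beyond the paper, whose proof only derives the inequality $\le 4$ and then asserts equality; that added lower-bound example is correct.
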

\techreport{Proof of Theorem~\ref{thm:absolutediffsensitivity} is provided in \ConfOrTech{additional materials.}{Appendix~\ref{subsec:absolutediffsensitivity}.}}
The Diff statistic is the only one out of them that has a constant sensitivity. Clearly the value of the statistic should grow with $n$, even under the null distribution, so it should quickly  overwhelm the Laplace noise.

\subsection{Experiments for Permutation Testbed} \label{subsec:expperm}
Now we experiment with our permutation testbed to compare the $\chi^2$ and
likelihood ratio LR statistics to the non-traditional LL and Diff statistics
considered in Section \ref{subsec:testbedusage}. We compare the non-private
versions (evaluated on actual data) to the input perturbation (with suffix
``-in") and output perturbation (with suffix ``-out") versions. The method of \cite{uhler2013privacy,yu2014scalable} is denoted $\chi^2$-out and has lower noise requirements in the testbed than in general.

\techreport{ 
  \begin{figure}
    \centering
     \includegraphics[width=\textwidth]{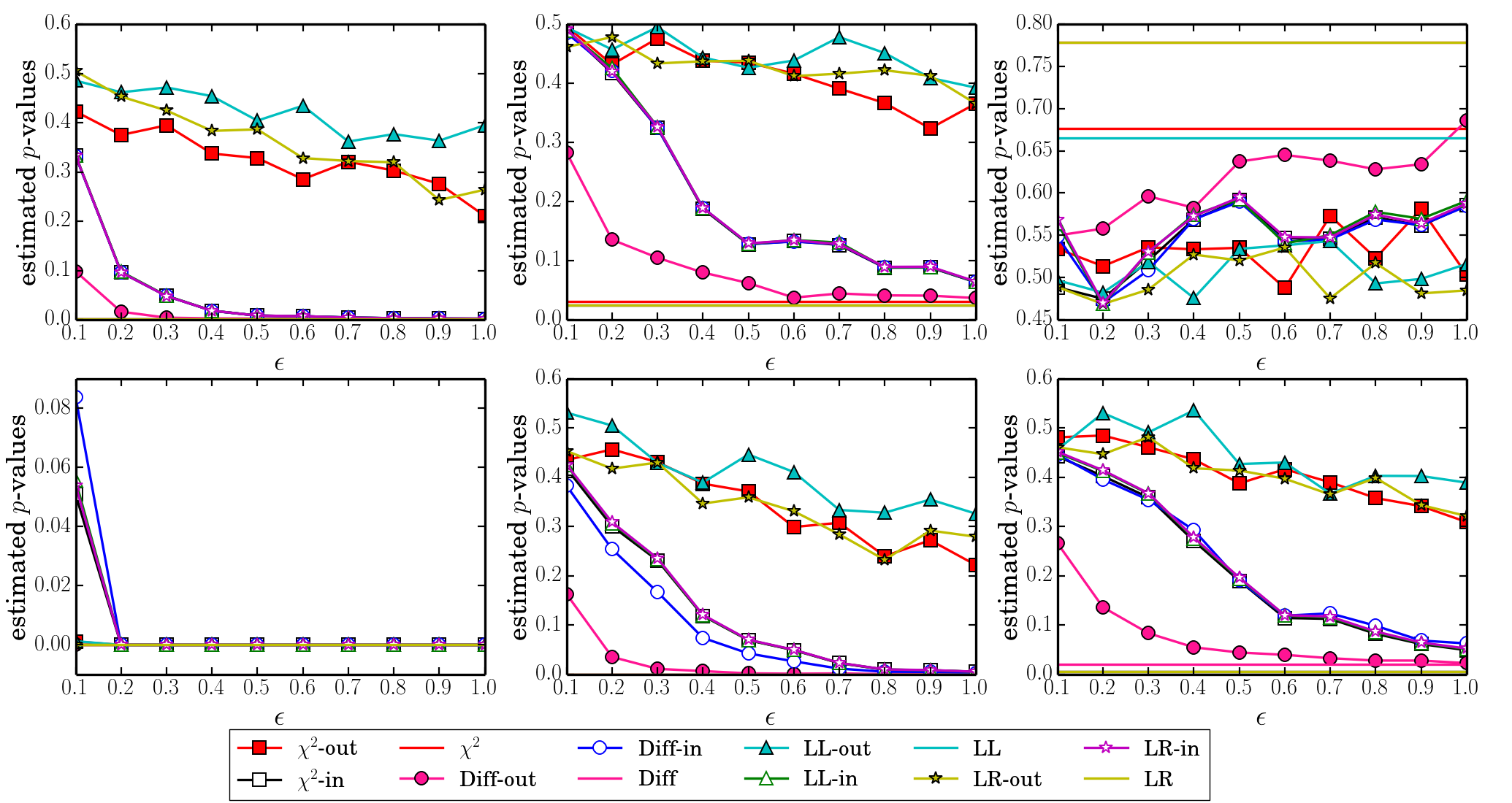}
     \vskip -14pt
     \caption{$P$-values of $\chi^2$ test, likelihood ratio test, log-likelihood
       test and \varchi{test} with input perturbation, output perturbation and
       no perturbation. Tables used: Czech \texttt{AD} with $n=1841$ (top left),
       Czech \texttt{BC} with $n=1841$ (top middle), Rochdale \texttt{AB} with
       $n=665$ (top right), Home Zone and Income Category with $n=2291$, $r=4$,
       $c=16$ (bottom left), Religion and Attitudes with $n=1055$, $r=c=3$
       (bottom middle), Religion and Education with $n=1055$, $r=c=3$ (bottom
       right).}
    \label{fig:realtableperm}
  \end{figure}

  Our first batch of results is shown in the top row of
  Fig.~\ref{fig:realtableperm} and illustrates three separate interesting
  phenomena. The left table has $p$-values that, in the non-private case, are
  generally considered highly significant. The output perturbation methods
  (including the perturbed $\chi^2$ statistic) generally have high variance and
  perform much worse than the input perturbation methods due to the amount of
  noise they require. The exception is the Diff statistic, whose output
  perturbation version requires the least amount of noise. The middle table has
  $p$-value that is often considered borderline for rejecting the null
  hypothesis. Again, we see the same pattern, with slightly more variance even
  for the input perturbation results. The reason for this is that the higher the
  non-private $p$-value, the smaller the value of the non-private test
  statistic. When that is small and when $n$ is small, the resulting value of
  the test statistic is easily dominated by the noise (creating large variance),
  but does not lead to unsupported rejections of the null hypothesis. This
  behavior is actually expected and desired. Even in the non-private case, when
  the null hypothesis is true, the $p$-values should be uniformly distributed
  while if the null hypothesis is false, the $p$-values should gravitate towards
  very small values.

}

The bottom row of Fig.~\ref{fig:realtableperm} shows typical results. Generally,
the output perturbation methods have high variance because of their noise
requirements. Meanwhile, input perturbation methods perform reasonably well even
in these tough noise scenarios. The exception is the output perturbation of the
Diff statistic, which is clearly the best and requires the least noise of all.
The Diff statistic grows with $n$, and so would not have an asymptotic
distribution in the unrestricted case, however, it is a promising starting point
upon which other privacy-aware statistics could be built.
\conferenceversion{Additional testbed experiments can be found in the full
  version \citep{wang2015differentially}.}
\section{Proof of Sensitivities}\label{sec:sensitivityproofs}
\subsection{Proof of Theorem~\ref{thm:chi2sensitivity2by2}}\label{subsec:chi2sensitivity2by2}

\cssensitivitytwo*
\begin{proof}\label{pf:chi2sensitivity2by2}
 From Definition~\ref{def:independence}, the $\chi^2$ statistic based on a $2 \times 2$ contingency table $T$ with fixed marginals $T[1,\cdot]$, $T[2,\cdot]$, $T[\cdot,1]$, $T[\cdot,2]$ is 
 $$\chi^2(T)=C/n \times (T[1,1]T[2,2]-T[1,2]T[2,1])^2$$
 
 From Definition~\ref{def:neighbors}, the neighboring contingency table $T'$ of $T$ has cell counts $T[1,1]-1, T[1,2]+1, T[2,1]+1, T[2,2]-1$. This implies the conditions $T[1,1] \geq 1$ and $T[2,2] \geq 1$\footnote{The case of incrementing $T[1,1]$,$T[2,2]$ and decrementing $T[1,2]$, $T[2,1]$ is symmetric because we can exchange $T$ and $T'$. This is also true for all other neighboring contingency tables with fixed  marginals}. From Definition~\ref{def:sensitivity}, the sensitivity equals
 \begin{align*}
 & \max_{T[1,1],T[1,2],T[2,1],T[2,2]} \left|\chi^2(T)-\chi^2(T')\right|
 \\
 = & \max_{T[1,1],T[1,2],T[2,1],T[2,2]} C\left|2T[1,1]T[2,2]-2T[1,2]T[2,1]-n\right|
 \end{align*}
 
 There are two ways to solve the above problem, that is, either maximize the formula inside the absolute value, or minimize it. Note we have the constraints $1\leq T[1,1]\leq \min(T[1,\cdot],T[\cdot,1])$, $0 \leq T[1,2] \leq \min(T[1,\cdot],T[\cdot,2])-1$, $0 \leq T[2,1] \leq \min(T[2,\cdot],T[\cdot,1])-1$, $1\leq T[2,2]\leq \min(T[2,\cdot],T[\cdot,2])$, and $T[1,\cdot]+T[2,\cdot]=T[\cdot,1]+T[\cdot,2]=n$. Since the marginals are fixed, we only have four variables. 
 
 In the first way, there are two cases.

 \begin{enumerate}
  \item If $T[1,\cdot] \leq T[\cdot,1]$, $T[2,\cdot] \geq T[\cdot,2]$
  
  It is easy to see $T[1,1]=T[1,\cdot]$, $T[2,2]=T[\cdot,2]$, $T[1,2]=0$ and $T[2,1]=T[2,\cdot]-T[\cdot,2]$ maximize the formula inside the absolute value. They give the result $C\left|n-2T[\cdot,2]T[1,\cdot]\right|$.
  
  \item If $T[1,\cdot] > T[\cdot,1]$, $T[2,\cdot] < T[\cdot,2]$
  
  It is easy to see $T[1,1]=T[\cdot,1]$, $T[2,2]=T[2,\cdot]$, $T[2,1]=0$ and $T[1,2]=T[\cdot,2]-T[2,\cdot]$ maximize the formula inside the absolute value. They give the result $C\left|n-2T[\cdot,1]T[2,\cdot]\right|$.
 \end{enumerate}
 
 In the second way, there are also two cases.

 \begin{enumerate}
  \item If $T[1,\cdot] \leq T[\cdot,2]$, $T[2,\cdot] \geq T[\cdot,1]$
  
  It is easy to see $T[1,2]=T[1,\cdot]-1$, $T[2,1]=T[\cdot,1]-1$, $T[1,1]=1$ and $T[2,2]=T[2,\cdot]-T[\cdot,1]+1$ minimize the formula inside the absolute value. They give the result $C\left|n-2T[\cdot,1]T[1,\cdot]\right|$.
  
  \item if $T[1,\cdot] > T[\cdot,2]$, $T[2,\cdot] < T[\cdot,1]$
  
  It is easy to see $T[1,2]=T[\cdot,2]-1$, $T[2,1]=T[2,\cdot]-1$, $T[1,1]=T[1,\cdot]-T[\cdot,2]+1$ and $T[2,2]=1$ minimize the formula inside the absolute value. They give the result $C\left|n-2T[\cdot,2]T[2,\cdot]\right|$.
 \end{enumerate}

 Therefore, the maximum value among all cases that apply to the marginals of table $T$ is its sensitivity with fixed marginals, which leads to the result in Theorem~\ref{thm:chi2sensitivity2by2}.
\end{proof}

\subsection{Proof of Theorem~\ref{thm:chi2sensitivity}}\label{subsec:chi2sensitivity}

\cssensitivity*
\begin{proof}\label{pf:chi2sensitivity}
 Let $T$ and $T'$ be neighboring contingency tables no smaller than $3 \times 3$ with fixed marginals. Suppose the four different entries between $T$ and $T'$ locate at the intersection of row $i_1, i_2$ and column $j_1, j_2$. We write $T[i_1,j_1], T[i_1,j_2], T[i_2,j_1], T[i_2,j_2]$ as $a, b, c, d$ respectively for short. The corresponding entries in $T'$ are then $a-1, b+1, c+1, d-1$. Note we have the conditions $a \geq 1$ and $d \geq 1$. By Definition~\ref{def:independence} and Definition~\ref{def:sensitivity}, the sensitivity of $\chi^2$-statistic can be computed by
 \begin{align*}
  \max_{T,T'} \left|\chi^2(T)-\chi^2(T')\right|
 = & \max_{T,T'} \left|\sum_{i,j}\frac{(T[i,j]-\frac{T[i,\cdot]T[\cdot,j]}{n})^2}{\frac{T[i,\cdot]T[\cdot,j]}{n}}-\sum_{i,j}\frac{(T'[i,j]-\frac{T'[i,\cdot]T'[\cdot,j]}{n})^2}{\frac{T'[i,\cdot]T'[\cdot,j]}{n}}\right| \\
 = & \max_{T} \left|\frac{2n(T[i_2,\cdot]T[\cdot,j_2]a-T[i_2,\cdot]T[\cdot,j_1]b-T[i_1,\cdot]T[\cdot,j_2]c)}{T[i_1,\cdot]T[i_2,\cdot]T[\cdot,j_1]T[\cdot,j_2]} \right. \\
 & \left.+ \frac{2nT[i_1,\cdot]T[\cdot,j_1]d-n(T[i_1,\cdot]+T[i_2,\cdot])(T[\cdot,j_1]+T[\cdot,j_2])}{T[i_1,\cdot]T[i_2,\cdot]T[\cdot,j_1]T[\cdot,j_2]}\right| 
 \end{align*} 
 
 Since all marginals are known, the only variables in the function are $a,b,c,d$. There are two ways to solve the problem.
 \begin{enumerate}
  \item Maximize the function inside the absolute value of the objective function by choosing large values for $a,d$ and small values for $b,c$.
  
    $b=c=0$ is the smallest possible values for them.
    $a=\min(T[i_1,\cdot],T[\cdot,j_1])$ and $d=\min(T[i_2,\cdot],T[\cdot,j_2])$
    are the largest possible values for them (respectively). These settings can
    form valid tables with dimensions at least $3 \times 3$. Plug them
    into the objective gives
    $C'|2(T[i_2,\cdot]T[\cdot,j_2]a+T[i_1,\cdot]T[\cdot,j_1]d)-(T[i_1,\cdot]+T[i_2,\cdot])(T[\cdot,j_1]+T[\cdot,j_2])|/n$.
    Then the indices $i_1,i_2,j_1,j_2$ maximizing it should be chosen.
  
  \item Minimize the function inside the absolute value of the objective function by choosing large values for $b,c$ and small values for $a,d$.
  
  $a=d=1$ is the smallest possible values for them. $b=\min(T[i_1,\cdot],T[\cdot,j_2])-1$ and $c=\min(T[i_2,\cdot],T[\cdot,j_1])-1$ are the largest possible values for them (respectively). These settings can also form a valid table. Plugging them back to the objective function gives $C'|(T[i_1,\cdot]-T[i_2,\cdot])(T[\cdot,j_1]-T[\cdot,j_2])-2(T[i_2,\cdot]T[\cdot,j_1]b+T[i_1,\cdot]T[\cdot,j_2]c)|/n$. Then the indices $i_1,i_2,j_1,j_2$ maximizing it should be chosen.
 \end{enumerate}
 
 The sensitivity should be the larger one computed from the above two cases, which gives the result in Theorem~\ref{thm:chi2sensitivity}.

\end{proof}
 
\subsection{Proof of Theorem~\ref{thm:lrsensitivity2by2}}\label{subsec:lrsensitivity2by2}

\lrsensitivitytwo*
\begin{proof}\label{pf:lrsensitivity2by2}
 Suppose $2 \times 2$ contingency table $T$ has fixed marginals $T[1,\cdot]$, $T[2,\cdot]$, $T[\cdot,1]$, $T[\cdot,2]$. From Definition~\ref{def:neighbors}, the neighboring contingency table $T'$ of $T$ has cell counts $T[1,1]-1, T[1,2]+1, T[2,1]+1, T[2,2]-1$. This implies the conditions $T[1,1] \geq 1$ and $T[2,2] \geq 1$. We also have the conditions $T[1,\cdot]+T[2,\cdot]=T[\cdot,1]+T[\cdot,2]=n$. From  Definition~\ref{def:independence} and \ref{def:sensitivity}, the sensitivity equals

 \begin{align*}
  & \max_{T,T'} 2\left|\sum_{i,j}T[i,j]\log\frac{nT[i,j]}{T[i,\cdot]T[\cdot,j]} - \sum_{i,j}T'[i,j]\log\frac{nT'[i,j]}{T'[i,\cdot]T'[\cdot,j]}\right| \\
  = & \max_{T[1,1],T[1,2],T[2,1],T[2,2]}2\left|\log\frac{T[1,1]^{T[1,1]}}{(T[1,1]-1)^{T[1,1]-1}} + \log\frac{T[1,2]^{T[1,2]}}{(T[1,2]+1)^{T[1,2]+1}} \right.\\
&+\left. \log\frac{T[2,1]^{T[2,1]}}{(T[2,1]+1)^{T[2,1]+1}} + \log\frac{T[2,2]^{T[2,2]}}{(T[2,2]-1)^{T[2,2]-1}} \right|
 \end{align*}
 There are two ways to solve the above problem, that is, either maximize the formula inside the absolute value of the above objective function, or minimize it. Note we have the constraints $1\leq T[1,1]\leq \min(T[1,\cdot],T[\cdot,1])$, $0 \leq T[1,2] \leq \min(T[1,\cdot],T[\cdot,2])-1$, $0 \leq T[2,1] \leq \min(T[2,\cdot],T[\cdot,1])-1$, $1\leq T[2,2]\leq \min(T[2,\cdot],T[\cdot,2])$. 
 
 The derivative of $\log\frac{T[1,1]^{T[1,1]}}{(T[1,1]-1)^{T[1,1]-1}}$ with respect to $T[1,1]$ is $\log\frac{T[1,1]}{T[1,1]-1}$. When $T[1,1]>1$, the term and its derivative are both positive; when $T[1,1]=1$, the term equals $0$. The last term $\log\frac{T[2,2]^{T[2,2]}}{(T[2,2]-1)^{T[2,2]-1}}$ has exactly the same analysis, and so does $T[2,2]$. For the term $\log\frac{T[1,2]^{T[1,2]}}{(T[1,2]+1)^{T[1,2]+1}}$, its derivative with respect to $T[1,2]$ equals $\log\frac{T[1,2]}{T[1,2]+1}$. Both the term and its derivative are negative when $T[1,2]>0$. When $T[1,2]=0$, the term equals $0$. Similarly, we can apply the same analysis to $T[2,1]$ as what we do to $T[1,2]$. We use this derivative analysis for the two ways of solving the problem.
 
 In the first way, there are two cases.

 \begin{enumerate}
  \item If $T[1,\cdot] \leq T[\cdot,1]$, $T[2,\cdot] \geq T[\cdot,2]$
  
  It is easy to see $T[1,1]=T[1,\cdot]$, $T[2,2]=T[\cdot,2]$, $T[1,2]=0$ and $T[2,1]=T[2,\cdot]-T[\cdot,2]$ maximize the formula inside the absolute value, which leads to $\left|\log\frac{T[1,\cdot]^{T[1,\cdot]}}{(T[1,\cdot]-1)^{T[1,\cdot]-1}} + \log\frac{T[\cdot,2]^{T[\cdot,2]}}{(T[\cdot,2]-1)^{T[\cdot,2]-1}} + \log\frac{(T[2,\cdot]-T[\cdot,2])^{T[2,\cdot]-T[\cdot,2]}}{(T[2,\cdot]-T[\cdot,2]+1)^{T[2,\cdot]-T[\cdot,2]+1}}\right|$.
  
  \item If $T[1,\cdot] > T[\cdot,1]$, $T[2,\cdot] < T[\cdot,2]$
  
  It is easy to see $T[1,1]=T[\cdot,1]$, $T[2,2]=T[2,\cdot]$, $T[2,1]=0$ and $T[1,2]=T[\cdot,2]-T[2,\cdot]$ maximize the formula inside the absolute value, which leads to $\left|\log\frac{T[2,\cdot]^{T[2,\cdot]}}{(T[2,\cdot]-1)^{T[2,\cdot]-1}} + \log\frac{T[\cdot,1]^{T[\cdot,1]}}{(T[\cdot,1]-1)^{T[\cdot,1]-1}} + \log\frac{(T[\cdot,2]-T[2,\cdot])^{T[\cdot,2]-T[2,\cdot]}}{(T[\cdot,2]-T[2,\cdot]+1)^{T[\cdot,2]-T[2,\cdot]+1}}\right|$.
 \end{enumerate}
 
 In the second way, there are also two cases.

 \begin{enumerate}
  \item If $T[1,\cdot] \leq T[\cdot,2]$, $T[2,\cdot] \geq T[\cdot,1]$
  
  It is easy to see $T[1,2]=T[1,\cdot]-1$, $T[2,1]=T[\cdot,1]-1$, $T[1,1]=1$ and $T[2,2]=T[2,\cdot]-T[\cdot,1]+1$ minimize the formula inside the absolute value, which leads to $\left|\log\frac{(T[1,\cdot]-1)^{T[1,\cdot]-1}}{T[1,\cdot]^{T[1,\cdot]}} + \log\frac{(T[\cdot,1]-1)^{T[\cdot,1]-1}}{T[\cdot,1]^{T[\cdot,1]}} + \log\frac{(T[2,\cdot]-T[\cdot,1]+1)^{T[2,\cdot]-T[\cdot,1]+1}}{(T[2,\cdot]-T[\cdot,1])^{T[2,\cdot]-T[\cdot,1]}}\right|$.
  
  \item if $T[1,\cdot] > T[\cdot,2]$, $T[2,\cdot] < T[\cdot,1]$
  
  It is easy to see $T[1,2]=T[\cdot,2]-1$, $T[2,1]=T[2,\cdot]-1$, $T[1,1]=T[1,\cdot]-T[\cdot,2]+1$ and $T[2,2]=1$ minimize the formula inside the absolute value, which leads to $\left|\log\frac{(T[2,\cdot]-1)^{T[2,\cdot]-1}}{T[2,\cdot]^{T[2,\cdot]}} + \log\frac{(T[\cdot,2]-1)^{T[\cdot,2]-1}}{T[\cdot,2]^{T[\cdot,2]}} + \log\frac{(T[1,\cdot]-T[\cdot,2]+1)^{T[1,\cdot]-T[\cdot,2]+1}}{(T[1,\cdot]-T[\cdot,2])^{T[1,\cdot]-T[\cdot,2]}}\right|$.
 \end{enumerate}
 
 Therefore, the maximum value among all cases that apply to the marginals of table $T$ is its sensitivity with fixed marginals, which leads to the result in Theorem~\ref{thm:lrsensitivity2by2}.
\end{proof}

\subsection{Proof of Theorem~\ref{thm:lrsensitivity}}\label{subsec:lrsensitivity}

\lrsensitivity*
\begin{proof}\label{pf:lrsensitivity}
 Let $T$ and $T'$ be neighboring contingency tables no smaller than $3 \times 3$ with fixed marginals. Suppose the four different entries between $T$ and $T'$ locate at the intersection of row $i_1, i_2$ and column $j_1, j_2$. We write $T[i_1,j_1], T[i_1,j_2], T[i_2,j_1], T[i_2,j_2]$ as $a, b, c, d$ respectively for short. The corresponding entries in $T'$ are then $a-1, b+1, c+1, d-1$. Note we have the conditions $a \geq 1$ and $d \geq 1$. By Definition~\ref{def:independence} and Definition~\ref{def:sensitivity}, the sensitivity of likelihood ratio statistic can be computed by
 \begin{align*}
  & \max_{T,T'} 2\left|\sum_{i,j}T[i,j]\log\frac{nT[i,j]}{T[i,\cdot]T[\cdot,j]} - \sum_{i,j}T'[i,j]\log\frac{nT'[i,j]}{T'[i,\cdot]T'[\cdot,j]}\right| \\
  = & \max_{a,b,c,d}2\left|\log\frac{a^a}{(a-1)^{a-1}} + \log\frac{b^b}{(b+1)^{b+1}} + \log\frac{c^c}{(c+1)^{c+1}} + \log\frac{d^d}{(d-1)^{d-1}} \right|
 \end{align*}
 The objective function only contains variables $a,b,c,d$. So, following from the proof for Theorem~\ref{thm:lrsensitivity2by2}, we do the same thing. That is, either maximize or minimize the formula inside the absolute value from the objective function.
 
 In the first case, we choose $b=c=0$, $a=\min(T[i_1,\cdot],T[\cdot,j_1])$, $d=\min(T[i_2,\cdot],T[\cdot,j_2])$, which gives the result $\log\frac{a^a}{(a-1)^{a-1}} + \log\frac{d^d}{(d-1)^{d-1}}$. Next, we find the indices $i_1,i_2,j_1,j_2$ which maximizes it.

 In the second case, we choose $a=d=1$, $b=\min(T[i_1,\cdot],T[\cdot,j_2])-1$, $c=\min(T[i_2,\cdot],T[\cdot,j_1])-1$, which gives the result $\log\frac{(b+1)^{b+1}}{b^b} + \log\frac{(c+1)^{c+1}}{c^c}$. Next, we find the indices $i_1,i_2,j_1,j_2$ which maximizes it.
 
 The sensitivity is the larger of the above two cases, which leads to Theorem~\ref{thm:lrsensitivity}.
\end{proof}

\subsection{Proof of Theorem~\ref{thm:llsensitivity2by2}}\label{subsec:llsensitivity2by2}

\llsensitivitytwo*
\begin{proof}\label{pf:llsensitivity2by2}
 Suppose $2 \times 2$ contingency table $T$ has fixed marginals $T[1,\cdot]$, $T[2,\cdot]$, $T[\cdot,1]$, $T[\cdot,2]$. From Definition~\ref{def:neighbors}, the neighboring contingency table $T'$ of $T$ has cell counts $T[1,1]-1, T[1,2]+1, T[2,1]+1, T[2,2]-1$. This implies the conditions $T[1,1] \geq 1$ and $T[2,2] \geq 1$. We also have the conditions $T[1,\cdot]+T[2,\cdot]=T[\cdot,1]+T[\cdot,2]=n$. From  Equation~\ref{eqn:lldef} and \ref{def:sensitivity}, the sensitivity equals
 
 \begin{align*}
  & \max_{T,T'} \left|\sum_{i}\log(T[i,\cdot]!) + \sum_{j}\log(T[\cdot,j]!) - \sum_{i,j}\log(T[i,j]!)\right. \\
  & \left. - \left[\sum_{i}\log(T'[i,\cdot]!) + \sum_{j}\log(T'[\cdot,j]!) - \sum_{i,j}\log(T'[i,j]!) \right] \right| \\
  = & \max_{T[1,1],T[1,2],T[2,1],T[2,2]} \left|-\log T[1,1]! - \log T[1,2]! - \log T[2,1]! - \log T[2,2]! + \log(T[1,1]-1)!\right. \\
  & \left.  + \log(T[1,2]+1)! + \log(T[2,1]+1)! + \log(T[2,2]-1)!\right| \\
  = & \max_{T[1,1],T[1,2],T[2,1],T[2,2]} \left|-\log T[1,1] + \log(T[1,2]+1) + \log(T[2,1]+1) - \log T[2,2] \right|
 \end{align*}
 
 We solve the objective function by either minimizing the formula inside the absolute value of the objective function or maximizing it. 

 In the first way, there are two cases.

 \begin{enumerate}
  \item If $T[1,\cdot] \leq T[\cdot,1]$, $T[2,\cdot] \geq T[\cdot,2]$
  
  It is easy to see $T[1,1]=T[1,\cdot]$, $T[2,2]=T[\cdot,2]$, $T[1,2]=0$ and $T[2,1]=T[2,\cdot]-T[\cdot,2]$ minimize it, which leads to $\left|\log(T[2,\cdot]-T[\cdot,2]+1)-\log T[1,\cdot]-\log T[\cdot,2]\right|$.
  
  \item If $T[1,\cdot] > T[\cdot,1]$, $T[2,\cdot] < T[\cdot,2]$
  
  It is easy to see $T[1,1]=T[\cdot,1]$, $T[2,2]=T[2,\cdot]$, $T[2,1]=0$ and $T[1,2]=T[\cdot,2]-T[2,\cdot]$ minimize it, which leads to $\left|\log(T[\cdot,2]-T[2,\cdot]+1)-\log T[2,\cdot]-\log T[\cdot,1]\right|$.
 \end{enumerate}
 
 In the second way, there are also two cases.

 \begin{enumerate}
  \item If $T[1,\cdot] \leq T[\cdot,2]$, $T[2,\cdot] \geq T[\cdot,1]$
  
  It is easy to see $T[1,2]=T[1,\cdot]-1$, $T[2,1]=T[\cdot,1]-1$, $T[1,1]=1$ and $T[2,2]=T[2,\cdot]-T[\cdot,1]+1$ maximize it, which leads to $\left|\log T[1,\cdot]+\log T[\cdot,1]-\log(T[2,\cdot]-T[\cdot,1]+1)\right|$.
  
  \item if $T[1,\cdot] > T[\cdot,2]$, $T[2,\cdot] < T[\cdot,1]$
  
  It is easy to see $T[1,2]=T[\cdot,2]-1$, $T[2,1]=T[2,\cdot]-1$, $T[1,1]=T[1,\cdot]-T[\cdot,2]+1$ and $T[2,2]=1$ maximize it, which leads to $\left|\log T[2,\cdot]+\log T[\cdot,2]-\log(T[1,\cdot]-T[\cdot,2]+1)\right|$.
 \end{enumerate}
 
 Therefore, the maximum value among all cases that apply to the marginals of table $T$ is its sensitivity with fixed marginals, which leads to the result in Theorem~\ref{thm:llsensitivity2by2}.
\end{proof}

\subsection{Proof of Theorem~\ref{thm:llsensitivity}}\label{subsec:llsensitivity}

\llsensitivity*
\begin{proof}\label{pf:llsensitivity}
 Let $T$ and $T'$ be neighboring contingency tables no smaller than $3 \times 3$ with fixed marginals. Suppose the four different entries between $T$ and $T'$ locate at the intersection of row $i_1, i_2$ and column $j_1, j_2$. We write $T[i_1,j_1], T[i_1,j_2], T[i_2,j_1], T[i_2,j_2]$ as $a, b, c, d$ respectively for short. The corresponding entries in $T'$ are then $a-1, b+1, c+1, d-1$. Note we have the conditions $a \geq 1$ and $d \geq 1$. By Equation~\ref{eqn:lldef} and Definition~\ref{def:sensitivity}, the sensitivity of log-likelihood statistic can be computed by
 \begin{align*}
  & \max_{T,T'} \left|\sum_{i}\log(T[i,\cdot]!) + \sum_{j}\log(T[\cdot,j]!) - \sum_{i,j}\log(T[i,j]!)\right. \\
  & \left. - \left[\sum_{i}\log(T'[i,\cdot]!) + \sum_{j}\log(T'[\cdot,j]!) - \sum_{i,j}\log(T'[i,j]!) \right] \right| \\
  = & \max_{a,b,c,d} \left|-\log a! - \log b! - \log c! - \log d! + \log(a-1)!  + \log(b+1)! + \log(c+1)! + \log(d-1)!\right| \\
  = & \max_{a,b,c,d} \left|-\log a + \log(b+1) + \log(c+1) - \log d \right|
 \end{align*}
 
 It is easy to see the objective function is maximized by either $a=d=1$, $b=\min(T[i_1,\cdot],T[\cdot,j_2])-1$, $c=\min(T[i_2,\cdot],T[\cdot,j_1])-1$ or $b=c=0$, $a=\min(T[i_1,\cdot],T[\cdot,j_1])$, $d=\min(T[i_2,\cdot],T[\cdot,j_2])$, which leads to $\log\left[(b+1)(c+1)\right]$ and $\log\left(ad\right)$ respectively. Next, we find the indices $i_1,i_2,j_1,j_2$ that maximizes them separately.
 
 The sensitivity is the larger from the two cases, which leads to Theorem~\ref{thm:llsensitivity}.
\end{proof}

%

\subsection{Proof of Theorem~\ref{thm:absolutediffsensitivity}}\label{subsec:absolutediffsensitivity}

\absolutediffsensitivity*
\begin{proof}\label{pf:absolutediffsensitivity}
 Suppose $T$ and $T'$ are marginal-neighbors defined in Definition~\ref{def:neighbors}. So, there are indices $i_1,i_2,j_1,j_2$ such that $T'[i_1,j_1]=T[i_1,j_1]-1$, $T'[i_1,j_2]=T[i_1,j_2]+1$, $T'[i_2,j_1]=T[i_2,j_1]+1$, $T'[i_2,j_2]=T[i_2,j_2]-1$. Also, recall that both tables have the same marginals. By Equation~\ref{eqn:diffdef} and \ref{def:sensitivity}, the sensitivity for \varchi{statistic} equals
 \begin{align*}
  & \max_{T,T'} \left|\sum_{i,j} \left|T[i,j]-\frac{T[i,\cdot]T[\cdot,j]}{n}\right| - \sum_{i,j} \left|T'[i,j]-\frac{T'[i,\cdot]T'[\cdot,j]}{n}\right|\right|
  \\
  = & \max_{T,T'} \Biggl|\left|T[i_1,j_1]-\frac{T[i_1,\cdot]T[\cdot,j_1]}{n}\right| + \left|T[i_1,j_2]-\frac{T[i_1,\cdot]T[\cdot,j_2]}{n}\right| + \left|T[i_2,j_1]-\frac{T[i_2,\cdot]T[\cdot,j_1]}{n}\right| \\
 & + \left|T[i_2,j_2]-\frac{T[i_2,\cdot]T[\cdot,j_2]}{n}\right| - \left|T'[i_1,j_1]-\frac{T'[i_1,\cdot]T'[\cdot,j_1]}{n}\right| - \left|T'[i_1,j_2]-\frac{T'[i_1,\cdot]T'[\cdot,j_2]}{n}\right|\\
& - \left|T'[i_2,j_1]-\frac{T'[i_2,\cdot]T'[\cdot,j_1]}{n}\right| - \left|T'[i_2,j_2]-\frac{T'[i_2,\cdot]T'[\cdot,j_2]}{n}\right|\Biggr|
  \\
  \leq & \left|T[i_1,j_1]-T'[i_1,j_1]\right| + \left|T[i_1,j_2]-T'[i_1,j_2]\right| + \left|T[i_2,j_1]-T'[i_2,j_1]\right| + \left|T[i_2,j_2]-T'[i_2,j_2]\right|
  \\
  = & 4
 \end{align*}
 
 That is, the sensitivity of the \varchi{statistic} based on contingency tables with fixed marginals is $4$.
\end{proof}

}
\end{document}